\newcommand{\E}{\mathbb{E}}
\renewcommand\appendix{\par
  \setcounter{section}{0}%
  \setcounter{subsection}{0}%
  \setcounter{equation}{0}%
  \setcounter{table}{0}
  \setcounter{figure}{0}
  \gdef\theequation{\@Alph\c@section.\arabic{equation}}%
  \gdef\thefigure{\@Alph\c@section.\arabic{figure}}%
  \gdef\thetable{\@Alph\c@section.\arabic{table}}%
  \gdef\thesection{\Alph{section}}%
  \@addtoreset{equation}{section}%
  \@addtoreset{table}{section}
  \@addtoreset{figure}{section}
}
\begin{document}
\begin{frontmatter}
\title{Mining frequent items in the time fading model}
\author [unile] {Massimo~Cafaro\corref{cor1}}
\ead{massimo.cafaro@unisalento.it}
\cortext[cor1]{Corresponding author}
\author [unile] {Marco Pulimeno}
\ead{marco.pulimeno@unisalento.it}
\author [unile] {Italo Epicoco}
\ead{italo.epicoco@unisalento.it}
\author [unile] {Giovanni Aloisio}
\ead{giovanni.aloisio@unisalento.it}
\address[unile]{University of Salento, Lecce, Italy}

\begin{abstract} We present FDCMSS, a new sketch--based algorithm for mining frequent items in data streams. The algorithm cleverly combines key ideas borrowed from forward decay, the Count-Min and the Space Saving algorithms. It  works in the time fading model, mining data streams according to the cash register model. We formally prove its correctness and show, through extensive experimental results, that our algorithm outperforms $\lambda$-HCount, a recently developed algorithm, with regard to speed, space used, precision attained and error committed on both synthetic and real datasets.  
\end{abstract}

\begin{keyword}
frequent items, time fading model, cash register model, sketches.
\end{keyword}

\newtheorem{thm}{Theorem}
\newtheorem{lem}[thm]{Lemma}
\newdefinition{rmk}{Remark}
\newproof{pf}{Proof}
\newtheorem{prop}[thm]{Proposition}
\newtheorem*{cor}{Corollary}
\newdefinition{defn}{Definition}
\newtheorem{conj}{Conjecture}
\newtheorem{exmp}{Example}
\newtheorem{case}{Case}

\end{frontmatter}


\section{Introduction}
\label{intro}

A data stream $\sigma$ consists of a sequence of $n$ items drawn from a universe $\mathcal{U}$. Without loss of generality, let $m$ be the number of distinct items in $\sigma$ i.e., let $\mathcal{U}=\{1,2,\ldots,m\}$. Depending on the applications, items may be numbers, IP addresses, points, graph edges etc. Owing to the huge size of $\sigma$, an algorithm in charge of processing its items is subject to the stringent requirement that no more than one pass over the data is allowed. In practice, storing the items is not a feasible option. We refer the interested reader to \cite{TCS-002}, a very good survey of streaming algorithms, for additional details and underlying reasons motivating research in this area. In this paper, we deal with mining of frequent items in a data stream. This problem has been extensively studied, and is recognized as one of the most important in the streaming algorithms literature, where it is also called, depending on the specific context, \textit{hot list analysis} \cite{Gibbons}, market basket analysis \cite{Brin} and \textit{iceberg query} \cite{Fang98computingiceberg}, \cite{Beyer99bottom-upcomputation}. 

Among the many possible applications, we recall here network traffic analysis \cite{DemaineLM02},  \cite{Estan}, \cite{Pan}, analysis of web logs \cite{Charikar}, Computational and theoretical Linguistics \cite{CICLing}.

Letting $f_i$ denote the frequency of the item $i \in \mathcal{U}$ (i.e., its number of occurrences in $\sigma$), $\textbf{f} = (f_1,\ldots,f_m)$ the frequency vector, $0 < \phi < 1$ a support threshold and $||\textbf{f}||_1$ the 1-norm of $\textbf{f}$ (which represents the total number of occurrences of all of the stream items), an approximate solution requires returning all of the items which are frequent, i.e., those items $i$ such that $f_i > \phi ||\textbf{f}||_1$ and, letting $0 < \epsilon < 1$ denote the error committed, an algorithm must not return any item $i$ such that $f_i \leq (\phi - \epsilon) ||\textbf{f}||_1$. In particular, the error is such that $\epsilon < \phi$.

Beyond the traditional distinction between deterministic and randomized algorithms, in this area algorithms for detecting frequent items are also often referred to as being either \emph{counter} or \emph{sketch} based. In counter--based algorithms, a fixed number of counters is used to keep track of stream items. Indeed, given a support threshold $0 < \phi < 1$, the number of possible frequent items is an integer belonging to the open interval $(0, 1/\phi)$. Sketch--based algorithms monitor the data stream by using a set of counters, stored in a sketch data structure, usually a bi-dimensional array. Stream items are mapped by hash functions to their corresponding cells in the sketch. Whilst counter--based algorithms are deterministic, sketch--based ones are randomized and provide a probabilistic guarantee. 

The streaming model we have described so far is called \emph{cash register} or \emph{strict turnstile} model \cite{TCS-002}, since only \emph{insertions} are allowed. On the other hand, in the more general \emph{turnstile} model, \emph{deletions} are also allowed. An advantage of sketch--based algorithms is that they can easily support deletions and can therefore work in the turnstile model; counter--based algorithms only work in the cash register model.

Among the counter based algorithms, the first sequential algorithm has been proposed by Misra and Gries \cite{Misra82}. Later, this algorithm was rediscovered independently by Demaine et al. \cite{DemaineLM02} (the so-called \emph{Frequent} algorithm) and Karp et al. \cite{Karp}. Recently developed counters--based algorithms include \emph{Sticky Sampling}  and \emph{Lossy Counting} \cite{Manku02approximatefrequency}, and \emph{Space Saving} \cite{Metwally2006}. Notable sketch--based algorithms are \emph{CountSketch} \cite{Charikar}, \emph{Group Test} \cite{Cormode-grouptest}, \emph{Count-Min} \cite{Cormode05} and \emph{hCount} \cite{Jin03}.

Regarding parallel algorithms, \cite{cafaro-tempesta} and \cite{Cafaro-Pulimeno-Tempesta} present message-passing based parallel versions of the Frequent and Space Saving algorithms. Among the algorithms for shared-memory architectures we recall here a parallel version of Frequent \cite{Zhang2013}, a parallel version of Lossy Counting \cite{Zhang2012}, and parallel versions of Space Saving \cite{Roy2012} \cite{Das2009}. Novel shared-memory parallel algorithms for frequent items were recently proposed in \cite{Tangwongsan2014}. Accelerator based algorithms for frequent items exploiting a GPU (Graphics Processing Unit) include \cite{Govindaraju2005} and \cite{Erra2012}. 

In this paper, we are concerned with the problem of detecting frequent items in a stream with the additional constraint that recent items must be weighted more than former items. The underlying assumption is that, in some applications, recent data is certainly more useful and valuable than older, stale data. Therefore, each item in the stream has an associated timestamp that will be used to determine its weight. In practice, instead of estimating frequency counts, we are required to estimate \emph{decayed counts}. Two different models have been proposed in the literature: the \emph{sliding window} and the \emph{time fading} model.

In the sliding window model \cite{Datar} \cite{TCS-002}, freshness of recent items is captured by a time window, i.e., a temporal interval of fixed size in which only the most recent $N$ items are taken into account; detection of frequent items is strictly related to those items falling in the window. The items in the stream become stale over time, since the window periodically slides forward.

The time fading model \cite{recent-freq-items} \cite{exp-decay} \cite{Chen-Mei} does not use a window sliding over time; freshness of more recent items is instead emphasized by \emph{fading} the frequency count of older items. This is achieved by using a decaying factor $0 < \lambda < 1$ to compute an item's \textit{decayed count} (also called \textit{decayed frequency}) through decay functions that assign greater weight to more recent elements. The older an item, the lower its decayed count is: in the case of exponential decay, the weight of an item occurred $n$ time units in the past is $e^{-\lambda n}$, which is an exponentially decreasing quantity. 

This paper is organized as follows. We recall in Section \ref{ideas} key definitions and concepts that will be used in the rest of the manuscript, and introduce in Section \ref{chenmei} the $\lambda$-HCount algorithm \cite{Chen-Mei}, a recently published sketch--based algorithm that detects frequent items in the time fading model. Then, we introduce in Section \ref{alg} our FDCMSS algorithm and formally prove in Sections \ref{bound} and \ref{correctness}, respectively, its error bound and correctness. Next, we compare $\lambda$-HCount to FDCMSS from a theoretical perspective, and show that our algorithm achieves its error bound using a tiny fraction of the space required by $\lambda$-HCount. Then, we provide extensive experimental results in Section \ref{results}, in which we compare again FDCMSS versus $\lambda$-HCount from a quantitative, practical perspective, and show that FDCMSS outperforms $\lambda$-HCount with regard to speed, space used, precision attained and error committed on both synthetic and real datasets. Finally, we draw our conclusions in Section \ref{conclusions}.

\section{Key ideas}
\label{ideas}

Our algorithm cleverly combines ideas borrowed from forward decay \cite{forward-decay}, the Count-Min sketch--based algorithm \cite{Cormode05} , and the Space Saving counter--based algorithm \cite{Metwally2006}. In this section, we recall preliminary definitions and key concepts. 

\begin{defn}
\label{decay-function}
Given an item $i$ with arrival time $t_i$, a \textit{decay function} returns a weight for the item. In our algorithm the weight $w(i,t)$ determined at time $t$, depends on the timestamp $t_i$ associated to the item. Decay functions satisfy the following properties: (i) $w(i,t) = 1$ when $t_i = t$ and $0 \leq w(i,t) \leq 1$ for all $t \geq t_i$; (ii) $w$ is monotone non-increasing as time increases, i.e., $t' \geq t \implies w(i, t') \leq w(i, t)$.
\end{defn}

Regarding the time fading model, related work has mostly exploited \textit{backward decay} functions, in which the weight of an item is a function of its age, $a$, where the age at time $t > t_i$ is simply $a = t-t_i$. The term backward decay stems from the aim of measuring from the current time back to the item's timestamp.

\begin{defn}
A backward decay function is defined by a positive monotone non-increasing function $f$ so that the weight of the $i$th item with arrival time $t_i$ determined at time $t$ is given by $w(i, t) = \frac{f(t-t_i)}{f(t-t)}=\frac{f(t-t_i)}{f(0)}$. The denominator in the expression normalizes the weight, so that it obeys condition $(i)$ of Definition \ref{decay-function}.
\end{defn}

Prior algorithms and applications have been using backward exponential decay functions such as $f(a) = e^{-\lambda a}$ with $\lambda > 0$. In our algorithm, we use instead a forward decay function, defined as follows. Under forward decay, the weight of an item is computed on the amount of time between the arrival of an item and a fixed point $L$, called the \textit{landmark} time, which, by convention, is some time earlier than the timestamps of all of the items. The idea is to look forward in time from the landmark to see an item, instead of looking backward from the current time.

\begin{defn}
Given a positive monotone non-decreasing function $g$, and a landmark time $L$, the forward decayed weight of an item $i$ with arrival time $t_i > L$ measured at time $t \geq t_i$ is given by $w(i, t) = \frac{g(t_i-L)}{g(t-L)}$.
\end{defn}

When $t = t_i$ the weight is 1 (condition $(i)$ of Definition \ref{decay-function}). Since $g$ is monotone non-decreasing, as $t$ increases the weight does not increase, and $0 \leq w(i, t) \leq 1$. 

It's easy to prove that backward and forward exponential decay coincide. In our algorithm, we could use the exponential decay; however, we prefer to use the polynomial forward decay function $g(n) = n^2$. It has been proved that this class of functions (i.e., $g(n) = n^\beta$) satisfies a \textit{relative decay} property, which states that for any time $t$ after a landmark time $L$, the weight for items with timestamp $\gamma t + (1-\gamma) L$ is the same. In practice, relative decay holds for forward decay functions assigning the weight of an item depending only on where the item falls as a fraction in the window defined by $L$ and $t$. As an example, a function for which relative decay holds assigns to an item arriving half way between $L$ and $t$ the same weight as $t$ increases. 

Intuitively, this property requires assigning to an item a weight which is a function of its relative age, i.e., its age as a fraction of the total time period observed. Since backward decay is only concerned with absolute age, it can not provide relative decay. An important consequence of relative decay is that it allows selecting a meaningful landmark time $L$ to choose for forward decay.

Besides its flexibility (e.g., choosing an appropriate polynomial function we can select and control a slower rate of decay with regard to an exponential), another advantage of forward decay is related to its ability to deal with out of order arrival of stream items. Indeed, forward decay does not rely on items arriving in increasing order of timestamps. On the contrary, under backward decay, handling out of order arrivals can require significant effort to accommodate.

\begin{defn}
\label{decayed-count}
The \textit{decayed count}, $C$, of a stream $\sigma$ of $n$ items is the sum of decayed weights of items: $C=\sum_{i=1}^n \frac{g(t_i-L)}{g(t-L)}$.
\end{defn}

We can now formally state the problem solved by our algorithm: \textit{approximate frequent items under forward decay}.

\begin{defn}
(Frequent items under forward decay) For each item in the input, $v$, its decayed count is given by $f_v = \sum_{v_i = v} \frac{g(t_i-L)}{g(t-L)}$. Given a threshold value $\phi$, the frequent items are all of the items $v$ satisfying $f_v > \phi C$.
\end{defn}

\begin{defn}
(Approximate frequent items under forward decay problem) Given an error bound $\epsilon$ and a threshold $\phi$, determine all of the items satisfying $f_v > \phi C$, and report no items with $f_v \leq (\phi-\epsilon) C$.
\end{defn}

Our goal is to design an algorithm solving the \textit{Approximate frequent items under forward decay} problem by providing the following ($\epsilon$, $\delta$) \textit{approximation}. 

\begin{defn} 
(($\epsilon$, $\delta$) approximation) Let $A(\sigma)$ denote the output of a randomized streaming algorithm $A$ on input $\sigma$; it is worth noting here that $A(\sigma)$ is a random variable. Moreover, let $f(\sigma)$ be the function that $A$ is supposed to compute. The algorithm $A$  ($\epsilon$, $\delta$) approximates $f$ if $\Pr[|A(\sigma) - f(\sigma)| > \epsilon] \leq \delta$.
\end{defn}

Count-Min is based on a sketch whose dimensions are derived by the input parameters $\epsilon$, the error, and $\delta$, the probability of failure. In particular, for Count-Min $d=\lceil \ln 1/\delta \rceil$ is the number of rows in the sketch and $w=\lceil e/\epsilon \rceil$ is the number of columns. Every cell in the sketch is a counter, which is updated by hash functions. By using this data structure, the algorithm solves with high probability (i.e., with probability greater than or equal to 1 - $\delta$) the \textit{frequency estimation} problem for arbitrary items. The algorithm may also be extended to solve the \textit{approximate frequent items} problem as well, by using an additional heap data structure which is updated each time a cell is updated. Since in Count-Min the frequencies stored in the cells overestimate the true frequencies, a point query for an arbitrary item simply inspects all of the $d$ cells in which the item is mapped to by the corresponding hash functions and returns the minimum of those $d$ counters. 

In addition, Count-Min allows us reusing the same underlying data structure to solve (if needed), beside frequent items, additional problems related to the same input stream (e.g., quantiles, frequency estimation, medians, etc). Moreover, the Count-Min data structure requires less space with regard to other sketch-based algorithms.

We shall prove later (see Theorem \ref{thm-correctness}) that in our algorithm, with high probability, if an item $i$ is frequent, then it appears as a majority item candidate in at least one of the $d$ sketch cells in which it falls. Therefore, in order to detect frequent items, we decided to use the Space Saving algorithm \cite{Metwally2006}. This is a counter--based algorithm, designed to solve the \textit{approximate frequent items} problem using $k \geq \lceil 1/\phi \rceil$ counters in order to determine the frequent items in the input data stream.

We could use a different counter--based algorithm (e.g., \textit{Frequent} \cite{DemaineLM02}); our choice stems from the well-known fact that Space Saving provides the greatest accuracy (precision, total and average relative error) among the counter-based algorithms \cite{Cormode} \cite{Manerikar}. Moreover, exploiting in FDCMSS the Space Saving algorithm within the sketch cells allows us avoiding the need for an additional, separate data structure to keep track of frequent items. We now briefly recall how Space Saving works.  

Let $\mathcal{S}$ denote the Space Saving stream summary data structure. Updating $\mathcal{S}$ upon arrival of an item works as shown in the pseudocode of Algorithm \ref{ss}. We denote by $c_j.i$ and $c_j.f$ respectively the item monitored by the $j$th counter of $\mathcal{S}$ and its corresponding estimated frequency. When processing an item which is already monitored by a counter, its estimated frequency is incremented by its weight $w$. When processing an item which is not already monitored by one of the available counters, there are two possibilities. If a counter is available, it will be in charge of monitoring the item and its estimated frequency is set to its  weight $w$.  Otherwise, if all of the counters are already occupied (their frequencies are different from zero), the counter storing the item with minimum frequency is incremented by its weight $w$. Then the monitored item is evicted from the counter and replaced by the new item. This happens since an item which is not monitored can not have a frequency greater than the minimal frequency. The complexity of the Space Saving update procedure is $O(1)$.

Let $\sigma$ be the input stream and $\mathcal{S}$ the stream summary data structure at the end of the sequential Space Saving algorithm's execution. Moreover, let $\sum\limits_{{c_j} \in S} {{c_j}}.f$ be the sum of the counters in $\mathcal{S}$, $f_v$ the exact frequency of an item $v$, $\hat{f}_v$ its estimated frequency, $\textbf{f} = (f_1,\ldots,f_m)$ the frequency vector, $\hat{f}^{min}$ the minimum frequency in $\mathcal{S}$ and $\hat{\varepsilon}_v$ the estimated error of item $v$, i.e. an over-estimation of the difference between the estimated and exact frequency. 

Finally, denote by $\mathcal{S_{\phi}}$ the set of counters in $\mathcal{S}$ which are monitoring items ($\left\vert \mathcal{S_{\phi}} \right\vert \leq k$). It is worth noting here that $\hat{f}^{min} = 0$ when  $\left\vert{\mathcal{S_{\phi}}}\right\vert < k$. The following relations hold (as shown in \cite{Metwally2006}):

\begin{equation}
\label{ss1}
\sum\limits_{{c_j} \in S} {{c_j}}.f = ||\textbf{f}||_1,
\end{equation}

\begin{equation}
\label{ss2}
\hat{f}_v - \hat{f}^{min} \leq\hat{f}_v - \hat{\varepsilon}_v \leq f_v \leq \hat{f}_v,  \qquad v \in \mathcal{S_{\phi}},
\end{equation}

\begin{equation}
\label{ss3}
f_v  \leq \hat{f}^{min}, \qquad \hspace{27mm} v \notin \mathcal{S_{\phi}},
\end{equation}

\begin{equation}
\label{ss4}
\hat{f}^{min}  \leq \left\lfloor\frac{||\textbf{f}||_1}{k}\right\rfloor.
\end{equation}

Therefore, it holds that

\begin{equation}
\label{ss5}
\hat{f}_v - f_v \leq \hat{f}^{min} \leq \left\lfloor\frac{||\textbf{f}||_1}{k}\right\rfloor, \qquad \hspace{7mm} v \in \mathcal{U}.
\end{equation}

\begin{algorithm}
\begin{algorithmic}[1]
\Require $\mathcal{S}$, a stream summary; $j$, an item; $w$, the weight of item $j$
\Ensure a stream summary $\mathcal{S}$ containing frequent items
\Procedure {SpaceSavingUpdate}{$\mathcal{S}, j, w$}
\If{$j$ is monitored}
		\State let $c_l$ be the counter monitoring $j$ 
		\State $c_l.f \leftarrow c_l.f + w$
	\Else
		\If{there is a counter $c_r$ which is not monitoring any item}
			\State $c_r.i \leftarrow j$ 
			\State $c_r.f \leftarrow w$
		\Else
			\State let $c_s$ be the counter monitoring the item with least hits
			\State $c_s.i \leftarrow j$ 
			\State $c_s.f \leftarrow c_s.f + w$
		\EndIf
\EndIf
\EndProcedure
\caption{Space Saving update}
\label{ss}
\end{algorithmic}
\end{algorithm}

To recap, we end this Section summarizing the reasons for combining forward decay, the Count-Min and the Space Saving algorithms:

\begin{itemize}
  \item forward exponential decay coincides with backward exponential decay, so that we can still use exponentials;
 \item forward decay allows for greater flexibility (e.g., choosing a polynomial function we can select a different, slower rate of decay with regard to an exponential);
 \item forward decay allows easily dealing with out of order arrival of stream items;
 \item forward decay satisfies a relative decay property, which states that for any time $t$ after a landmark time $L$, the weight for items with timestamp $\gamma t + (1-\gamma) L$ is the same;
 \item the Count-Min algorithm allows us reusing the same underlying data structure to solve (if needed), beside frequent items, additional problems related to the same input stream (e.g., quantiles, frequency estimation, medians, etc);
 \item the Count-Min data structure requires less space with regard to other sketch-based algorithms;
 \item we use Space Saving to detect frequent items within the sketch cells;
 \item the Space Saving algorithm provides the greatest accuracy (precision, total and average relative error) among the counter-based algorithms;
 \item the Space Saving algorithm allows us avoiding the need for an additional, separate data structure to keep track of frequent items.\\

\end{itemize}

\section{The $\lambda$-HCount algorithm}
\label{chenmei}

We now introduce $\lambda$-HCount \cite{Chen-Mei}, a recently published sketch--based algorithm that detects frequent items in the time fading model by using a backward decay exponential function. The $\lambda$-HCount algorithm, shown in pseudocode as Algorithm \ref{l_count}, requires a two dimensional sketch $D$ of size $r \times m$ to store decayed weights and timestamps, and a doubly linked list $F$ to store frequent items candidates, accessed through a hash function. $\lambda$-HCount is based on the use of $r$ \textit{FNV} hash functions $h_i(x), i=1,\ldots,r$ which uniformly and independently map an item to an integer in the interval $[1,m]$; the algorithm requires a support threshold $s$ and an error bound $\epsilon$. The occurrence of an item at time $t_a$ is weighted in time by a factor $\lambda^{t-t_a}$, where $\lambda$ represents the fading factor ($0 <\lambda < 1$). The decayed count of an item is given by the sum of its decayed weight in time. The decayed count of the stream $\sigma$, as proved by the authors, is bounded by $\frac{1}{1 - \lambda}$.

Each entry $D[i, h_i(x)]$ in the sketch $D$ stores $D[i, h_i (x)].s$ which is the decayed count of item $x$ (also called \textit{density}), and $D[i, h_i (x)].t$ which is the last time the value of $D[i, h_i (x)].s$ was updated. Whenever an item $x$ arrives, its decayed count is updated in all of the $r$ cells $D[i, h_i(x)], i=1,\ldots,r$. Then, the algorithm computes ${\hat{f}_x=min_{1\leq i\leq r} \{D[i, h_i(x)].s\}}$ as the estimated decayed count of $x$; if $\hat{f}_x$ is greater than or equal to the threshold $\frac{s-\epsilon}{1-\lambda}$ than the tuple $\{x, t_x, \hat{f}_x\}$ is created or updated in the linked list $F$. The authors proved that the list $F$ requires at most $\frac{r}{s-\epsilon}$ entries to store all of the items with decayed count greater than $\frac{s-\epsilon}{1-\lambda}$. Basically, $\lambda$-HCount can be considered a variant of Count-Min, designed to support frequent items detection in the time fading model.

The authors of $\lambda$-HCount proved that with a sketch requiring $\frac{e (1-\lambda)\ln{(-\frac{M}{\ln{p}})}}{\epsilon^2}$ space, where $M$ is the number of distinct items and $p$ is the success probability, and an additional data structure requiring at most $\frac{r}{s - \epsilon}$ space, where $r$ is the number of hash functions used, their algorithm is able to estimate the frequent items decayed count with an error less than $\frac{\epsilon}{1-\lambda}$ with probability greater than $p$. The algorithm's analysis also shows that all of the items whose exact decayed count exceeds $\frac{s}{1-\lambda}$ will be output (there are no false negatives) and no items whose decayed count is less than $\frac{s-\epsilon}{1-\lambda}$ will be output.

The worst case complexity of $\lambda$-HCount depends on the time required for updating the sketch $D$ and the linked list $F$. When an item is received from the stream, the algorithm computes $r$ hash functions and updates $r$ entries in $D$; the linked list $F$ is also updated accordingly. Since the linked list $F$ is accessed through a hash function, and its update is done in constant time, overall the worst case complexity of per item update is $O(1)$. Therefore, the whole algorithm has worst case complexity $O(r)$, i.e., $O\big(\ln(- \frac{M}{{\ln p}})\big)$. The space complexity is given by the memory required by the sketch $D$ and  the linked list $F$. Overall, the worst case space complexity is $O(r \cdot m) = O\big(\frac{\ln ( - \frac{M}{{\ln p}})}{\epsilon^2} \big)$.

\begin{algorithm}
\begin{algorithmic}[1]
\Require $\lambda$: fading factor; $\epsilon$: error bound; $s$: support threshold; $x$: received item; $t$: arrival time;
\Ensure update of sketch $D$ and linked list $F$ related to item $x$
\Procedure {$\lambda$-HCount Update}{$\lambda, \epsilon, s, x, t$}
\State $\hat{f}_x \leftarrow \infty$
\For {$k=1$ to $r$}
	\State $y \leftarrow h_k(x)$
	\State $D[k, y].s \leftarrow D[k, y].s \cdot \lambda^{t-D[k, y].t} + 1$
	\State $D[k, y].t \leftarrow t$
	\If{$D[k, y].s < \hat{f}_x$}
		\State $\hat{f}_x \leftarrow D[k, y].s$
	\EndIf
\EndFor
\If {$\hat{f}_x > \frac{s-\epsilon}{1-\lambda}$}
	\If {$x \in F$}
		\State change its entry to $\{x, \hat{f}_x, t\}$ and move it to the tail of the queue $F$
	\Else
		\If {the queue $F$ is full}
			\State delete the item at the head of the queue $F$
		\EndIf	
		\State insert $\{x, \hat{f}_x, t\}$ at the tail of the queue $F$
	\EndIf
\EndIf
\EndProcedure
\caption{$\lambda$-HCount algorithm: the update phase}
\label{l_count}
\end{algorithmic}
\end{algorithm}

\section{The FDCMSS algorithm}
\label{alg}

In this section, we introduce our algorithm, distinguishing three different phases: initialization, stream processing, and querying. 

The algorithm's initialization, shown in pseudo-code as Algorithm \ref{init}, requires as input parameters $\epsilon$, the error; $\delta$, the probability of failure; $\phi$, the support threshold; and $t_{init}$, a timestamp. Initialization returns a sketch $D$. The procedure starts deriving $d=\lceil \ln 1/\delta \rceil$, the number of rows in the sketch and $w=\lceil \frac{e}{2\epsilon} \rceil$, the number of columns in the sketch. We shall explain the reason why we set $w$ to this value in Section \ref{bound}.

Then, for each of the $d \times w$ cells available in the sketch $D$ we allocate a data structure $\mathcal{S}$ with two Space Saving counters $c_1$ and $c_2$. Given a counter $c_j, j=1,2$, we denote by $c_j.i$ and $c_j.f$ respectively the counter's item and estimated decayed count. Finally, we set the support threshold to $\phi$, select $d$ pairwise independent hash functions $h_1,\ldots,h_d:[m] \rightarrow [w]$ mapping $m$ distinct items into $w$ cells, initialize the \textit{count} variable, representing the total decayed count of all of the items in the stream (see Definition \ref{decayed-count}) to zero and the \textit{L} variable (our landmark time) to $t_{init}$, which is a timestamp less than or equal to all of the items' timestamps. The worst case complexity of the initialization procedure is $O(\frac{1}{\epsilon} \ln \frac{1}{\delta})$.

Updating the sketch upon arrival of a stream item $i$ with timestamp $t_i$, shown in pseudo-code as Algorithm \ref{process}, requires computing $x$, which is the forward decayed weight of the item, and incrementing \textit{count} by $x$. Note that when computing $x$, we do not normalize  the result (dividing by $g(t-L)$ where $t$ is the query time, since we do not know in advance the query time); normalization occurs instead at query time. Then, we update the $d$ cells in which the item is mapped to by the corresponding hash functions by using the Space Saving item update procedure. The worst case complexity of the update procedure is $O(\ln \frac{1}{\delta})$.

Finally, in order to retrieve the frequent items, a query can be posed when needed. Let $t$ be the query time. The query, shown in pseudo-code as Algorithm \ref{query}, initializes $R$, an empty set, and then it inspects each of the $d \times w$ cells in the sketch $D$. For a given cell, we determine $c_m$, the counter in the data structure $\mathcal{S}$ with maximum decayed count. We normalize the decayed count stored in $c_m$ dividing by $g(t-L)$, and then compare this quantity with $\phi  \frac{count}{g(t-L)}$. If the normalized decayed count is greater, we pose a point query for the item $c_m.i$, shown in pseudo-code as Algorithm \ref{estimate}. If $p$, the returned value, is greater than $\phi \frac{count}{g(t-L)}$, then we insert in $R$ the pair $(c_m.i,p)$.

The point query for an item $j$ returns its estimated decayed count. After initializing the \textit{answer} variable to infinity, we inspect each of the $d$ cells in which the item is mapped to by the corresponding hash functions, to determine the minimum decayed count of the item. In each cell, if the item is stored by one of the Space Saving counters, we set \textit{answer} to the minimum between \textit{answer} and the corresponding counter's decayed count. Otherwise (none of the two counters monitors the item $j$), we set \textit{answer} to the minimum between \textit{answer} and the minimum decayed count stored in the counters. We return the normalized \textit{answer}, dividing by $g(t-L)$.

From the previous discussion it is clear that our algorithm also solves the \textit{decayed count estimation} problem for arbitrary items. Indeed, given an item, it suffices to pose a point query for that item. Finally, since the worst case complexity of a point query is $O(\ln \frac{1}{\delta})$, the worst case complexity of the query procedure is $O(\frac{1}{\epsilon} (\ln \frac{1}{\delta})^2)$. We shall argue in section \ref{cmp-alg} that a query only takes a few milliseconds and therefore its complexity is, in practice, negligible.

\begin{algorithm}
\begin{algorithmic}[1]
\Require $\epsilon$, error; $\delta$, probability of failure; $\phi$, threshold;
\Ensure a sketch $D[1 \ldots d][1 \dots w]$ properly initialized
\Procedure {initialize}{$\epsilon, \delta, \phi, t_{init}$}
\State $d \leftarrow \lceil \ln 1/\delta \rceil$
\State $w \leftarrow \lceil \frac{e}{2\epsilon} \rceil$
\For{$i=1$ to $d$}
	\For{$j=1$ to $w$}
		\Comment{allocate a data structure $\mathcal{S}$ with two counters $c_1$, $c_2$ for $D[i][j]$}
		\State $D[i][j] \leftarrow \mathcal{S}$
	\EndFor
\EndFor
\State Set support threshold to $\phi$
\State Choose $d$ pairwise independent hash functions $h_1,\ldots,h_d:[m] \rightarrow [w]$
\State $count \leftarrow 0$
\State $L \leftarrow t_{init}$ \Comment{$t_{init}$ must be $\leq$ of all of the items' timestamps}
\EndProcedure
\caption{Initialize}
\label{init}
\end{algorithmic}
\end{algorithm}

\begin{algorithm}
\begin{algorithmic}[1]
\Require $i$, an item; $t_i$, timestamp of item $i$;
\Ensure update of sketch related to item $i$
\Procedure {process}{$i, t_i$}
\Comment{compute the decayed weight of item $i$ and update the sketch}
\State $x \leftarrow g(t_i-L)$
\State $count \leftarrow count + x$
\For{$j=1$ to $d$}
	\State $\mathcal{S} \leftarrow D[j][h_j(i)]$
	\State \Call{SpaceSavingUpdate}{$\mathcal{S}, i, x$}
\EndFor
\EndProcedure
\caption{Process}
\label{process}
\end{algorithmic}
\end{algorithm}

\begin{algorithm}
\begin{algorithmic}[1]
\Require $t$, query time
\Ensure set of frequent items
\Procedure {query}{$t$}
\State $R=\emptyset$
\For{$i=1$ to $d$}
	\For{$j=1$ to $w$}
		\State $\mathcal{S} \leftarrow D[i][j]$
		\State let $c_1$ and $c_2$ be the counters in $\mathcal{S}$, and $c_m$ the counter with maximum decayed count
		\State $c_m \leftarrow \Call{argmax}{c_1, c_2}$
		\If{$\frac{c_m.f}{g(t-L)}  > \phi \frac{count}{g(t-L)}$}	
			\State $p \leftarrow \Call{PointEstimate}{c_m.i, t}$
			\If{$p > \phi \frac{count}{g(t-L)}$}
				\State $R \leftarrow R \cup \{(c_m.i, p)\}$
			\EndIf
		\EndIf
	\EndFor
\EndFor
\State \Return $R$
\EndProcedure
\caption{Query}
\label{query}
\end{algorithmic}
\end{algorithm}

\begin{algorithm}
\begin{algorithmic}[1]
\Require $j$, an item; $t$, query time
\Ensure estimation of item $j$ decayed count;
\Procedure {pointestimate}{$j, t$}
\State $answer \leftarrow \infty$
\For{$i=1$ to $d$}
	\State $\mathcal{S} \leftarrow D[i][h_i(j)]$
	\State let $c_1$ and $c_2$ be the counters in $\mathcal{S}$
	\If{$j == c_1.i$}
		\State $answer \leftarrow \Call{min}{answer, c_1.f}$
	\Else
		\If{$j == c_2.i$}
			\State $answer \leftarrow \Call{min}{answer, c_2.f}$
		\Else
			\State $m \leftarrow \Call{min}{c_1.f, c_2.f}$
			\State $answer \leftarrow \Call{min}{answer, m}$
		\EndIf
	\EndIf
\EndFor
\State \Return $\frac{answer}{g(t-L)}$
\EndProcedure
\caption{PointEstimate}
\label{estimate}
\end{algorithmic}
\end{algorithm}

\subsection{Example}

Here, we provide a concise example of the FDCMSS update and query algorithms. Of course, illustrating a really significant example will require using a sketch of suitable dimensions, but, in the interest of clarity and to avoid wasting space, we use a sketch $D$ with $d = 2$ rows and $w = 5$ columns. Setting $d$ and $w$ implicitly determine $\delta$ and $\epsilon$. The other parameters are $\rho = 1.1$, $\phi = 0.025$ and $\lambda = 0.999$. Table \ref{fdcmss-update-example-a} shows the state of the sketch after processing 1000 items coming from a stream derived from the universe $\mathcal{U} = \{x \in \mathbb{N}: 1 \leq x \leq 20 \}$.

Table \ref{fdcmss-update-example-b} depicts the sketch state after updating the data structure upon arrival of item 6 with timestamp 1001. The hash functions associated to each row map the item 6 respectively to column 5 in the first row and to column 3 in the second row of the sketch. Since item 6 was already monitored by a counter in sketch cells $D[1][5]$ and $D[2][3]$ (see Table \ref{fdcmss-update-example-a}), Space Saving increments these counters by adding the corresponding weight which is computed as $g(t - L) = (1/ \lambda)^{t - L} = 2.72$ where $t = 1001$ is the timestamp of item 6 and $L = 0$ is the landmark time. 

Upon arrival of item 5 with timestamp 1002 (see Table \ref{fdcmss-update-example-c}), the hash functions associated to each row map the item respectively to column 1 in the first row and to column 5 in the second row of the sketch. However, this time the item is not monitored and both counters are full in the sketch cells $D[1][1]$ and $D[2][5]$. Therefore, the counters monitoring the item with minimum weight are selected and updated by Space Saving evicting the monitored items, substituting them with item 5 and incrementing them by the corresponding weight. 

 We now show how to query the sketch to retrieve the frequent item candidates. We query the sketch after updating it upon arrival of item 5 and before processing the next item. When querying the sketch data structure, all of the involved weights are normalized dividing them by $(1/ \lambda)^{t_q - L} = 2.725$, where $t_q = 1003$ is the query's timestamp. 
 
 The estimated normalized decayed count is equal at this point to  $C = 632.671$; since $\phi = 0.025$, the threshold required for an item to be considered frequent is given by $\phi C = 15.817$.
 
 We scan each of the sketch cells, determine the monitored item with maximum weight and, if the normalized weight of this item exceeds the threshold, we execute a point query for this item, which returns the normalized minimum weight associated to the item in the sketch. Then, we compare  the normalized minimum weight to the threshold again, and consider the item frequent if it exceeds the threshold.
 
  For instance, we determine that item 2 is frequent as follows. Since item 2 is the majority item in $D[1][1]$ and its normalized weight $203.78$ is greater than the threshold, we execute a point query which produces the best frequency estimate for the item. In our case the point query for item 2 returns $198.08$, and item 2 therefore is selected as candidate frequent item because its normalized minimum weight is still greater than the threshold. Table \ref{frequent-items} includes all of the frequent item candidates as returned by the query algorithm. None of the other items is selected as candidate frequent item. For instance, item 14 is not selected as candidate frequent item since its normalized minimum weight is $13.35 < 15.817$.

\begin{table}[htp]

\centering
\caption{Sketch state after 1000 updates}\label{fdcmss-update-example-a}

	\begin{tabular}{ l | r r | r r | r r | r r | r r }
		& Item & Weight & Item & Weight & Item & Weight & Item & Weight & Item & Weight \\
		\hline
		$\mathbf{c_1}$ & 2 & 555.33 & 3 & 262.06 & 12 & 36.55 & 10 & 52.27 & 6 & 98.22 \\
		$\mathbf{c_2}$ & 4 & 537.23 & 14 & 103.54 & 17 & 14.78 & 18 & 21.88 & 11 & 36.76 \\
		\hline 
		$\mathbf{c_1}$  & 4 & 172.20 & 14 & 36.40 & 6 & 125.75 & 2 & 539.78 & 3 & 263.07\\
		$\mathbf{c_2}$  & 12 & 109.28 & 16 & 35.78 & 7 & 125.15 & 8 & 117.33 & 10 & 193.90  \\
		\hline
	\end{tabular}
 
    \bigskip
           
    \caption{Sketch state after arrival of item 6 with timestamp 1001}\label{fdcmss-update-example-b}

		\begin{tabular}{ l | r r | r r | r r | r r | r r }
			& Item & Weight & Item & Weight & Item & Weight & Item & Weight & Item & Weight\\
			\hline
			$\mathbf{c_1}$ & 2 & 555.33 & 3 & 262.06 & 12 & 36.55 & 10 & 52.27 & 6 & 100.94\\ 
			$\mathbf{c_2}$ & 4 & 537.23 & 14 & 103.54 & 17 & 14.78 & 18 & 21.88 & 11 & 36.76\\
			\hline 
			$\mathbf{c_1}$  & 4 & 172.20 & 14 & 36.40 & 6 & 128.47 & 2 & 539.78 & 3 & 263.07\\ 
			$\mathbf{c_2}$  & 12 & 109.28 & 16 & 35.78 & 7 & 125.15 & 8 & 117.33 & 10 & 193.90\\
			\hline
		\end{tabular}

	\bigskip
	    
	\caption{Sketch state after arrival of item 5 with timestamp 1002}\label{fdcmss-update-example-c}

			\begin{tabular}{ l | r r | r r | r r | r r | r r }
			& Item & Weight & Item & Weight & Item & Weight & Item & Weight & Item & Weight \\
			\hline
			$\mathbf{c_1}$ & 2 & 555.33 & 3 & 262.06 & 12 & 36.55 & 10 & 52.27 & 6 & 100.94\\
			$\mathbf{c_2}$ & 5 & 539.95 & 14 & 103.54 & 17 & 14.78 & 18 & 21.88 & 11 & 36.76 \\ 
			\hline
			$\mathbf{c_1}$  & 4 & 172.20 & 14 & 36.40 & 6 & 128.47 & 2 & 539.78 & 3 & 263.07\\
			$\mathbf{c_2}$  & 12 & 109.28 & 16 & 35.78 & 7 & 125.15 & 8 & 117.33 & 5 & 196.62 \\
			\hline
		\end{tabular}

\end{table}

\begin{table}
\renewcommand{\arraystretch}{1.3}
 \caption{Frequent Item Candidates}
      \label{frequent-items}
	\centering
    \begin{tabular}{| c |  c | }
    \hline
      Item & Normalized Decayed Weight \\ \hline
      \hline
    2 & 198.08 \\ \hline
    3 & 96.16 \\ \hline
    5 & 72.15 \\ \hline
    6 & 37.04  \\ \hline
    \end{tabular}
\end{table}

\section{Error bound}
\label{bound}

In this section, we formally prove the error bound of our algorithm. Let $f_i$ and $\hat{f}_i$ be respectively the exact and estimated decayed count of item $i$. We denote by $\mathcal{S}$ a Space Saving summary, which is a data structure present in each one of the cells available in the sketch $D$. Let $C$ be the decayed count of all of the items in the stream $\sigma$ (see Definition \ref{decayed-count}). The main result of this section is the following theorem.

\begin{thm}
\label{error-bound}
$\forall k \in [m]$, $\hat{f}_k$ estimates the exact decayed count $f_k$ with error less than $\epsilon C$ and probability greater than $1-\delta$.
\end{thm}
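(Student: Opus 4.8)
The plan is to mimic the standard Count-Min analysis, but with two twists: (i) each cell now holds a two-counter Space Saving summary rather than a single additive counter, and (ii) we work with decayed weights $g(t_i-L)$ in place of unit increments, and everything is understood after normalization by $g(t-L)$ at query time. First I would fix an item $k$ and a row $j\in[d]$, and let $N_j(k)$ denote the (normalized) total decayed weight of all stream items $i\neq k$ that collide with $k$ under $h_j$, i.e. $h_j(i)=h_j(k)$. The quantity $g(t_i-L)/g(t-L)$ is exactly the forward-decayed weight contributed by item $i$, so by Definition~\ref{decayed-count} the expected value of $N_j(k)$ is at most $\frac{C-f_k}{w}\le \frac{C}{w}$, using pairwise independence of the hash functions so that $\Pr[h_j(i)=h_j(k)]=1/w$ and linearity of expectation over the decayed weights.

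Next I would argue that within the cell $(j,h_j(k))$, the Space Saving summary returns for item $k$ a value $\hat f_k^{(j)}$ satisfying $f_k\le \hat f_k^{(j)}\le f_k + (\text{over-estimation error in that cell})$, and that this over-estimation error is bounded by the Space Saving guarantee applied to the sub-stream routed to that cell. Concretely, the sub-stream hashed into cell $(j,h_j(k))$ has total decayed weight $f_k+N_j(k)$, and Space Saving with $k=2$ counters guarantees (by \eqref{ss5}) that the estimate for any item exceeds its true cell-frequency by at most $\lfloor (f_k+N_j(k))/2\rfloor \le (f_k+N_j(k))/2$. Combined with PointEstimate taking the minimum over the $d$ rows, we get $\hat f_k \le f_k + \min_{j\in[d]} \tfrac{N_j(k)}{2}$ (after absorbing the $f_k/2$ term — here care is needed, see below), and lower bound $\hat f_k\ge f_k$ is immediate since decayed weights and counter contributions are non-negative and every occurrence of $k$ in a cell is counted.

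The probabilistic step is then a Markov bound: since $\E[N_j(k)]\le C/w$ and $w=\lceil e/(2\epsilon)\rceil$, we have $\E[N_j(k)/2]\le \epsilon C/e$, so $\Pr[N_j(k)/2 > \epsilon C]\le 1/e$ by Markov's inequality. The $d$ rows use independent hash functions, so $\Pr[\min_j N_j(k)/2 > \epsilon C]\le e^{-d}\le \delta$ by the choice $d=\lceil\ln 1/\delta\rceil$, giving $\hat f_k - f_k < \epsilon C$ with probability at least $1-\delta$. This explains the factor-of-two in $w=\lceil e/(2\epsilon)\rceil$: it exactly compensates the halving coming from the two Space Saving counters.

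The main obstacle is the bookkeeping around the extra $f_k/2$ term produced by applying the Space Saving bound \eqref{ss5} naively to the cell sub-stream (the $\lfloor\|\mathbf f\|_1/k\rfloor$ bound there includes $f_k$ itself). The careful argument is that $k$'s own occurrences cannot inflate the estimate of $k$ beyond $f_k$: if $k$ is monitored in the cell, \eqref{ss2} gives $\hat f_k^{(j)}\le f_k + (\text{error from other items only})$, and if $k$ is not monitored then by \eqref{ss3} its true cell-frequency is already below $\hat f^{min}$, which is itself bounded by the weight of the \emph{non-$k$} items over $2$. Either way the over-estimation is governed by $N_j(k)/2$, not $(f_k+N_j(k))/2$. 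I would isolate this as a short lemma about a single Space Saving cell before running the Markov/union-bound argument above.
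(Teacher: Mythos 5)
Your proof is correct and follows the same probabilistic skeleton as the paper's (collision mass in the cell $D[i][h_i(k)]$, Space Saving over-estimation bounded by half that mass, Markov's inequality giving $e^{-1}$ per row, independence across the $d$ rows giving $e^{-d}=\delta$), but it differs in one substantive step: you decompose the cell's load as $f_k+N_j(k)$ and bound the over-estimation by $N_j(k)/2$, the colliding mass \emph{excluding} $k$ itself, whereas the paper bounds it by $S_{i,j}/2$ via eq.~(\ref{ss5}) and then computes $\E[S_{i,j}]=C/w$. Your version is the more careful one: for the relevant cell $j=h_i(k)$ the indicator $I_{i,j,k}$ equals $1$ deterministically, so the expectation of the full cell load is $f_k+(C-f_k)/w$, not $C/w$; the clean bound $\E[\,\cdot\,]\le C/w$ is only available once $k$'s own mass has been split off, which is exactly what your $N_j(k)$ accomplishes. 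The price is that you need the per-cell lemma $\hat f^{(j)}_k-f_k\le N_j(k)/2$, which is not literally one of eqs.~(\ref{ss1})--(\ref{ss5}); your sketch of it via eqs.~(\ref{ss2}) and (\ref{ss3}) is in the right direction but incomplete as stated. The lemma is true, and the missing ingredient is this: writing $f_k^{*}$ and $N^{*}$ for the weights of $k$ and of the other items processed before $k$ last claimed a counter, the inherited error is $\hat\varepsilon_k\le (f_k^{*}+N^{*})/2$ while the true error is $\hat\varepsilon_k-f_k^{*}\le (N^{*}-f_k^{*})/2\le N_j(k)/2$ (with the analogous computation, via eq.~(\ref{ss3}), when $k$ is unmonitored at query time). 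With that lemma in place your argument is airtight; the paper's argument buys brevity by citing eq.~(\ref{ss5}) directly, at the cost of an expectation computation that silently drops the deterministic $f_k$ term.
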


\begin{proof}
The algorithm is based on the use of $d$ pairwise independent hash functions $h_1,\ldots,h_d:[m] \rightarrow [w]$. In the following, we shall use related indicator random variables $I_{i,j,k}$ defined as follows:

\begin{equation}
{I_{i,j,k}}{\text{ = }}\left\{ {\begin{array}{*{20}{c}}
		1&{ {h_i}(k) = j} \\ 
		0&{{\text{otherwise}}} 
		\end{array}} \right..
\end{equation}
	
In other words, the indicator random variables $I_{i,j,k}$ are equal to one when the item $k \in [m]$ falls in the $D[i][j]$ cell. By pairwise independence of the hash functions, it follows that the expected value of $I_{i,j,k}$ is

\begin{equation}
\E[I_{i,j,k}] = \Pr[h_i(k)=j] = \frac{1}{w} =\frac{2\epsilon}{e}.
\end{equation}

Denote by $S_{i,j}$ the sum of the decayed counts of the items falling in the cell $D[i][j]$, i.e., 

\begin{equation}
S_{i,j} = \sum_{k=1}^m{f_k I_{i,j,k}}.
\end{equation}

Our algorithm processes each item falling in the same cell by using Space Saving with two counters. Now, we bound the error committed by Space Saving. By eq. (\ref{ss5}), the error of an item monitored by a Space Saving counter is bounded by the sum of the decayed counts of the items which fall in the same cell divided by the number of counters. Therefore, denoting by  $\hat{f}_{i,k}$ the estimated decayed count of item $k$ returned by Space Saving for the $D[i][h_i(k)]$ cell and by  $f_k$ its exact decayed count, it holds that

\begin{equation}
\hat{f}_{i,k} - f_k \leq \frac{S_{i,j}}{2}.
\end{equation}

By linearity of expectation, 

\begin{equation}
\E \left[ S_{i,j} \right] = \E\left[\sum_{k=1}^m{f_k I_{i,j,k}}\right] = \sum_{k=1}^m{f_k  \E[I_{i,j,k}]} = \frac{C}{w} = \frac{2 \epsilon C}{e},
\label{linearity-expectation}
\end{equation}

with $C$ equal to the total decayed count, as defined in Definition \ref{decayed-count}. It holds that, on average,

\begin{equation}
\E [\hat{f}_{i,k} - f_k] \leq \frac{1}{2} \E\left [S_{i,j} \right ] = \frac{2 \epsilon C}{2 e} = \frac{\epsilon C}{e}.
\end{equation}

Since the error $\hat{f}_{i,k} - f_k$ is nonnegative, we can apply the Markov inequality, obtaining

\begin{equation}
	\Pr[\hat{f}_{i,k} - f_k \geq \epsilon C] \leq \frac{\E[\hat{f}_{i,k} - f_k]}{\epsilon C}
	\leq \frac{\epsilon C}{e\epsilon C}=e^{-1}.
	\end{equation}
	
It follows that, when considering all of the $d$ cells in which items are mapped to by the independent hash functions ${h_i, i=1,\ldots,d}$, and recalling that the estimated decayed count $\hat{f}_k = \min{\{\hat{f}_{1,k}, \dots, \hat{f}_{d,k}\}}$, we get

\begin{equation}
	\begin{aligned}
	\Pr[\hat{f}_k - f_k \geq \epsilon C] &= 
	\Pr [\min{\{\hat{f}_{1,k}, \dots, \hat{f}_{d,k}\}} - f_k \geq \epsilon C] \\ 
	&= \Pr \left[\bigwedge_{i=1}^d (\hat{f}_{i,k} - f_k \geq \epsilon C) \right] \\
	&= \prod_{i=1}^{d} \Pr [\hat{f}_{i,k} - f_k \geq \epsilon C] \\
	&\leq e^{-d} = \delta.
	\end{aligned}
	\end{equation}

Consequently, 

\begin{equation}
\Pr[\hat{f}_k - f_k < \epsilon C] > 1-\delta.
\end{equation}
\end{proof}

\begin{rmk}
Here, we explain the reason why we set $w = \frac{e}{2 \epsilon}$ when we initialize the sketch data structure. In Theorem \ref{error-bound}, letting $w$ unspecified in equation (\ref{linearity-expectation}), we can easily derive the relationship between $d$ and $w$ at the end of the Theorem: $d = \frac{{\log \frac{1}{\delta }}}{{\log 2w\varepsilon}}$. Setting $w = \frac{e}{2 \epsilon}$ we minimize the total space occupied by the sketch data structure. Indeed, the total space is $wd = w\frac{{\log \frac{1}{\delta }}}{{\log 2w\varepsilon }}$; minimizing analytically $wd$ with regard to $w$, we obtain the minimum for $w = \frac{e}{2 \epsilon}$. 
\end{rmk}

\section {Correctness}
\label{correctness}
We are going to formally prove the correctness of our algorithm. Before discussing its correctness, it is worth noting here that given a cell in the sketch $D$, the sum of the decayed counts stored by its two Space Saving counters is equal to the value that the Count-Min algorithm would store in that cell. However, Count-Min relies on an external heap data structure to keep track of frequent items. By using a data structure $\mathcal{S}$, with just two Space Saving counters per cell, we are able to dynamically maintain frequent items. Therefore, by using $O(dw)$ space as in Count-Min, our algorithm can solve both the \textit{decayed count estimation} and the \textit{approximate frequent items under forward decay} problems.
 
Here, we show that, with high probability, if an item is frequent, our algorithm will detect it. Indeed, given a cell, by using a data structure $\mathcal{S}$ with two counters, we are able to detect the \textit{majority item candidate} with regard to the sub-stream of items falling in that cell. Letting $S_{i,j}$ denote the total decayed count of the items falling in the cell $D[i][j]$, the majority item is, if it exists, the item whose decayed count is greater than $\frac{S_{i,j}}{2}$. The corresponding majority item candidate in the cell is the item monitored by the Space Saving counter whose estimated decayed count is maximum.

The main result of this section is the following theorem.

\begin{thm}
\label{thm-correctness}
If an item $i$ is frequent, then it appears as a majority item candidate in at least one of the $d$ cells in which it falls with probability greater than or equal to $1 - (\frac{1}{2 \phi w})^d$. 
\end{thm}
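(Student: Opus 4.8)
The plan is to fix a frequent item $i$, so $f_i>\phi C$, and for each row $r\in\{1,\dots,d\}$ bound the probability of the failure event $B_r$ that $i$ is \emph{not} the majority item candidate in the cell $D[r][h_r(i)]$ into which it is hashed; the $d$ events are then combined using the independence of the hash functions $h_1,\dots,h_d$.

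First I would reduce $B_r$ to a statement about the sub-stream of items colliding with $i$ in row $r$. Using the indicator variables $I_{r,j,k}$ of Theorem \ref{error-bound}, set $Z_r=\sum_{k\neq i}f_k\,I_{r,h_r(i),k}$, the combined decayed count of the \emph{other} items that land in $i$'s cell, so that the cell total is $S_{r,h_r(i)}=f_i+Z_r$. The key structural claim is: \emph{if $f_i>\tfrac12 S_{r,h_r(i)}$, i.e. $Z_r<f_i$, then $i$ surfaces as the majority item candidate of that cell.} This is exactly where the Space Saving relations (\ref{ss1})--(\ref{ss5}) are invoked with $k=2$ counters restricted to the cell's sub-stream: (\ref{ss4}) gives $\hat f^{min}\le S_{r,h_r(i)}/2<f_i$, so by (\ref{ss3}) the item $i$ must be monitored; the two counters sum to $S_{r,h_r(i)}$ by (\ref{ss1}), and since $\hat f_i\ge f_i>S_{r,h_r(i)}/2$, the other counter holds strictly less, so $i$ owns the counter of maximum estimated decayed count. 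Consequently $B_r\subseteq\{Z_r\ge f_i\}$, and because $f_i>\phi C$ this event forces the noise, and hence the whole cell, to be large.

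Next I would bound $\Pr[B_r]$. By linearity of expectation and pairwise independence of $h_r$, exactly as in equation (\ref{linearity-expectation}), the expected decayed mass landing in any one cell is $C/w$; feeding this together with $f_i>\phi C$ into Markov's inequality bounds $\Pr[B_r]$, and carrying the constants as in the error-bound proof yields the per-row estimate $\Pr[B_r]\le\frac{1}{2\phi w}$ claimed by the theorem. Finally, $B_1,\dots,B_d$ are functions of the mutually independent hash functions $h_1,\dots,h_d$ and are therefore independent, so $\Pr\!\big[\bigcap_{r=1}^{d}B_r\big]=\prod_{r=1}^{d}\Pr[B_r]\le\big(\tfrac{1}{2\phi w}\big)^{d}$; passing to the complementary event gives that a frequent item $i$ appears as a majority item candidate in at least one of its $d$ cells with probability at least $1-\big(\tfrac{1}{2\phi w}\big)^{d}$.

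The step I expect to be the real obstacle is the structural claim: one has to show, working only with two Space Saving counters per cell, that a genuine majority element of the cell's sub-stream is not merely retained but actually rises to the counter of maximum estimated decayed count, and that the weighted (decayed) updates do not invalidate (\ref{ss1})--(\ref{ss5}). Once that is settled, the probabilistic part is a routine Markov-plus-independence argument; the only care needed there is to apply the expected-cell-mass bound to the cell that actually contains $i$, which is permissible because the collisions of the remaining items with $i$ are pairwise independent of where $i$ itself is mapped.
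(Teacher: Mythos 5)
Your proposal is correct and follows essentially the same route as the paper: failure in a given row is reduced, via the Space Saving bounds $f_k \le \hat f_{i,k}$ and $f^{(i,j)}_{min} \le S_{i,j}/2$, to the event that the cell's total decayed mass exceeds $2\phi C$; Markov's inequality with $\E[S_{i,j}] = C/w$ then gives the per-row bound $\frac{1}{2\phi w}$, and independence of the $d$ hash functions yields the product $(\frac{1}{2\phi w})^d$. The ``structural claim'' you single out as the likely obstacle is handled in the paper exactly as you sketch, in contrapositive form through the chain $\phi C < f_k \le \hat f_{i,k} \le f^{(i,j)}_{min} \le S_{i,j}/2$, so no additional work is needed there.
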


\begin{proof}
Let $k$ be a frequent item, $j = h_i(k)$, $D[i][j]$ one of the $d$ cells in which the item $k$ is mapped to by the corresponding hash function and $f^{(i,j)}_{min}$ the minimum of the two Space Saving counters available in the data structure $\mathcal{S}$ monitoring the items falling in $D[i][j]$. Moreover, let $\hat{f}_{i,k}$ be the estimated decayed count of item $k$ returned by Space Saving for the $D[i][h_i(k)]$ cell.

Our algorithm will not output the item $k$ (and therefore will not be correct) iff for all of the $d$ cells in which the item $k$ is mapped to by the corresponding hash functions, the item $k$ is not reported as a majority item candidate, so that

\begin{equation}
\label{failure_eq}
\hat{f}_{i,k} \leq f^{(i,j)}_{min}, \forall i=1,\ldots,d.
\end{equation}

Indeed, by construction, our algorithm during a frequent item query only checks if an item is frequent when the item is reported in the cell as a majority item candidate. By assumption, since the item $k$ is frequent, its decayed count is  $f_k > \phi C$; since $f_k \leq \hat{f}_{i,k} \forall i=1,\ldots,d$, it holds that

\begin{equation}
\phi C < f_k \leq \hat{f}_{i,k} \forall i=1,\ldots,d.
\end{equation}

Let $S_{i,j}$ denote the total decayed count of the items falling in the cell $D[i][j]$. By eq. (\ref{ss5}), for the minimum decayed count in a cell $D[i][j]$ it holds that $f^{(i,j)}_{min} \leq \frac{S_{i,j}}{2}$.

We must now determine the probability that the event described in eq. (\ref{failure_eq}) occurs. This is the probability of failing to correctly recognize a frequent item. Taking into account that $\phi C <\hat{f}_{i,k}$ and $f^{(i,j)}_{min} \leq \frac{S_{i,j}}{2}$, it holds that

\begin{equation}
\label{failure_final_eq}
S_{i,j} > 2\phi C.
\end{equation}

By the previous argument, it follows that

\begin{equation}
\label{prob}
\Pr[\hat{f}_{i,k} \leq f^{(i,j)}_{min}] < \Pr[S_{i,j} > 2\phi C].
\end{equation}

Reasoning as in section \ref{bound}, $\E[S_{i,j}] = \frac{C}{w}$ with $C$ equal to the total decayed count as defined in Definition \ref{decayed-count}. Using the Markov inequality we can bound the probability of failure (i.e., the probability of the item not being reported as a majority item candidate) in a single cell $D[i][j]$ taking into account eq. (\ref{failure_final_eq}) and (\ref{prob}) 

\begin{equation}
\label{single-cell-failure}
\Pr[\hat{f}_{i,k} \leq f^{(i,j)}_{min}] < \Pr[S_{i,j} > 2\phi C] \leq \frac{\E[S_{i,j}]}{2\phi C} = \frac{C}{2 \phi w C} = \frac{1}{2 \phi w}.
\end{equation}

Therefore, we fail to identify a frequent item $k$ when in all of the $d$ cells $D[i][j], i=1,\ldots,d$ in which the frequent item falls it is not reported as a majority item candidate in the data structure $\mathcal{S}$. We now estimate the corresponding probability. By eq. (\ref{single-cell-failure}), the probability of failure is 

\begin{equation}
\Pr\left[\bigwedge_{i=1}^{d} (\hat{f}_{i,k} \leq f^{(i,j)}_{min})\right]  < \Pr\left[\bigwedge_{i=1}^{d} (S_{i,j} > 2\phi C)\right] \leq \left(\frac{1}{2 \phi w}\right)^d.
\end{equation}

Consequently, we succeed with probability greater than or equal to $1 - \left(\frac{1}{2 \phi w}\right)^d$. 

\end{proof}

\begin{rmk}
Since we proved in Theorem \ref{thm-correctness} that, if an item is frequent, then it appears as a majority item candidate in at least one of the cells in the sketch with high probability, it follows by the Space Saving design that two Space Saving counters are necessary and sufficient to determine this majority item candidate. Using more than two counters is useless for our purposes, and only wastes precious space.
\end{rmk}

\section{Theoretical comparison of $FDCMSS$ and $\lambda$-HCount}
\label{cmp-alg}

We provide here a thorough comparison of FDCMSS and $\lambda$-HCount. Both algorithms use a sketch data structure; FDCMSS is based on Count-Min and $\lambda$-HCount on the \emph{hCount} algorithm \cite{Jin03}. However, $\lambda$-HCount relies on a backward decay exponential function, whilst FDCMSS can use either an exponential function or any other forward decay function. In particular, the use of a polynomial function allows more flexibility with regard to time fading. Indeed, using an exponential function the time fades faster, whilst with a polynomial function the times fades more slowly. Another advantage of using forward decay is that FDCMSS can easily deal with out of order arrival of stream items \cite{forward-decay}, something requiring significant effort to accommodate for $\lambda$-HCount.

Another difference is the use in $\lambda$-HCount of a dedicated data structure $F$ to keep track of frequent items. Instead, our algorithm FDCMSS does not require additional space beyond its sketch data structure. Even though a query for frequent items requires in the worst case $O(\frac{1}{\epsilon} (\ln \frac{1}{\delta})^2)$, a query execution only takes a few milliseconds and therefore its complexity is, in practice, negligible (this has been verified in all of the experimental tests carried out). Moreover, queries are posed to FDCMSS from time to time whilst updates happen with high frequency, especially in high-speed streams. Therefore, FDCMSS has been designed to provide very fast updates, besides accurate results.

Theoretically, the main drawback of $\lambda$-HCount lies in the huge amount of space required to attain its error bound. In particular, the $\lambda$-HCount sketch requires $\frac{e (1-\lambda)\ln{(-\frac{M}{\ln{p}})}}{\epsilon^2}$  cells, where $M$ is the number of distinct items and $p$ is the success probability. Without taking into account the additional data structure $F$ requiring $\frac{r}{s - \epsilon}$ entries, where $r$ is the number of hash functions used, for a sketch using a total of $r \times m$ cells, we have:

\begin{equation}
r \times m =  \left \lceil \frac{e (1-\lambda)\ln{(-\frac{M}{\ln{p}})}} {\epsilon^2} \right \rceil,
\end{equation}

whilst, in our case, FDCMSS requires only

\begin{equation}
d \times w = \left \lceil \ln{\frac{1}{\delta}} \frac{e}{2\epsilon} \right \rceil.
\end{equation}

As an example, fixing $\lambda = 0.99$, $M = 1048575$, $p = 0.96$ and $\epsilon = 0.001$, a total of $r \times m = 463779$ cells are required by $\lambda$-HCount. In order to achieve the same success probability $p$, in FDCMSS we need to set $\delta = 0.04$ and $\epsilon = 0.001$, so that $p = 1 - \delta = 0.96$ and a total of just $d \times w = 4375$ cells is required instead by our algorithm.  

Let us now consider only the sketch size, without taking into account the $\lambda$-HCount additional data structure $F$ required for tracking the frequent items. Each cell in the $\lambda$-HCount sketch stores a decayed count (a double, 8 bytes) and a timestamp (a long, 8 bytes), whilst a FDCMSS cell stores two Space Saving counters. A counter keeps track of an item (an unsigned int, 4 bytes) and its decayed count (a double, 8 bytes). Therefore, $\lambda$-HCount requires 16 bytes per cell and FDCMSS 24 bytes per cell.

Figures \ref{sketch-size-p} and \ref{sketch-size-epsilon} plot, using a logarithmic scale, the sketch size in kilobytes required respectively as a function of the success probability $p$ and as a function of $\epsilon$ for both algorithms. Here, we have fixed for the first plot the values $\lambda = 0.99$, $M = 1048575$, $\epsilon = 0.001$, and let $p$ vary from 0.7 to 0.99. Similarly, for the second plot we have fixed $\lambda = 0.99$, $M = 1048575$, $p = 0.96$, and let $\epsilon$ vary from 0.001 to 0.01.

\begin{figure}[hbt]
  \centering
  \begin{tabular}{cc}
     \subfloat[Sketch size required as a function of $p$]{
           \includegraphics[scale=0.8]{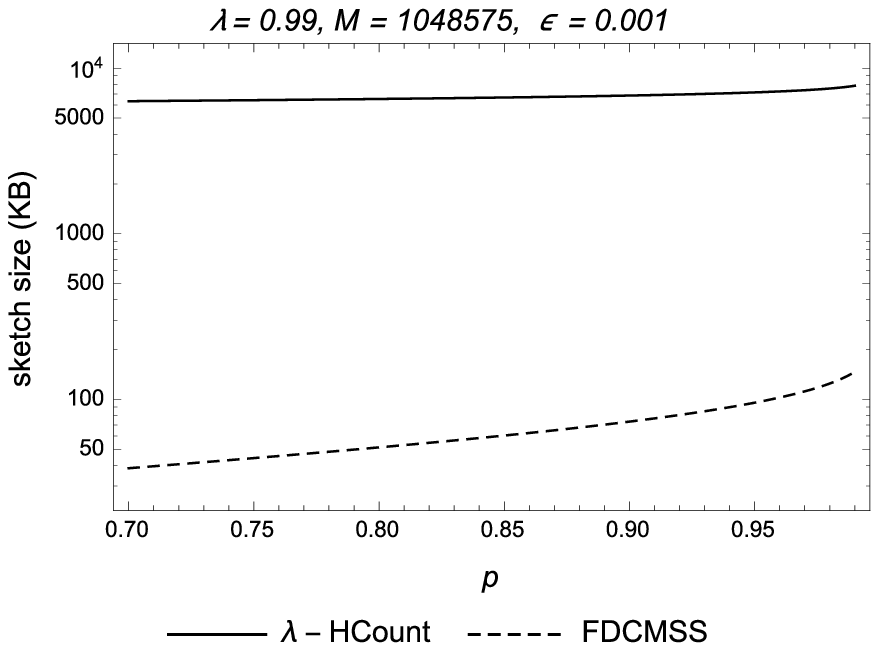}
           \label{sketch-size-p}
        } &
        
      \subfloat[Sketch size required as a function of $\epsilon$]{
           \includegraphics[scale=0.8]{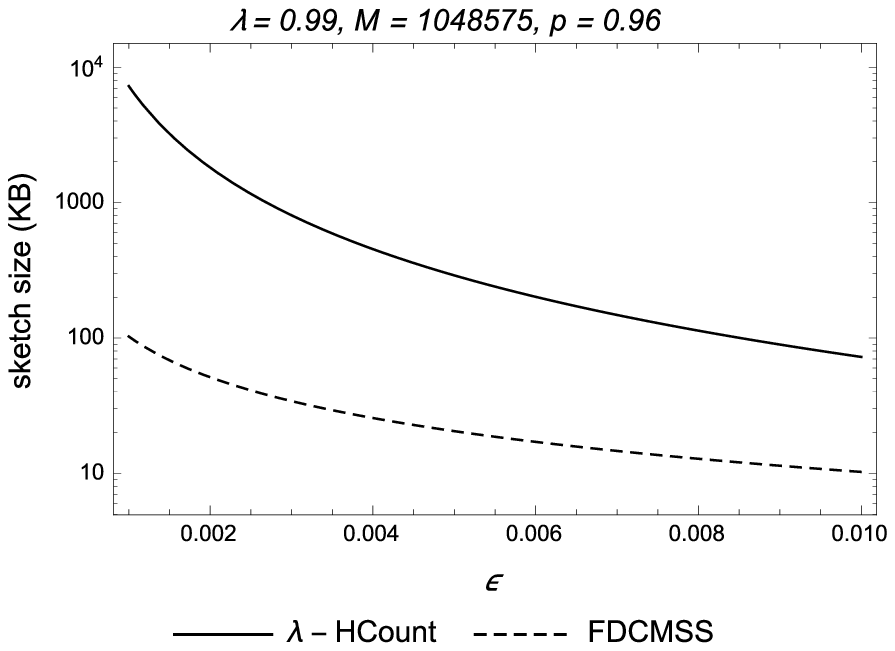}
           \label{sketch-size-epsilon}
        }

 \end{tabular}
 
 \caption{Sketch sizes in Kilobytes} 
 \label{sketch-space}
\end{figure}

It is immediate verifying that, in order to attain its theoretical error bound, $\lambda$-HCount requires a huge amount of space. On the contrary, FDCMSS achieves its bound using a tiny fraction of the space required by $\lambda$-HCount.

\section{Experimental results}
\label{results}

In this section, we report experimental results on both synthetic and real datasets. Here, we thoroughly compare our algorithm against $\lambda$-HCount \cite{Chen-Mei} with regard to the performances on synthetic and real datasets.

\subsection{Synthetic datasets} 

We have implemented FDCMSS and $\lambda$-HCount in C++. FDCMSS uses the \textit{xxhash} hash function, and $\lambda$-HCount the \textit{FNV} hash function as stated by its authors in \cite{Chen-Mei}. The code has been compiled using the clang c++ compiler v7.0 on Mac OS X v10.11.2 with the following flags: -Os -std=c++11. We recall here that, on Mac OS X, the optimization flag -Os provides better optimization than the -O3 flag and is the standard for building the release build of an application. The tests have been carried out on a machine equipped wth 16 GB of RAM and a 3.2 GHz quad-core Intel Core i5 processor with 6 MB of cache level 3.

Regarding synthetic datasets, the input distribution used in our experiments is the Zipf distribution. For each different value of $n$ (number of items), $\phi$ (support threshold), $\rho$ (skew of distribution) and sketch size, the algorithms have been run 20 times using a different seed for the pseudo-random number generator associated to the distribution (using the same seeds in the corresponding executions of different algorithms). For each input distribution generated, the results have been averaged over all of the runs. The input elements are 32 bits unsigned integers. 

In order to provide a fair comparison of the algorithms, we make sure that the decayed frequencies computed by both algorithms are equal. To this end, we use in FDCMSS the same exponential decay function $g(t - L) = (\frac{1}{\lambda})^{t - L}$ (in which the landmark time is $L = 0$) and the same $\lambda = 0.99$ parameter as in $\lambda$-HCount. This way, for a given input stream, the decayed counts of the input items and the set of frequent items computed by an exact algorithm are the same for both algorithms. However, it is worth reporting here that the use of a forward decay polynomial function provides even better results for FDCMSS in terms of speed, measured as \textit{updates per millisecond}.

We compare our algorithm against $\lambda$-HCount taking into account the following standard metrics: \textit{recall}, \textit{precision}, \textit{mean absolute error}, \textit{max absolute error}, \textit{96-th percentile absolute error}, \textit{updates per millisecond}. For each metric, we plot the values (mean and confidence intervals) obtained varying $n$, $\phi$, $\rho$ and the sketch size. In particular, for each plot we shall always compare the algorithms by using exactly the same sketch size in kilobytes. We will not take into account the additional space required by $\lambda$-HCount for its $F$ data structure.

Recall, shown in Figure \ref{recall}, is the total number of true frequent items reported over the number of true frequent items given by an exact algorithm. Therefore, an algorithm is correct iff its recall is equal to 1 (or 100\%). We note here that $\lambda$-HCount recall is always 1 since the algorithm inserts an item in the $F$ data structure only when that item has been detected as frequent. FDCMSS may instead provide a recall value lower than 1 (this follows immediately by Theorem \ref{thm-correctness}; however, the probability of failing to detect a frequent item may be made arbitrarily small and close to zero by the user, setting appropriately the input parameters $\delta$ and $\epsilon$). This happened in our experiments in only one case, when using a very small sketch size of only 6 KB. However, the measured recall value is 99.58\% even in this extreme case. 

Precision, shown in Figure \ref{precision}, is defined as the total number of true frequent items reported over the total number of items reported. As such, this metric quantifies the number of false positives outputted by an algorithm. It follows that, from this point of view, an algorithm's precision should ideally be 1 (or 100\%). The precision achieved by FDCMSS is 1 in the majority of the experiments, and our algorithm outperformed $\lambda$-HCount in particular when varying $n$ and the sketch size, whilst providing anyway higher precision when varying $\phi$ and $\rho$. 

Denoting with $f_i$ the true decayed count of item $i$ and with $\hat{f_i}$ the corresponding decayed count computed by an algorithm, then the absolute error is, by definition, the difference $\left| f_i - \hat{f_i} \right|$. Denoting with $M$ the number of distinct items in a stream (i.e., the stream domain size), the \textit{mean absolute error} is then defined as $\sum\limits_{i = 1}^M {\frac{{\left| f_i - \hat{f_i} \right|}}{M}}$, i.e., the mean of the absolute errors. Similarly, the \textit{max absolute error} is defined as ${\max _i}\left| {{f_i} - {{\hat f}_i}} \right|$. Finally, consider the $M$ absolute errors in ascending sorted order: the 96-th percentile is the absolute error found in the position corresponding to 96\% of $M$.

Figures \ref{mae}, \ref{maxae} and \ref{percentile} show, respectively, the mean, max and 96-th percentile of absolute errors. FDCMSS outperforms $\lambda$-HCount, in particular with regard to the experiments in which we vary $n$ and $\phi$. 

Regarding the skew, when the value of $\rho$ increases, a reduction in the error committed and a corresponding increase in the accuracy is expected, owing to the Zipfian distribution. Indeed, the number of frequent items depends on the value of $\rho$ as follows. Increasing $\rho$ decreases the number of distinct items, i.e., we have less items but with higher frequency. Vice versa, decreasing $\rho$, we have more items but with lower frequency. It follows that, considering the Count-Min sketch, the number of collisions decreases because there are less distinct numbers and the sketch cells can therefore better estimate the items' frequencies.

Let us now discuss the actual performances of the algorithms in terms of updates per millisecond, where an update is defined respectively as in Algorithm \ref {l_count} for $\lambda$-HCount and Algorithm \ref{process} for FDCMSS. As shown in Figure \ref{updates}, FDCMSS outperforms $\lambda$-HCount in all of the experiments carried out.

\begin{figure}[hbt]
  \centering
  \begin{tabular}{cccc}
     \subfloat[varying $n$]{
           \includegraphics[scale=0.36]{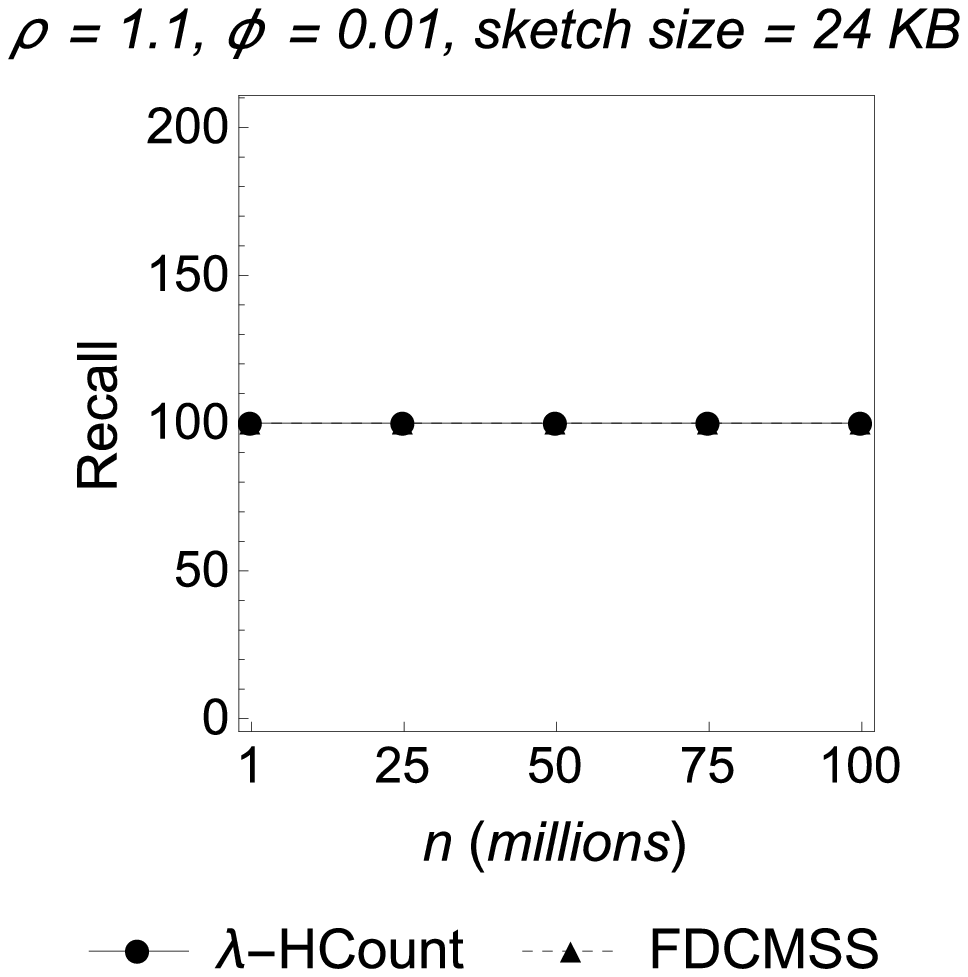}
           \label{ni-recall}
        } &
        
      \subfloat[varying $\phi$]{
           \includegraphics[scale=0.36]{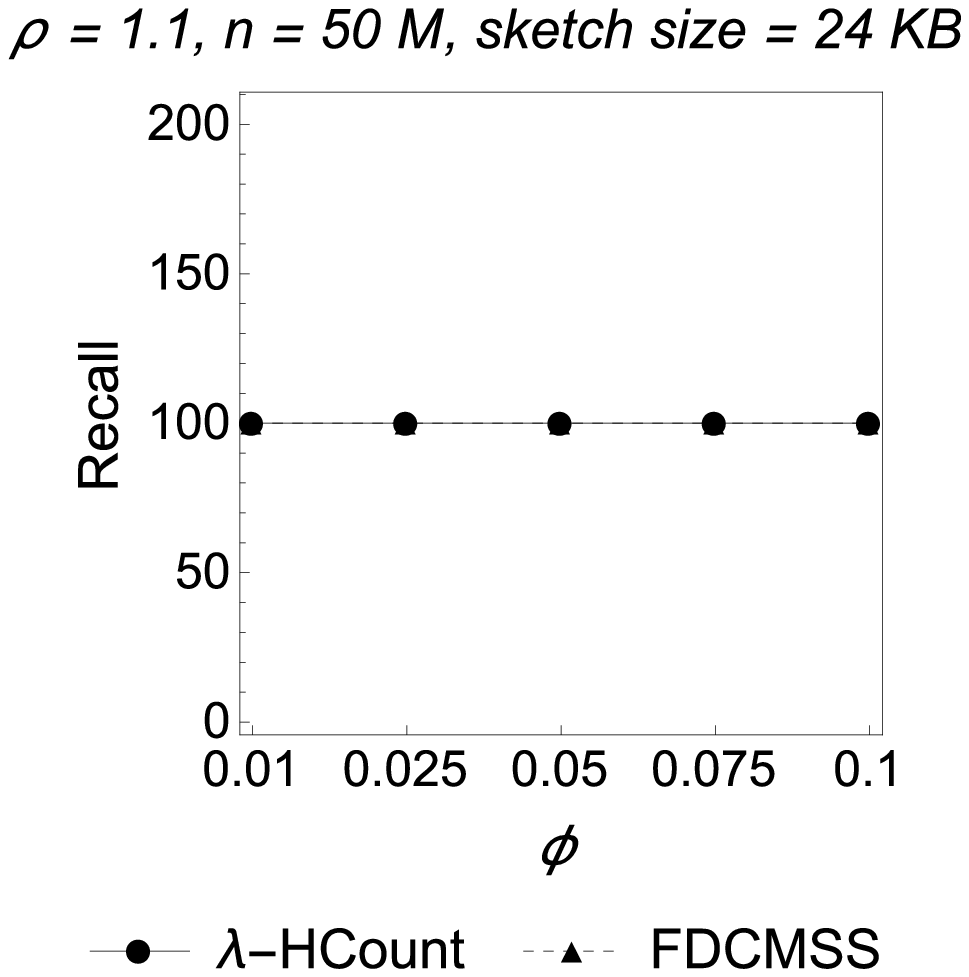}
           \label{phi-recall}
        } & 

      \subfloat[varying $\rho$]{
           \includegraphics[scale=0.36]{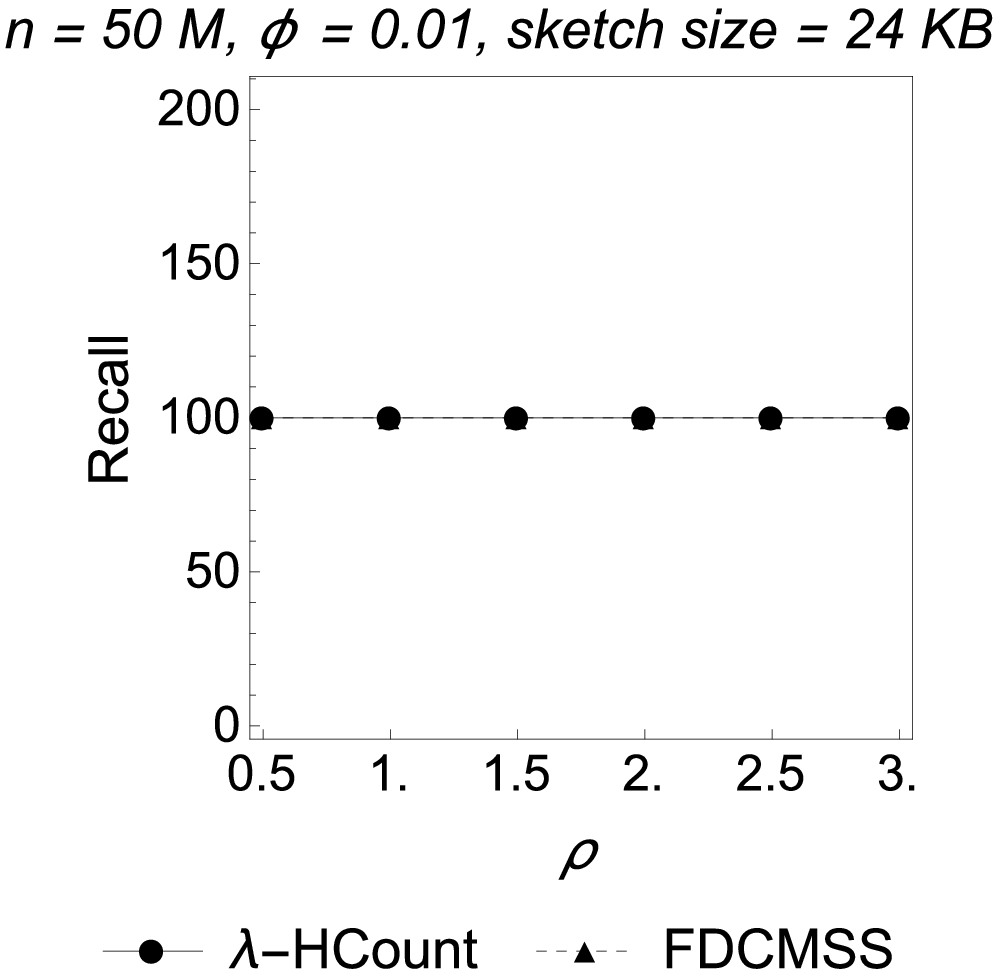}
           \label{sk-recall}
        } &
        
      \subfloat[varying the sketch size]{
           \includegraphics[scale=0.36]{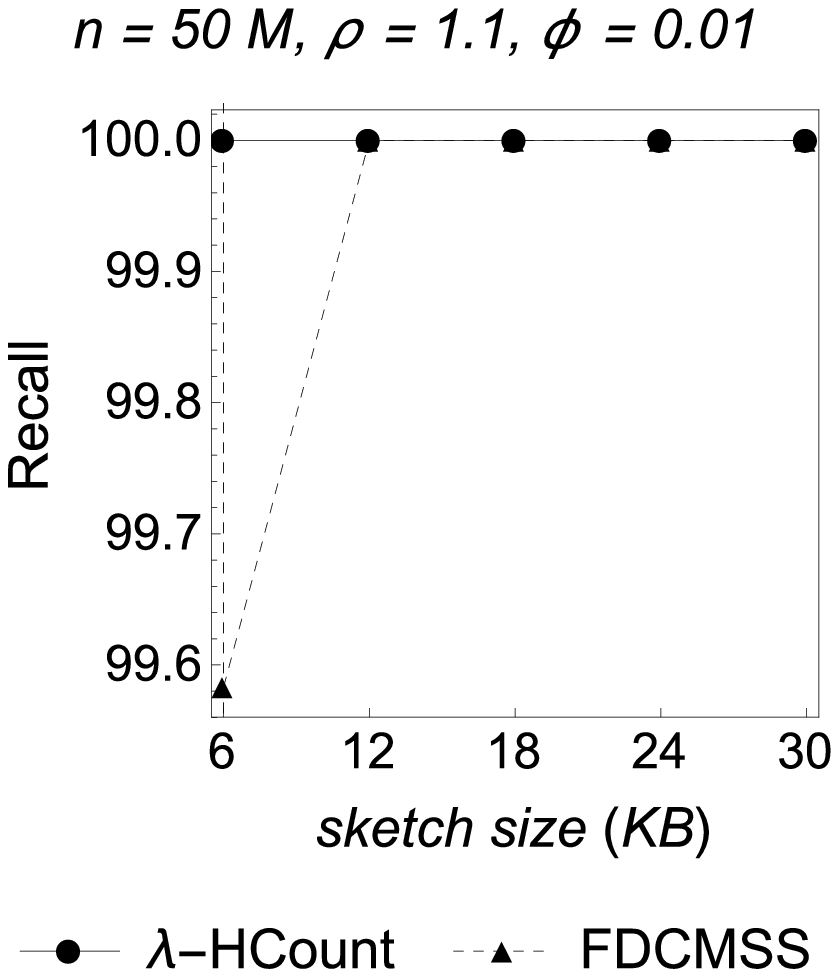}
           \label{sw-recall}
        } 

\end{tabular}
 
 \caption{Recall (mean and confidence interval)} 
 \label{recall}
\end{figure}

\begin{figure}[hbt]
  \centering
  \begin{tabular}{cccc}
     \subfloat[varying $n$]{
           \includegraphics[scale=0.36]{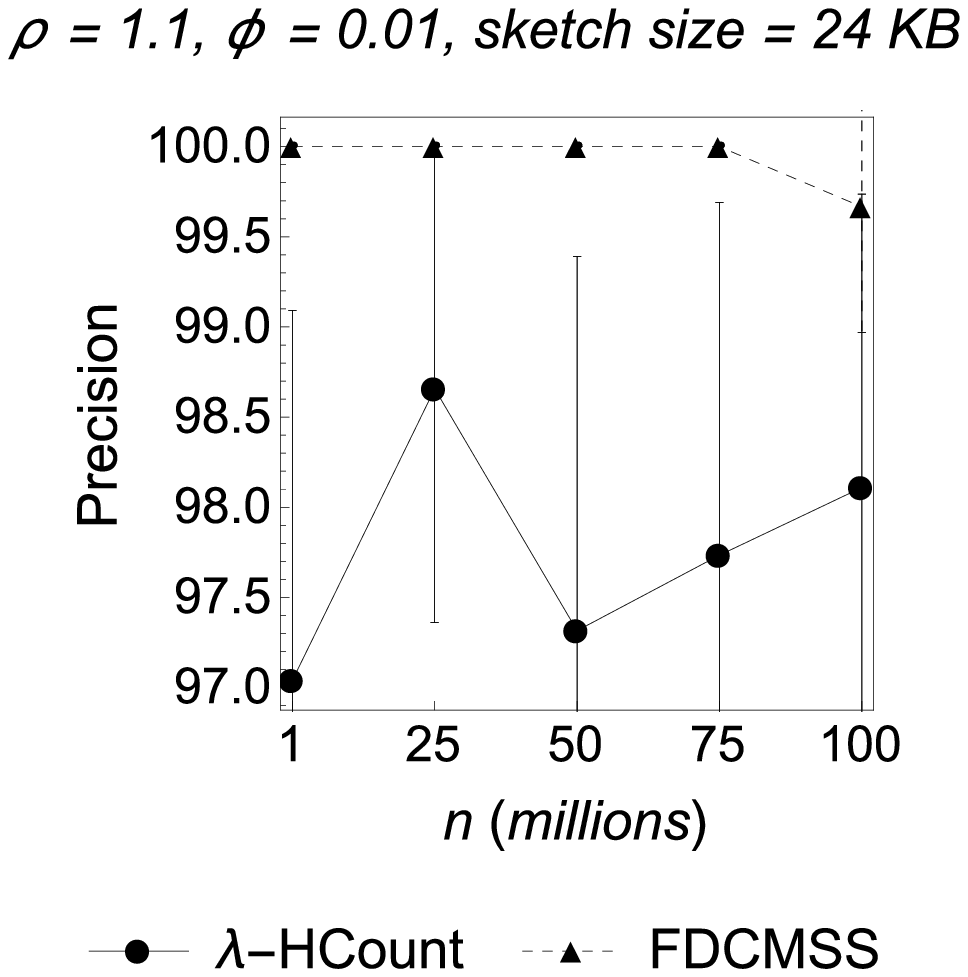}
           \label{ni-prec}
        } &
        
      \subfloat[varying $\phi$]{
           \includegraphics[scale=0.36]{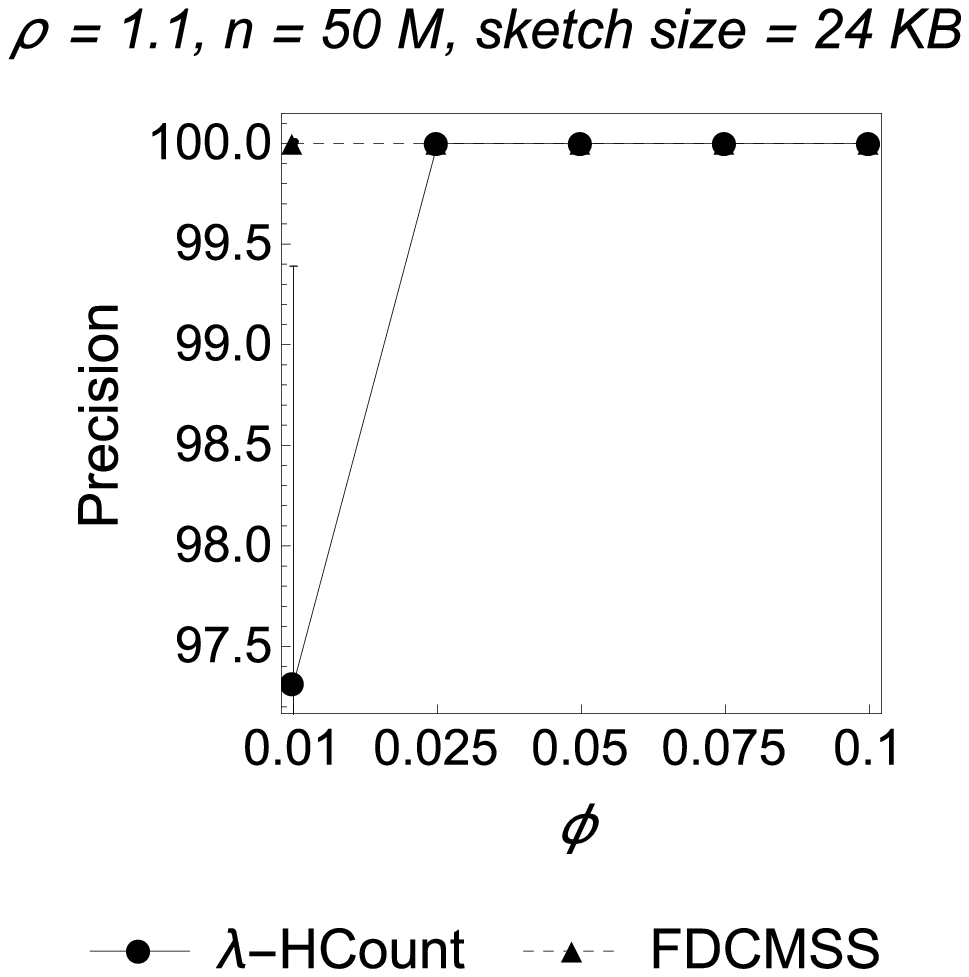}
           \label{phi-prec}
        } & 

      \subfloat[varying $\rho$]{
           \includegraphics[scale=0.36]{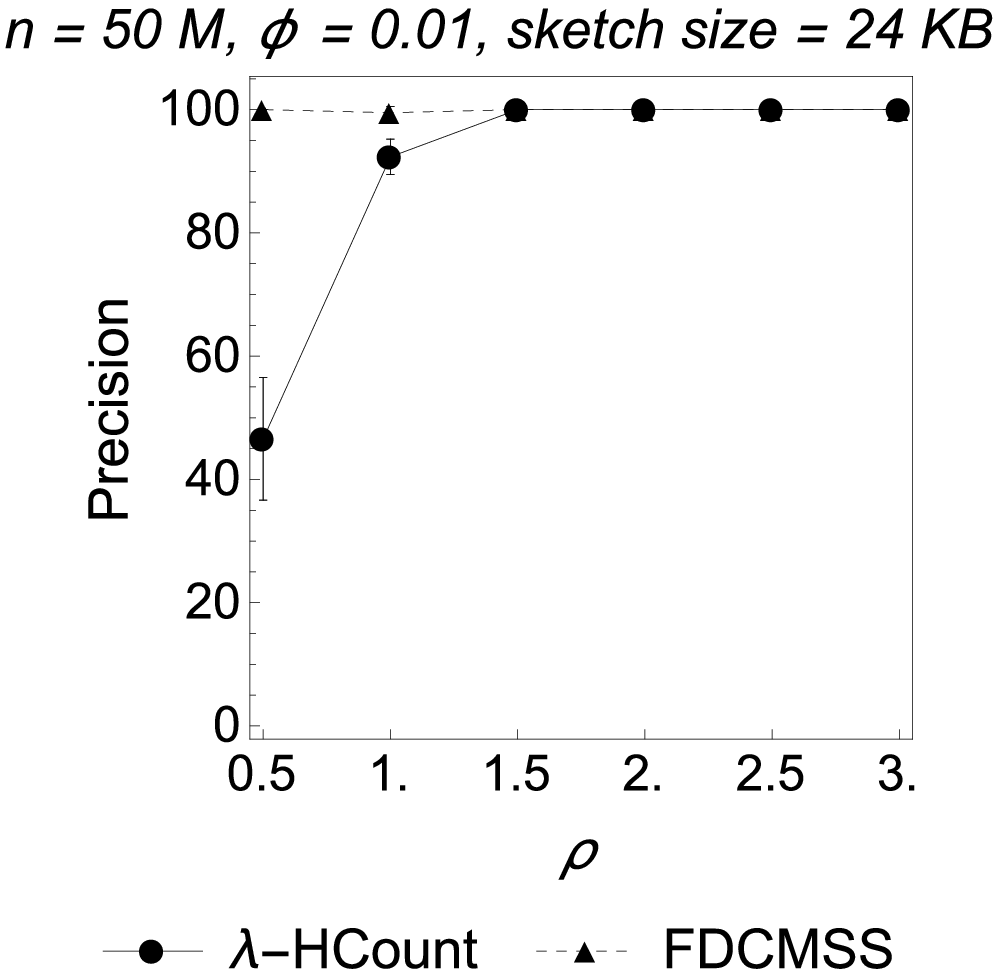}
           \label{sk-prec}
        } &
        
      \subfloat[varying the sketch size]{
           \includegraphics[scale=0.36]{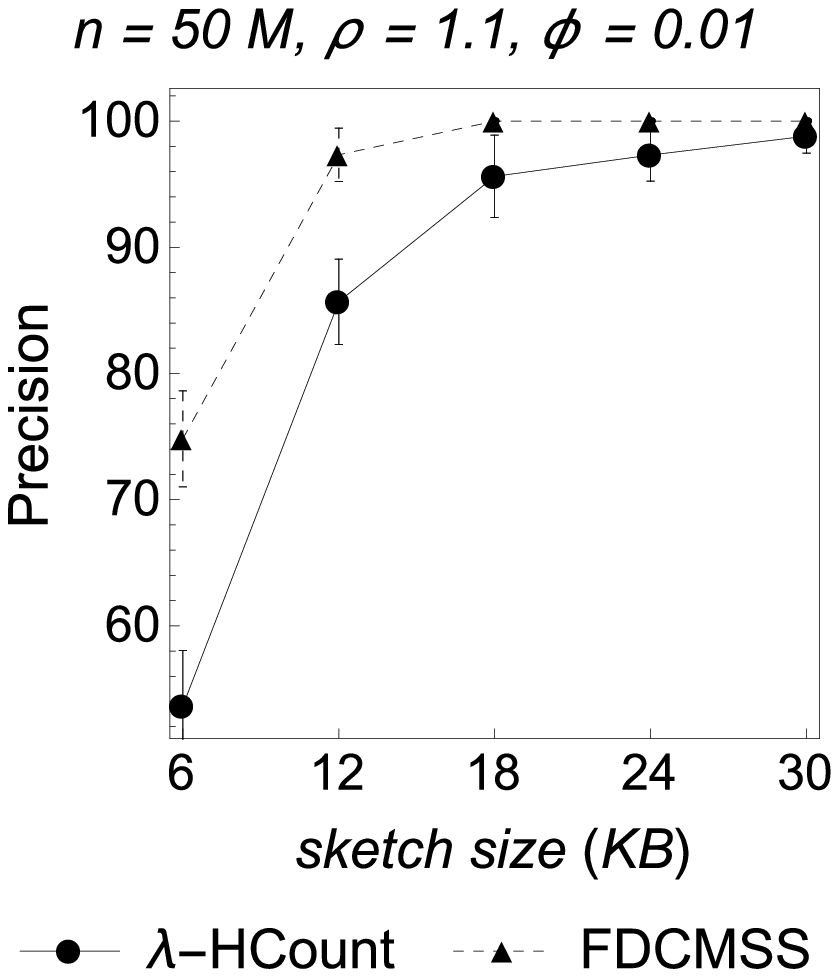}
           \label{sw-prec}
        } 

\end{tabular}
 
 \caption{Precision (mean and confidence interval)} 
 \label{precision}
\end{figure}

\begin{figure}[hbt]
  \centering
  \begin{tabular}{cccc}
     \subfloat[varying $n$]{
           \includegraphics[scale=0.36]{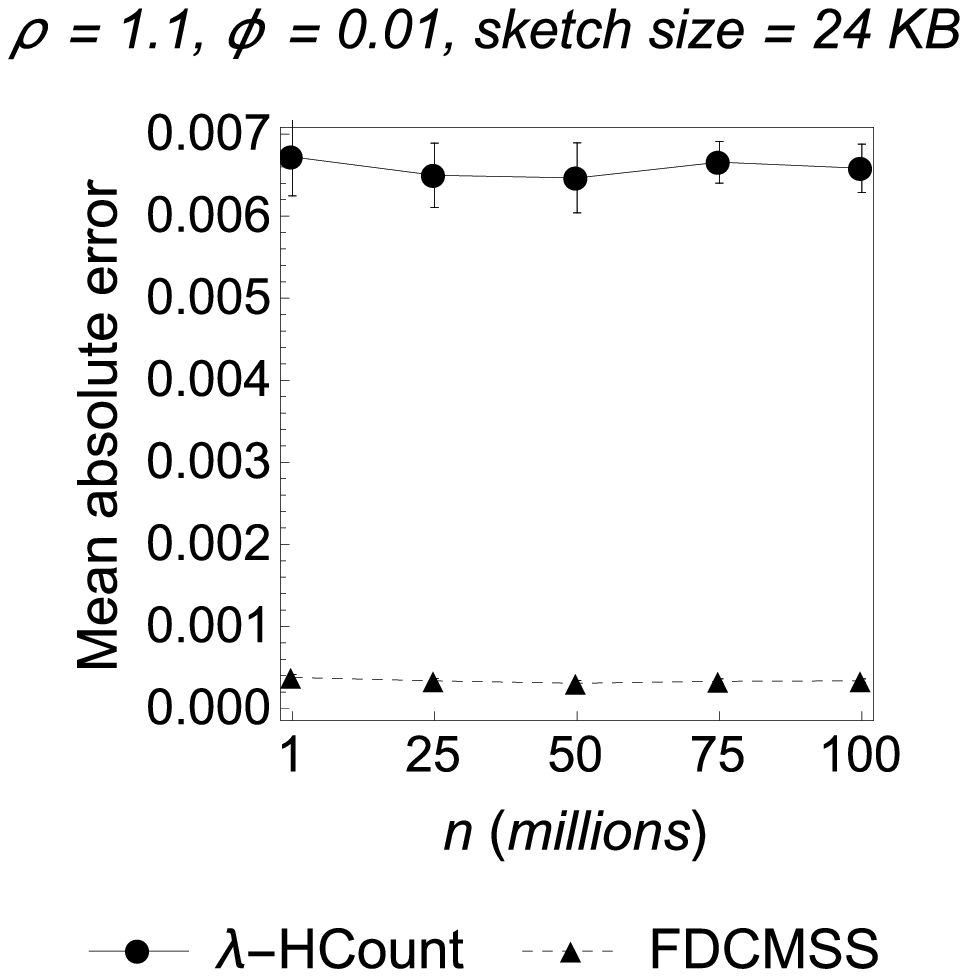}
           \label{ni-mae}
        } &
        
      \subfloat[varying $\phi$]{
           \includegraphics[scale=0.36]{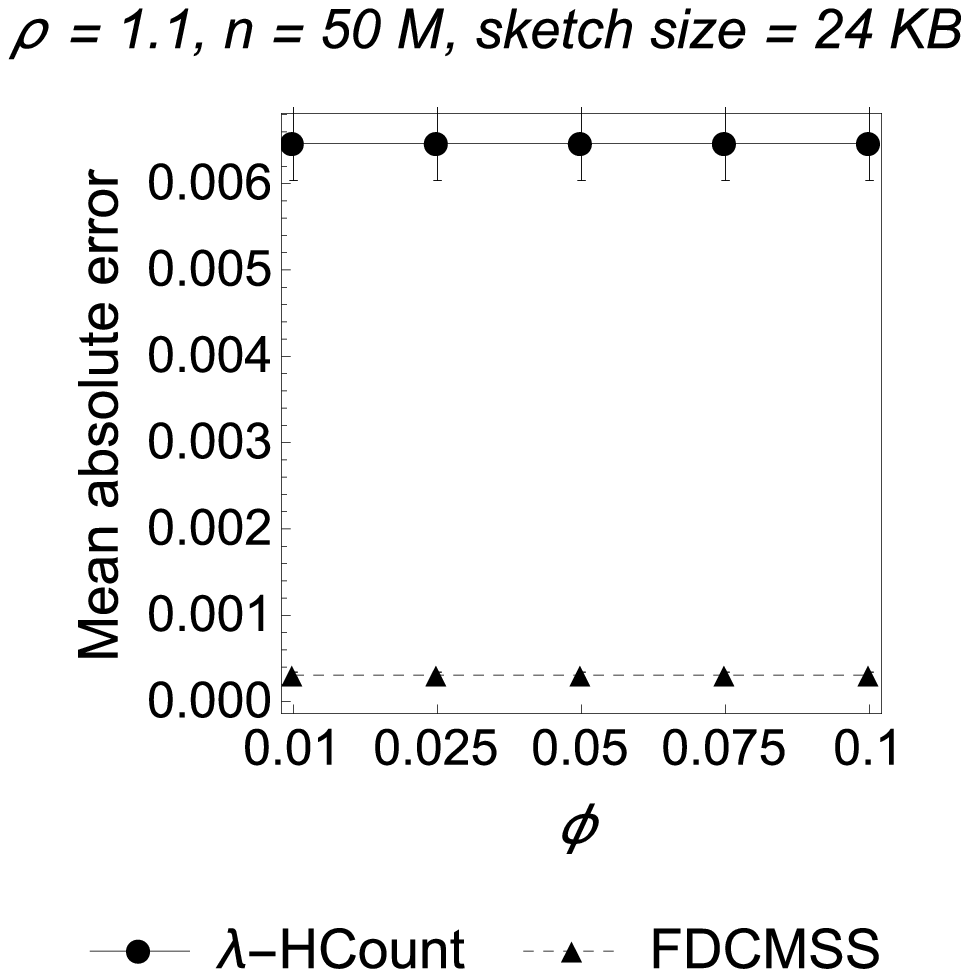}
           \label{phi-mae}
        } &

      \subfloat[varying $\rho$]{
           \includegraphics[scale=0.36]{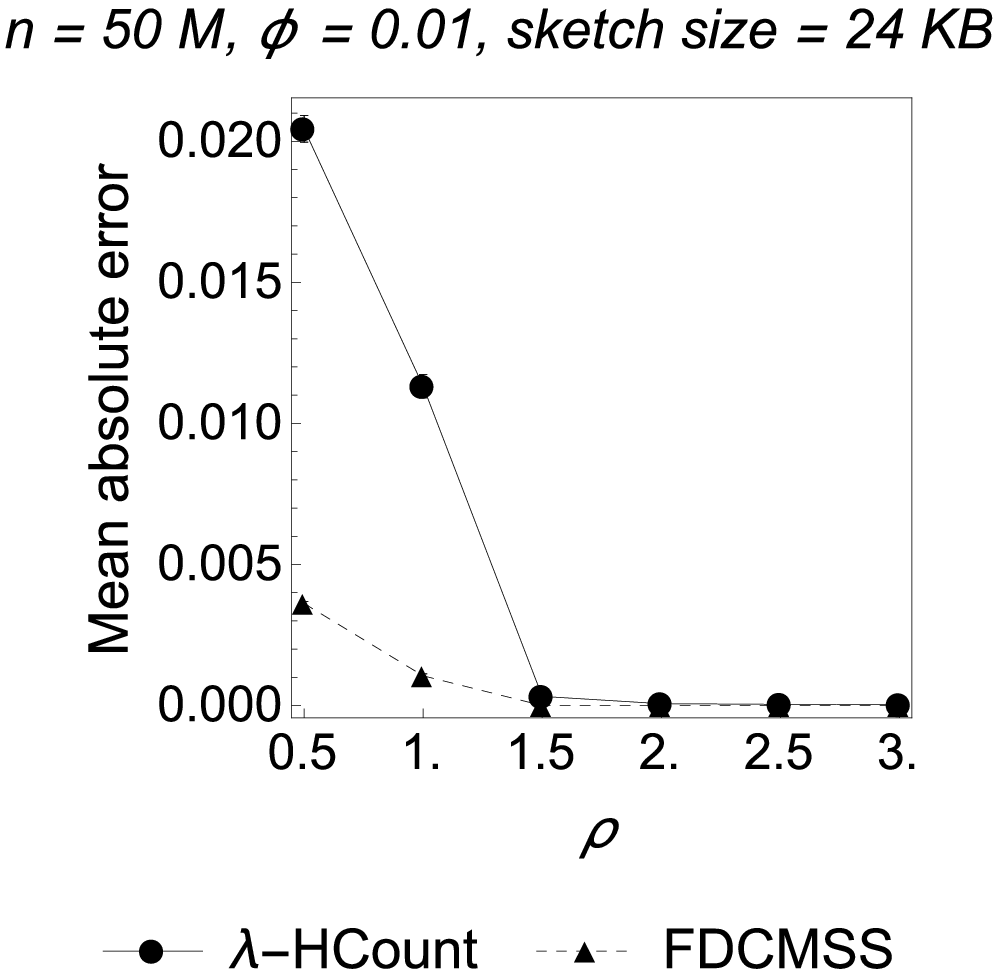}
           \label{sk-mae}
        } &
        
      \subfloat[varying the sketch size]{
           \includegraphics[scale=0.36]{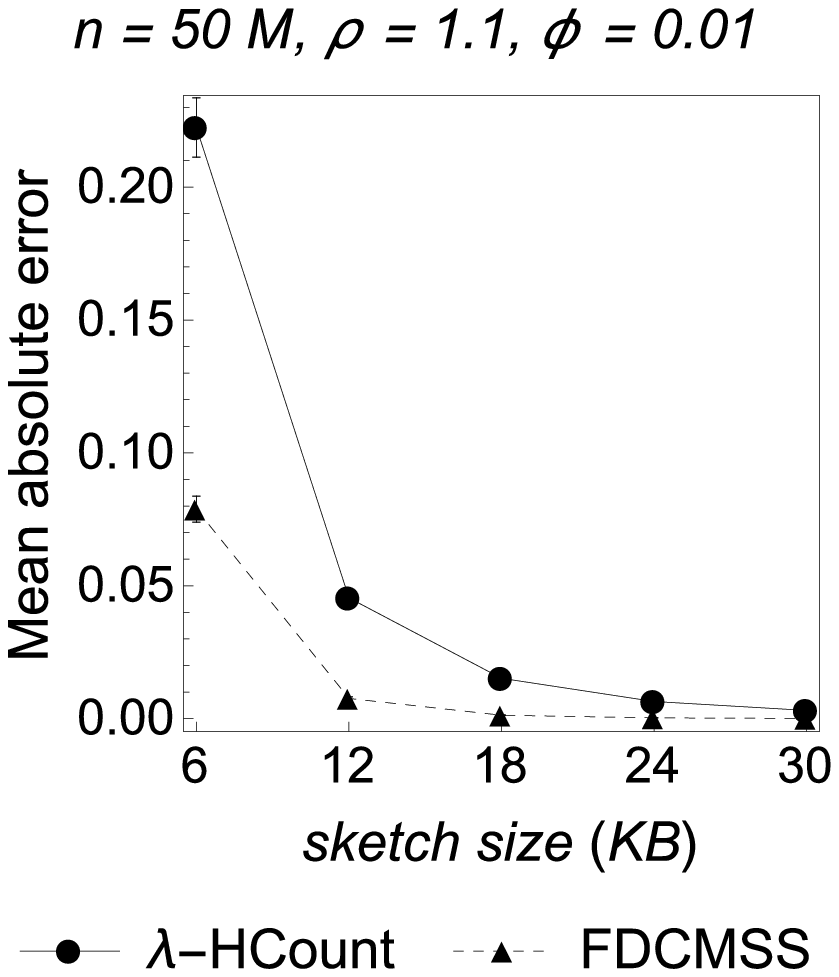}
           \label{sw-mae}
        } 

\end{tabular}
 
 \caption{Mean absolute error (mean and confidence interval)} 
 \label{mae}
\end{figure}

\begin{figure}[hbt]
  \centering
  \begin{tabular}{cccc}
     \subfloat[varying $n$]{
           \includegraphics[scale=0.36]{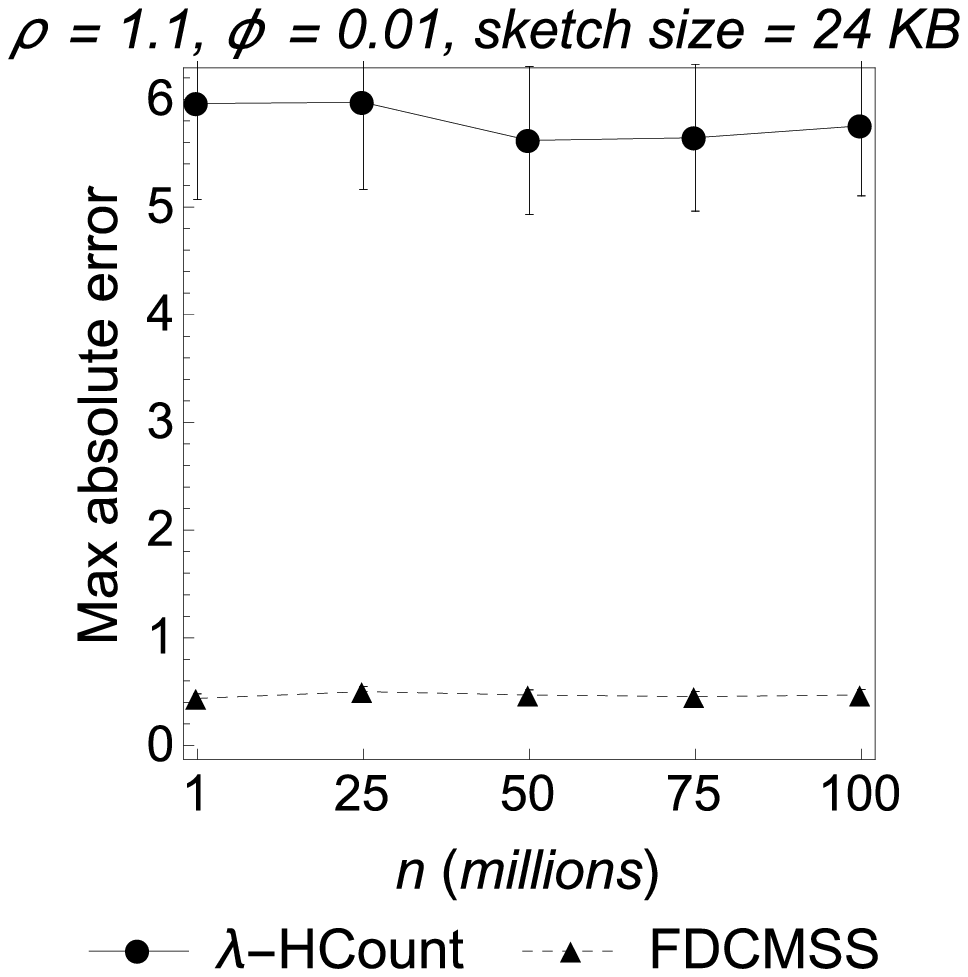}
           \label{ni-maxae}
        } &
        
      \subfloat[varying $\phi$]{
           \includegraphics[scale=0.36]{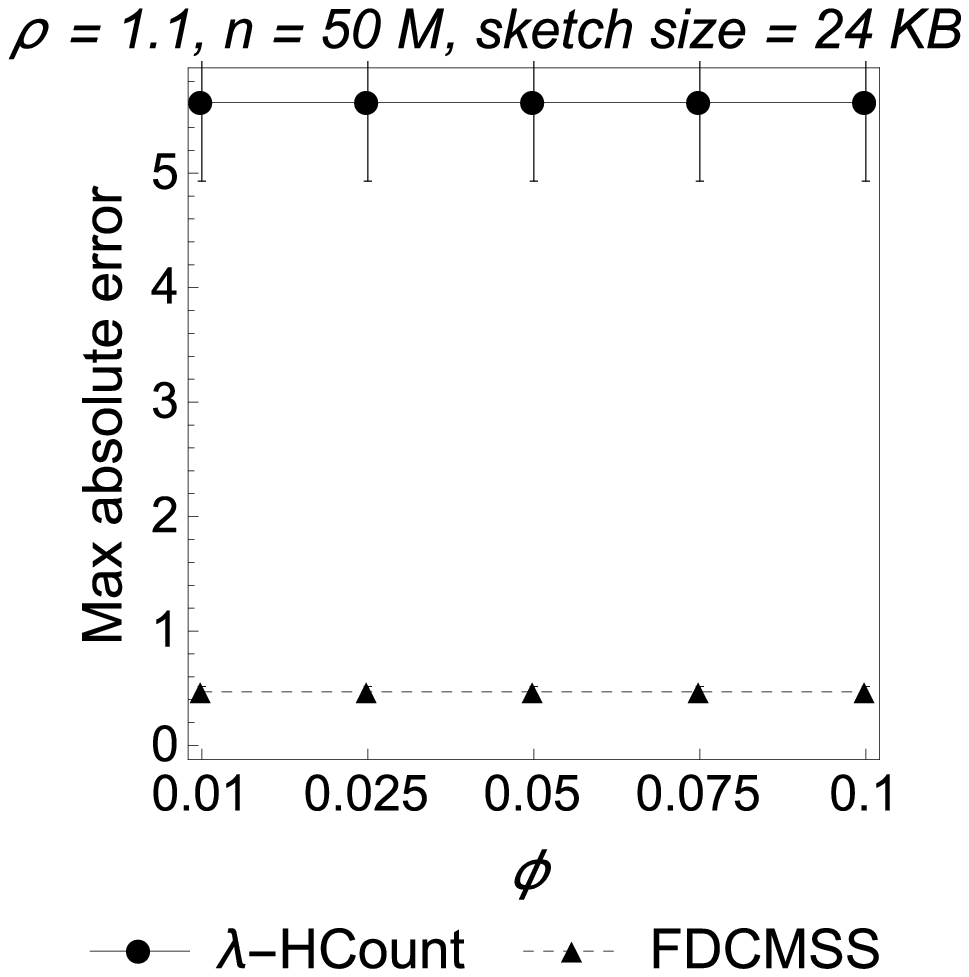}
           \label{phi-maxae}
        } &

      \subfloat[varying $\rho$]{
           \includegraphics[scale=0.36]{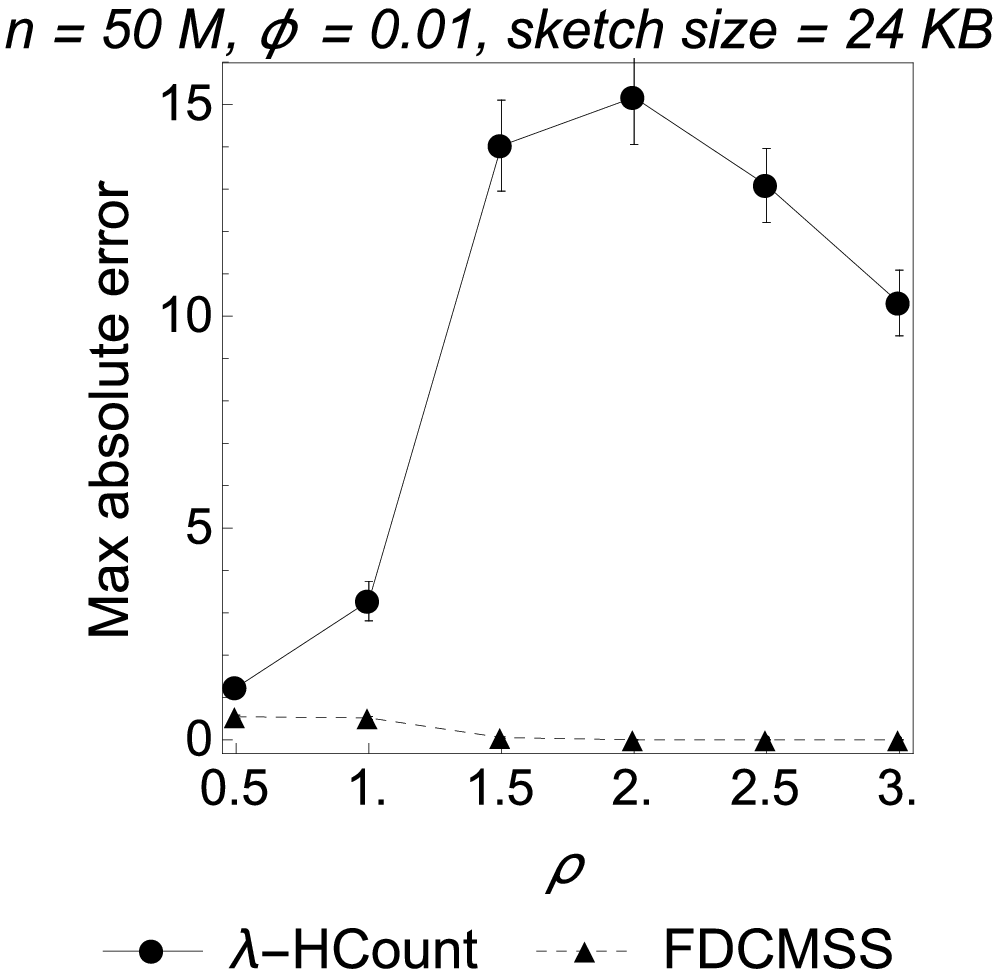}
           \label{sk-maxae}
        } &
        
      \subfloat[varying the sketch size]{
           \includegraphics[scale=0.36]{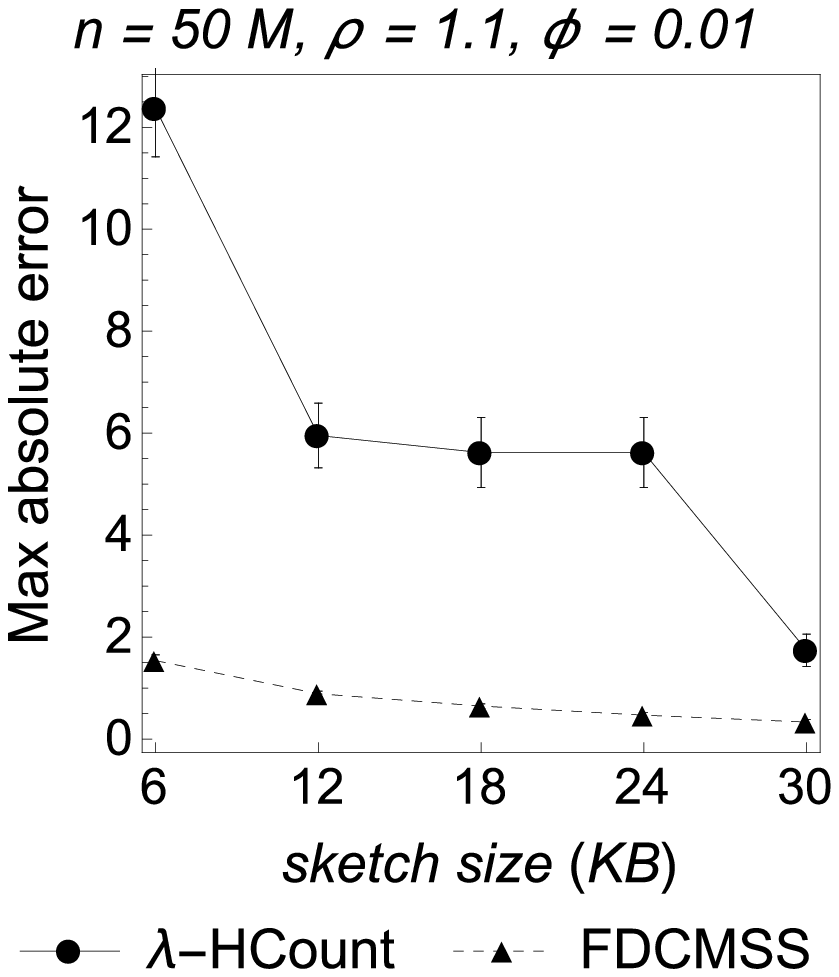}
           \label{sw-maxae}
        } 

\end{tabular}
 
 \caption{Max absolute error (mean and confidence interval)} 
 \label{maxae}
\end{figure}

\begin{figure}[hbt]
  \centering
  \begin{tabular}{cccc}
     \subfloat[varying $n$]{
           \includegraphics[scale=0.36]{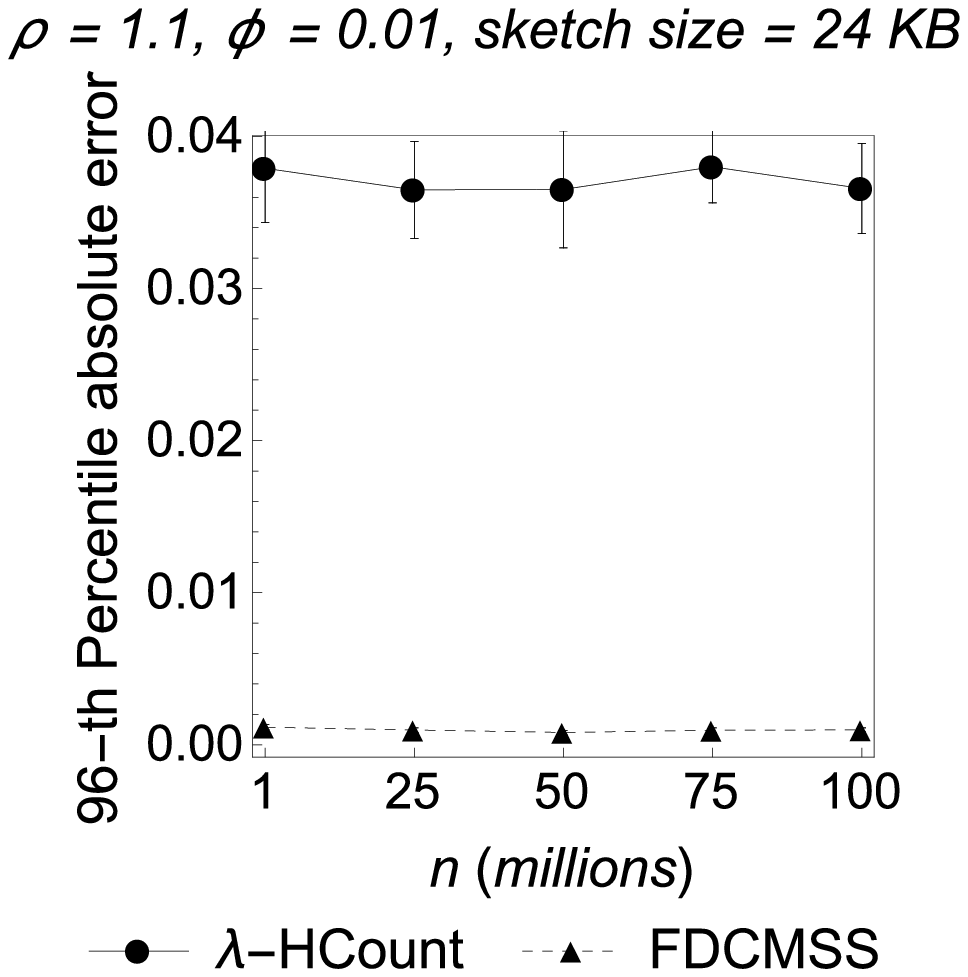}
           \label{ni-percentile}
        } &
        
      \subfloat[varying $\phi$]{
           \includegraphics[scale=0.36]{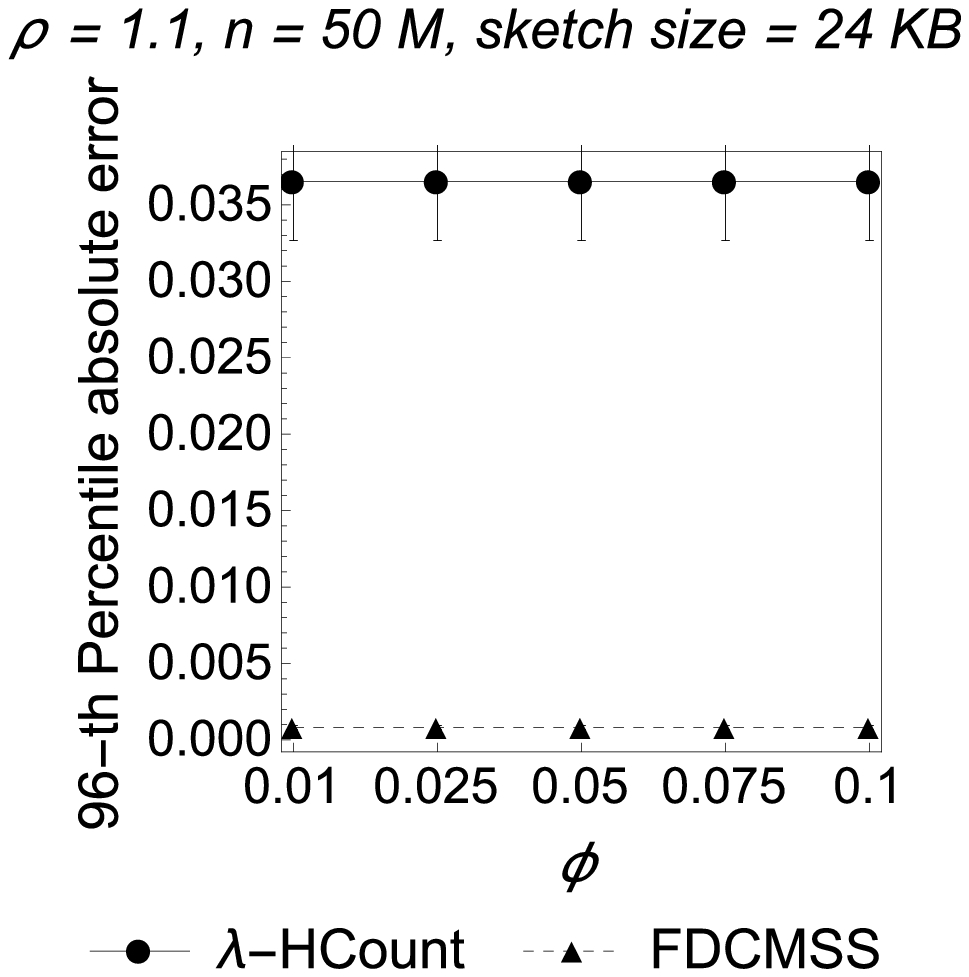}
           \label{phi-percentile}
        } &

      \subfloat[varying $\rho$]{
           \includegraphics[scale=0.36]{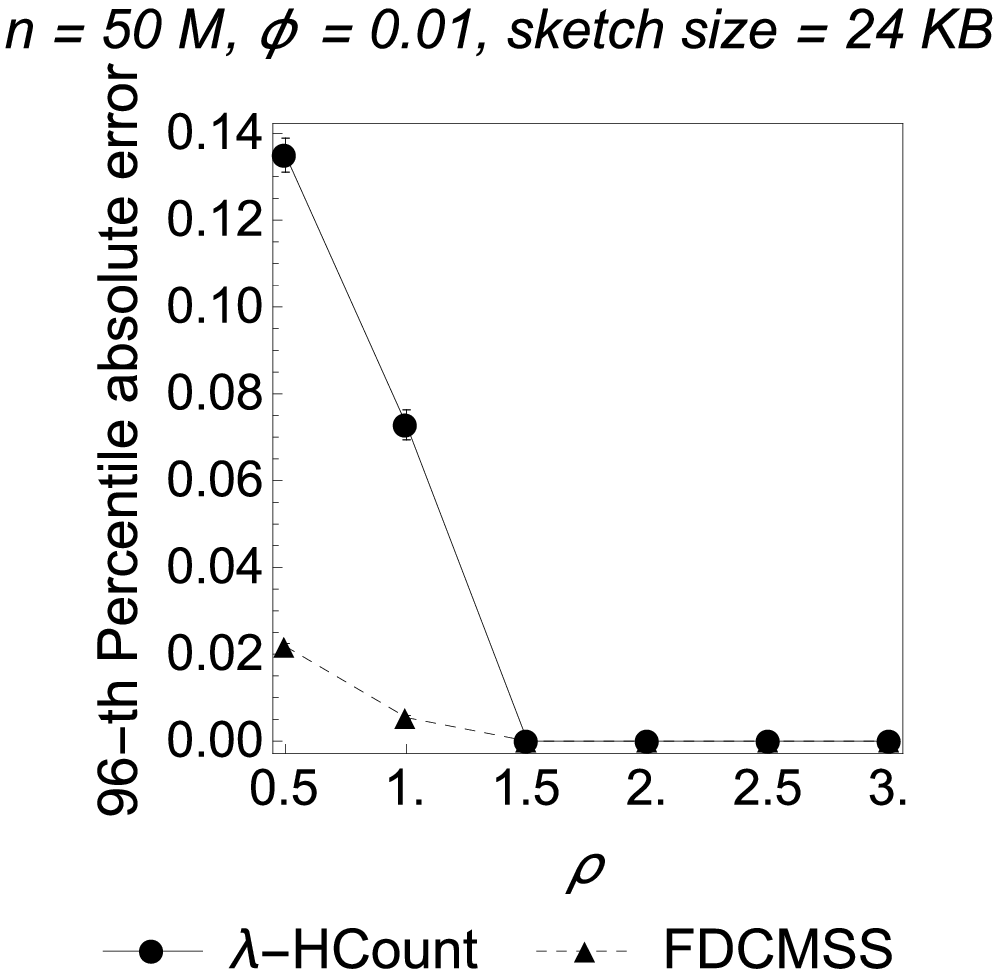}
           \label{sk-percentile}
        } &
        
      \subfloat[varying the sketch size]{
           \includegraphics[scale=0.36]{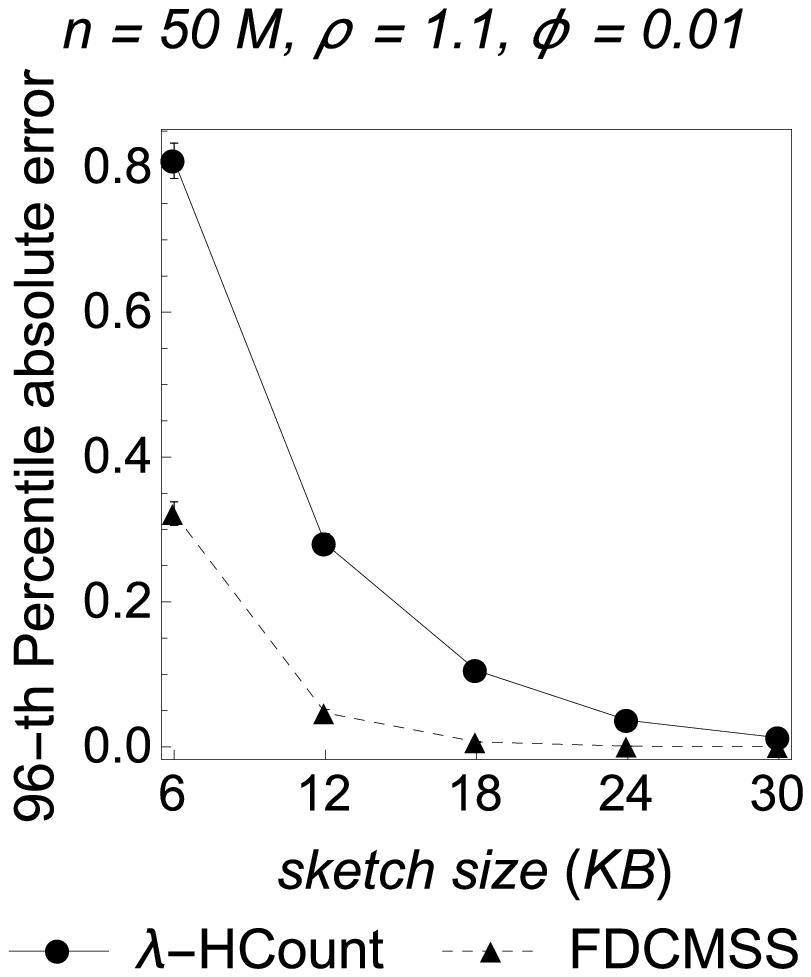}
           \label{sw-percentile}
        } 

\end{tabular}
 
 \caption{96-th Percentile absolute error (mean and confidence interval)} 
 \label{percentile}
\end{figure}

\begin{figure}[hbt]
  \centering
  \begin{tabular}{cccc}
     \subfloat[varying $n$]{
           \includegraphics[scale=0.36]{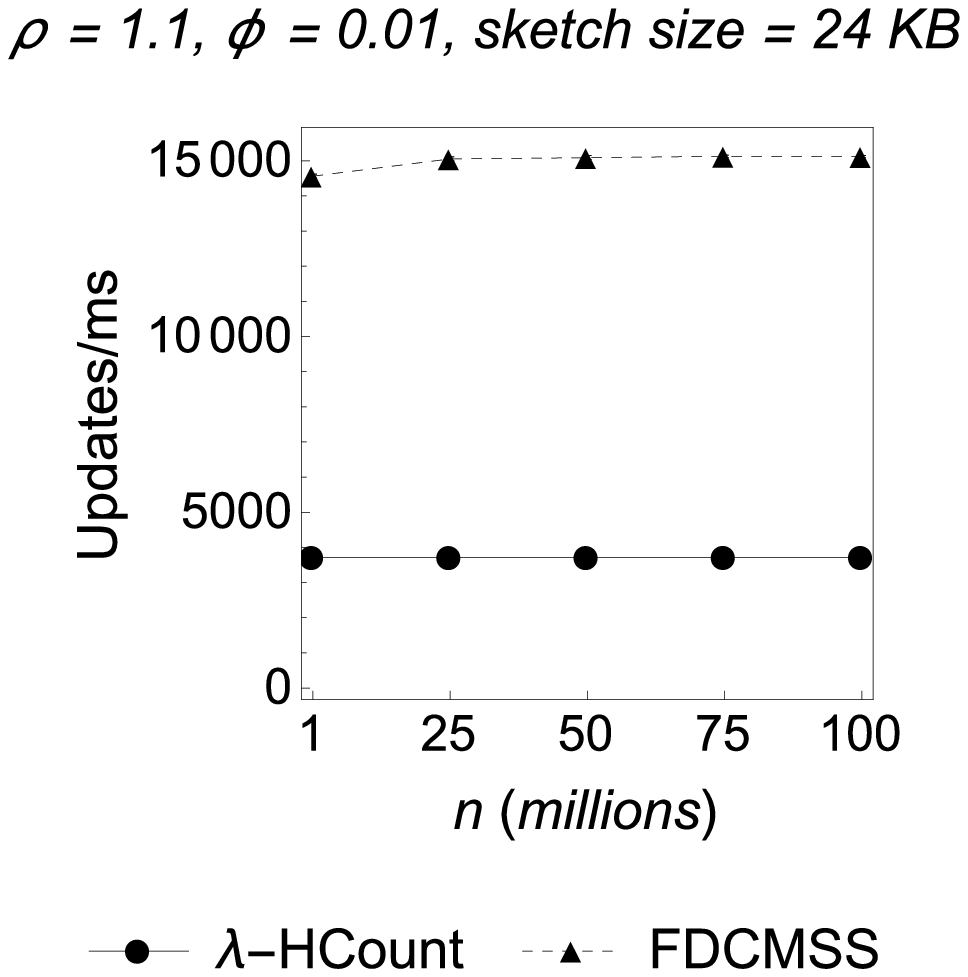}
           \label{ni-updates}
        } &
        
      \subfloat[varying $\phi$]{
           \includegraphics[scale=0.36]{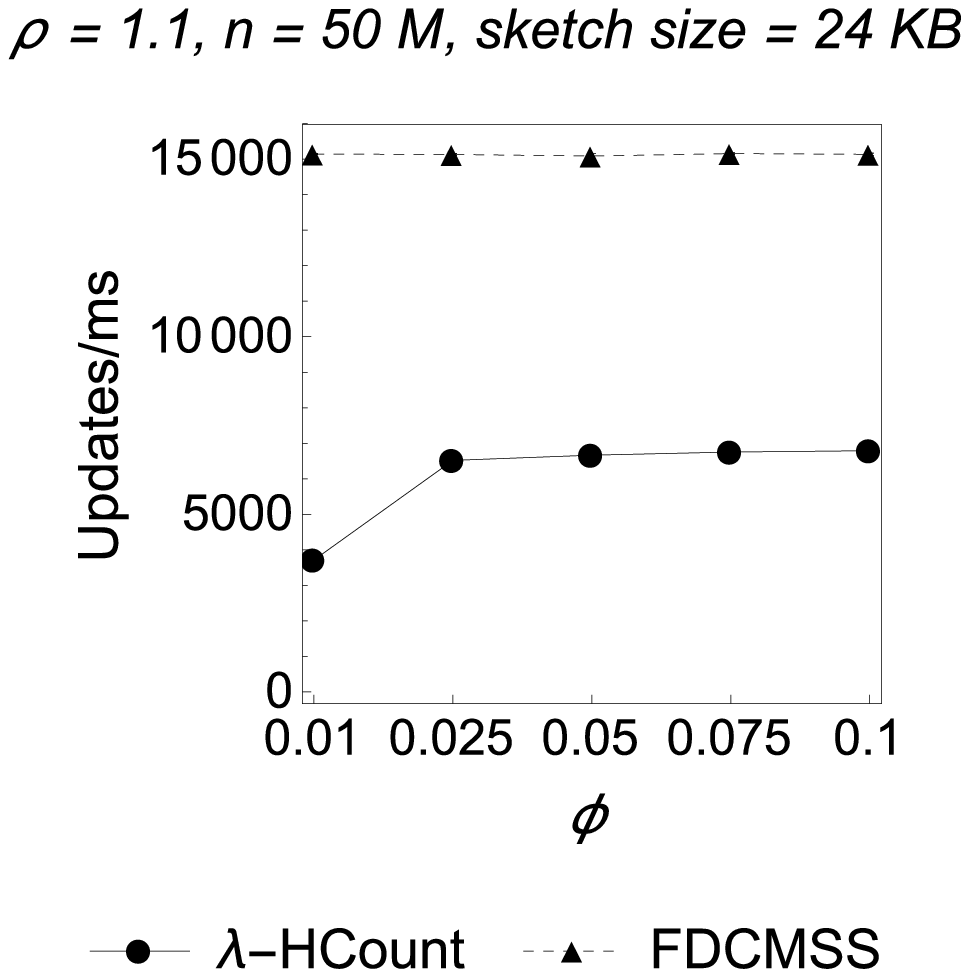}
           \label{phi-updates}
        } &

      \subfloat[varying $\rho$]{
           \includegraphics[scale=0.36]{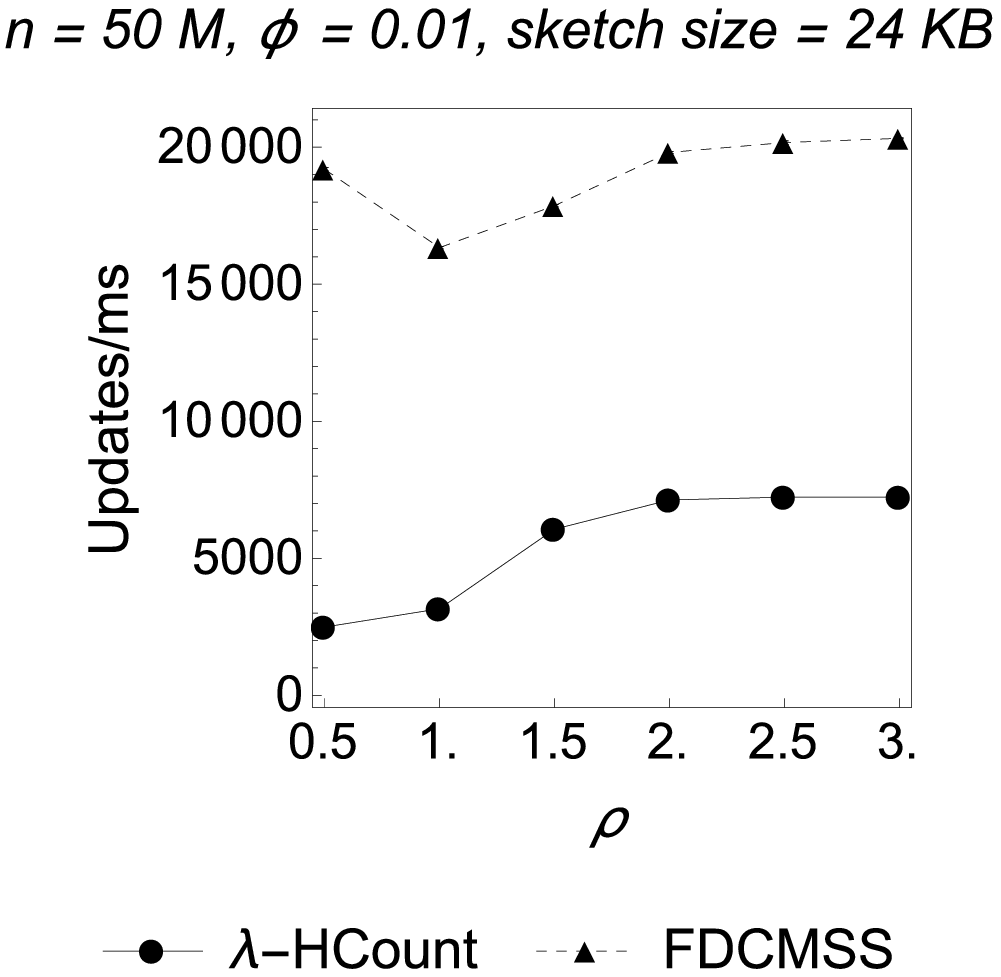}
           \label{sk-updates}
        } &
        
      \subfloat[varying the sketch size]{
           \includegraphics[scale=0.36]{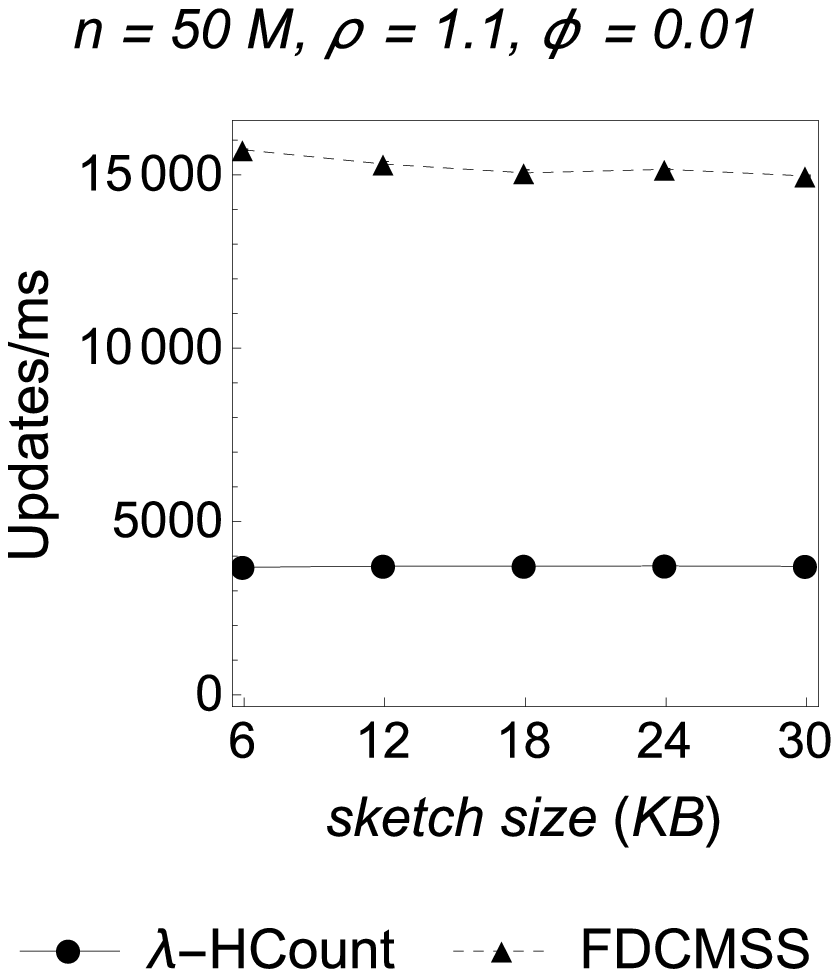}
           \label{sw-updates}
        } 

\end{tabular}
 
 \caption{Updates/ms (mean and confidence interval)} 
 \label{updates}
\end{figure}

%
%
%
%
%

\clearpage

\subsection{Real datasets}

We also provide experimental results for real datasets from several domains \cite{FIMDR} \cite{Dallachiesa}. These datasets are public domain and commonly utilized for data mining experiments. Moreover, a wide variety of statistical features, reported in Table \ref{data}, characterize the datasets, described below.

\textbf{Kosarak:} This is a click-stream dataset of a Hungarian online news portal. It has been anonymized, and consists of transactions, each of which is comprised of several integer items. In the experiments, we have considered every single item in serial order.

\textbf{Retail:} This dataset contains retail market basket data coming from an anonymous Belgian store. Again, we consider all of the items belonging to the dataset in serial order.

\textbf{Q148:} Derived from the KDD Cup 2000 data, compliments of Blue Martini, this dataset contains several data. The ones we use for our experiments are the values of the attribute "Request Processing Time Sum" (attribute number 148), coming from the "clicks" dataset. A pre-processing step was required, in order to obtain the final dataset. We had to replace all of the missing values (appearing as question marks) with the value of 0.

\textbf{Webdocs:} This dataset derives from a spidered collection of web html documents. The whole collection contains about 1.7 millions documents, mainly written in English, and its size is about 5 GB. The resulting dataset, after preliminary filtering and pre-processing, has a size of about 1.48 GB.

\begin{table}
\renewcommand{\arraystretch}{1.3}
 \caption{Statistical characteristics of the real datasets}
      \label{data}
	\centering
    \begin{tabular}{| c |  c |  c  | c | c |}
    \hline
      & Kosarak & Retail & Q148 &  Webdocs \\ \hline
      \hline
    Count &  8019015 & 908576 & 234954 & 299887139 \\ \hline
    Distinct items & 41270 & 16470 & 11824 &  5267656 \\ \hline
    Min & 1 & 0 & 0 &  1 \\ \hline 
    Max & 41270 & 16469 & 149464496 &  5267656 \\ \hline 
    Mean & 2387.2 & 3264.7 & 3392.9 &  122715  \\ \hline
    Median & 640 & 1564 & 63 & 1988 \\ \hline
    Std. deviation & 4308.5 & 4093.2 & 309782.5 & 549736 \\ \hline
    Skewness & 3.5 & 1.5 & 478.1 &  6.1\\ \hline
    \end{tabular}
    \end{table}
    
The experiments have been carried out by varying respectively $\phi$ and the sketch size. As we did for synthetic datasets, for each plot we always compare the algorithms by using exactly the same sketch size in kilobytes. We are not taking into account the additional space required by $\lambda$-HCount for its $F$ data structure.

Figures \ref{kosarak-precision-recall}, \ref{retail-precision-recall}, \ref{q148-precision-recall} and \ref{webdocs-precision-recall} show precision and recall for the datasets under examination. In all of the experiments, FDCMSS and  $\lambda$-HCount achieve 100\% recall. Regarding precision, FDCMSS outperforms $\lambda$-HCount or provides the same precision. Indeed, both algorithms achieve 100\% precision when varying $\phi$, but FDCMSS outperforms $\lambda$-HCount when varying the sketch size.

\begin{figure}[hbt]
\centering
\begin{tabular}{cccc}
             
      \subfloat[varying $\phi$]{
           \includegraphics[scale=0.36]{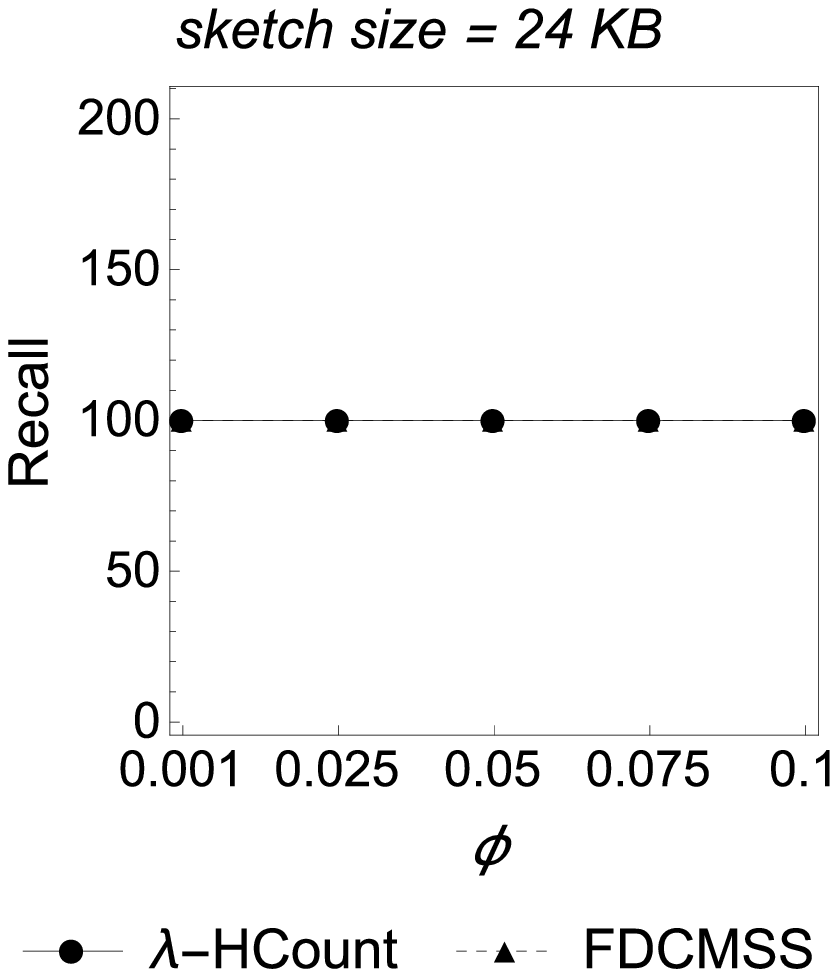}
           \label{phi-kosarak-recall}
        } &

      \subfloat[varying the sketch size]{
           \includegraphics[scale=0.36]{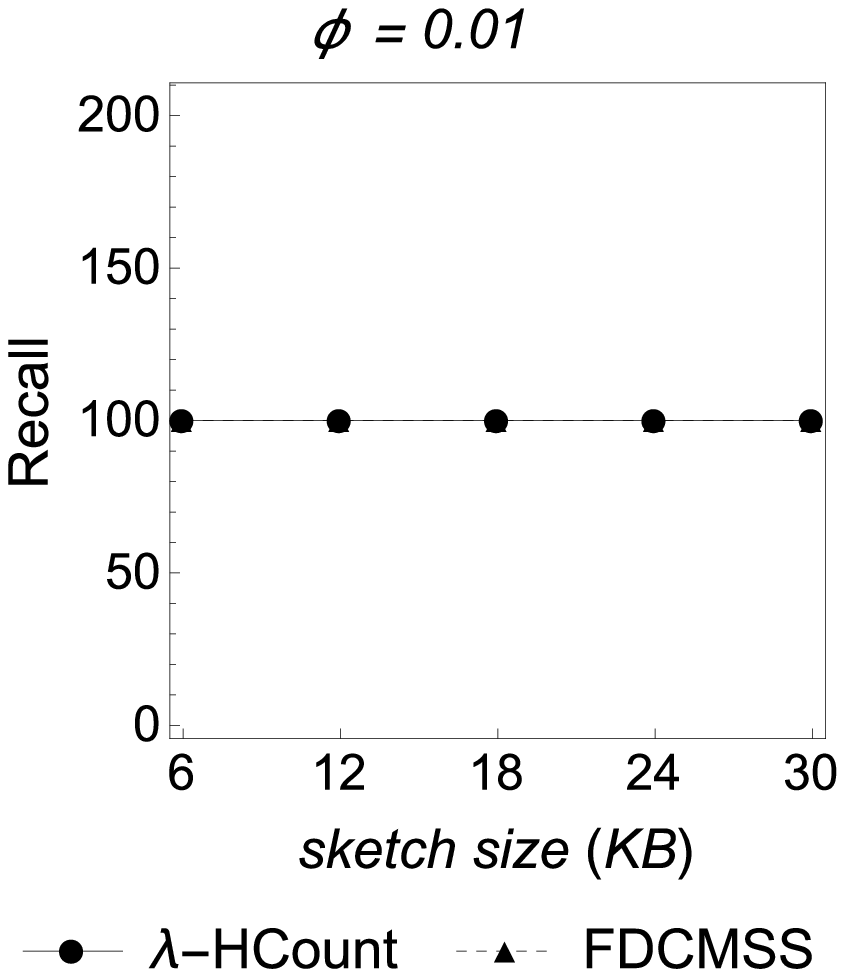}
           \label{sw-kosarak-recall}
        } &
        
     \subfloat[varying $\phi$]{
           \includegraphics[scale=0.36]{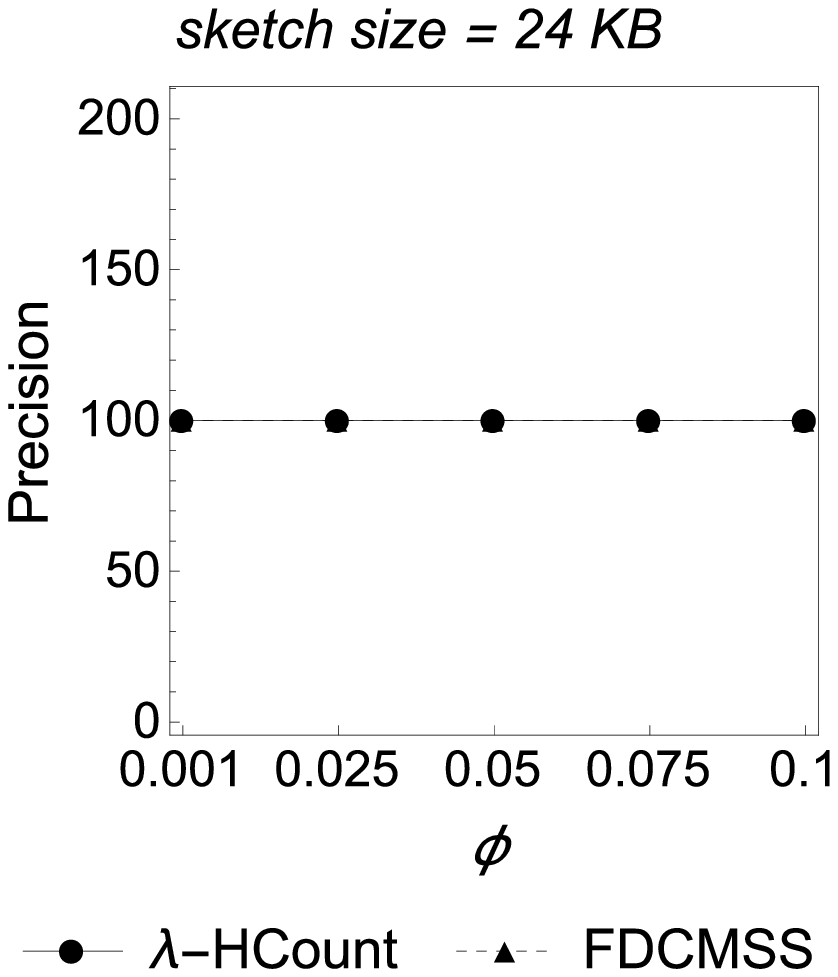}
           \label{phi-kosarak-prec}
        } &

      \subfloat[varying the sketch size]{
           \includegraphics[scale=0.36]{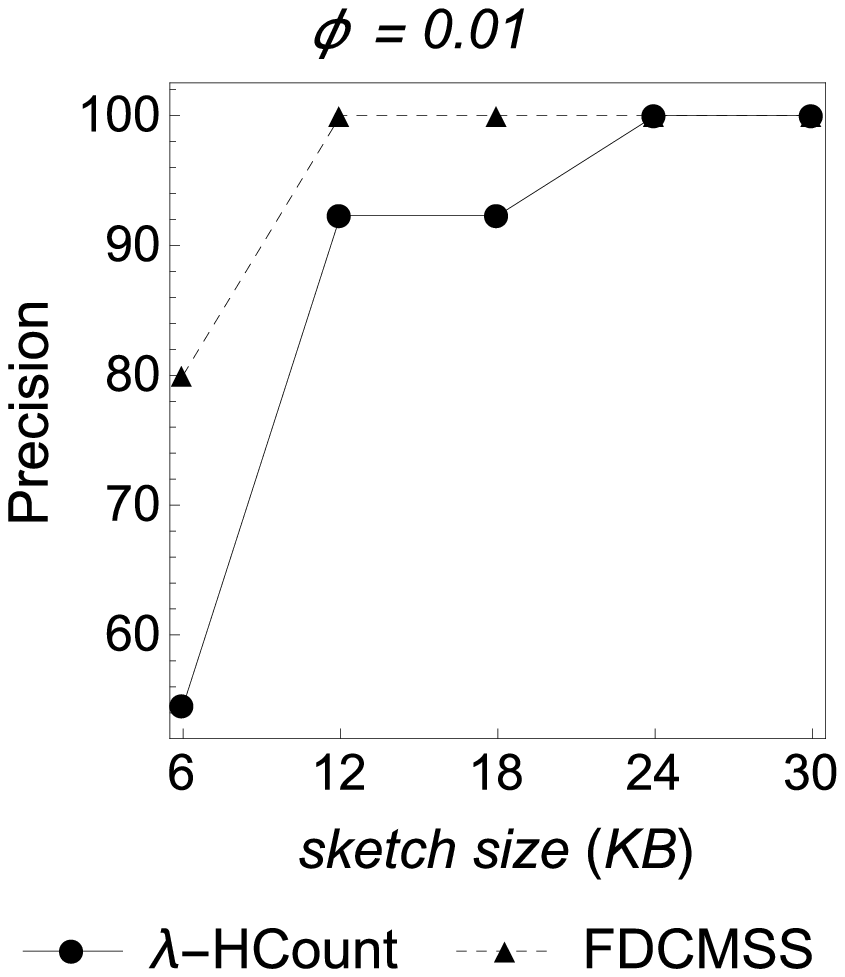}
           \label{sw-kosarak-prec}
        }

\end{tabular}
 
\caption{Kosarak: recall and precision (mean and confidence interval)} 
\label{kosarak-precision-recall}
\end{figure}

\begin{figure}[hbt]
\centering
\begin{tabular}{cccc}
             
      \subfloat[varying $\phi$]{
           \includegraphics[scale=0.36]{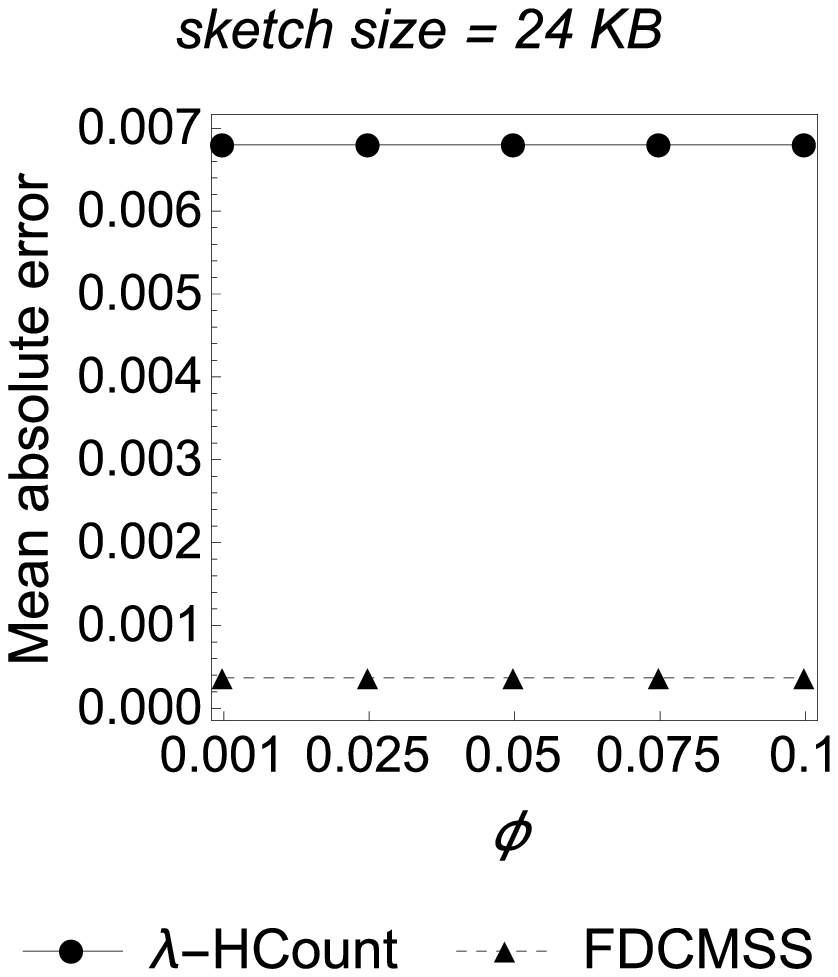}
           \label{phi-kosarak-mean-absolute-error}
        } &

      \subfloat[varying the sketch size]{
           \includegraphics[scale=0.36]{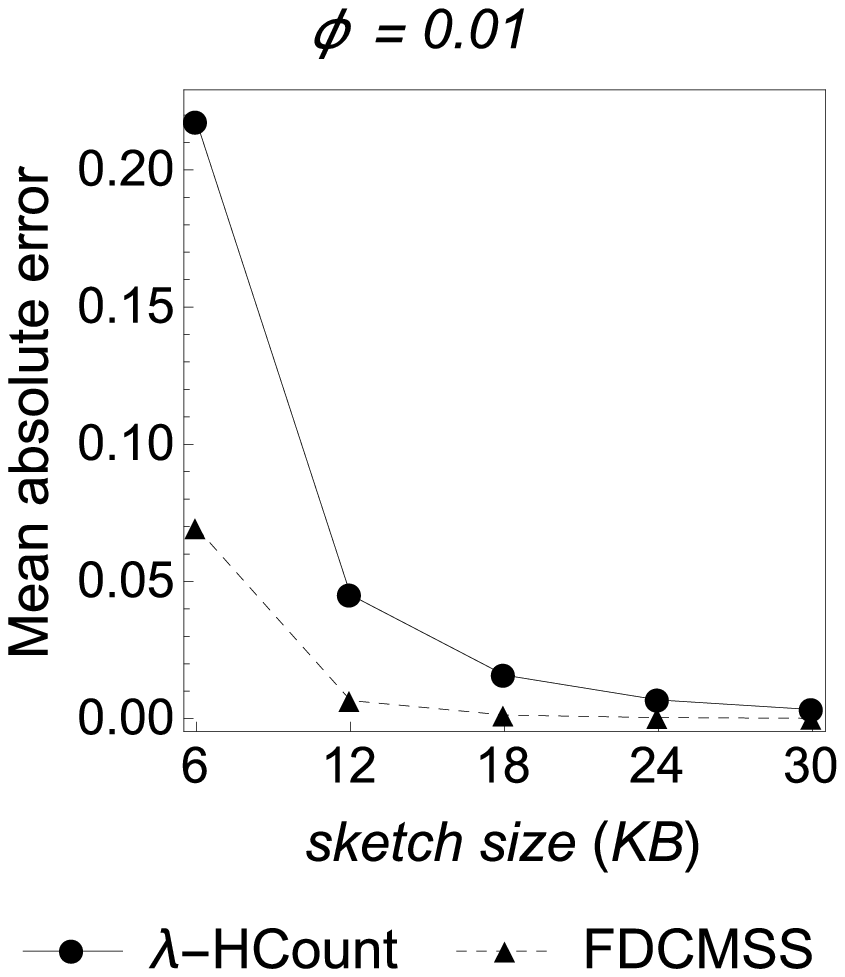}
           \label{sw-kosarak-mean-absolute-error}
        } &
        
     \subfloat[varying $\phi$]{
           \includegraphics[scale=0.36]{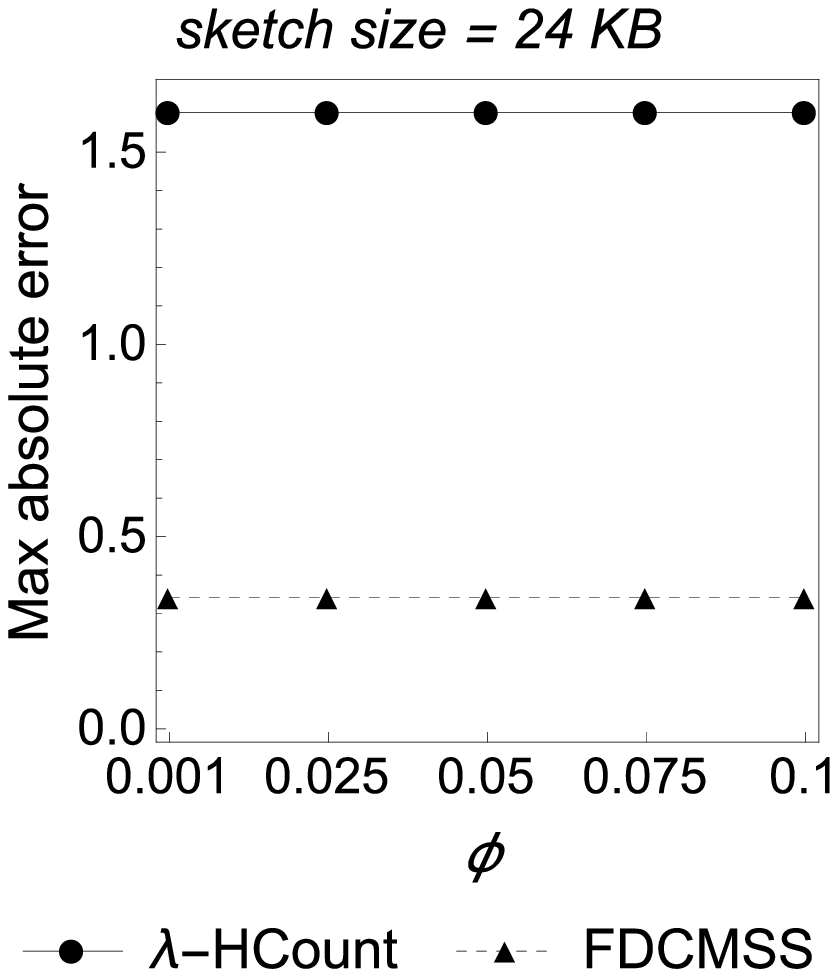}
           \label{phi-kosarak-max-absolute-error}
        } &

      \subfloat[varying the sketch size]{
           \includegraphics[scale=0.36]{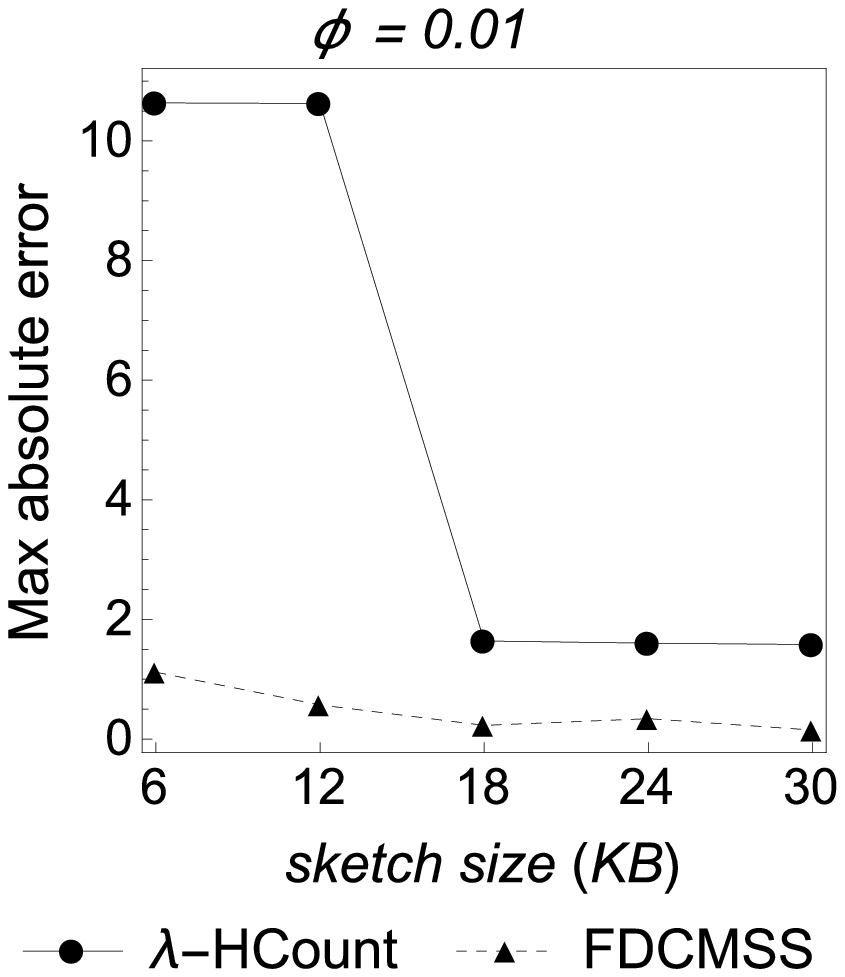}
           \label{sw-kosarak-max-absolute-error}
        }

\end{tabular}
 
\caption{Kosarak: Mean and max absolute error (mean and confidence interval)} 
\label{kosarak-mean-max-absolute-error}
\end{figure}

\begin{figure}[hbt]
\centering
\begin{tabular}{cccc}
             
      \subfloat[varying $\phi$]{
           \includegraphics[scale=0.36]{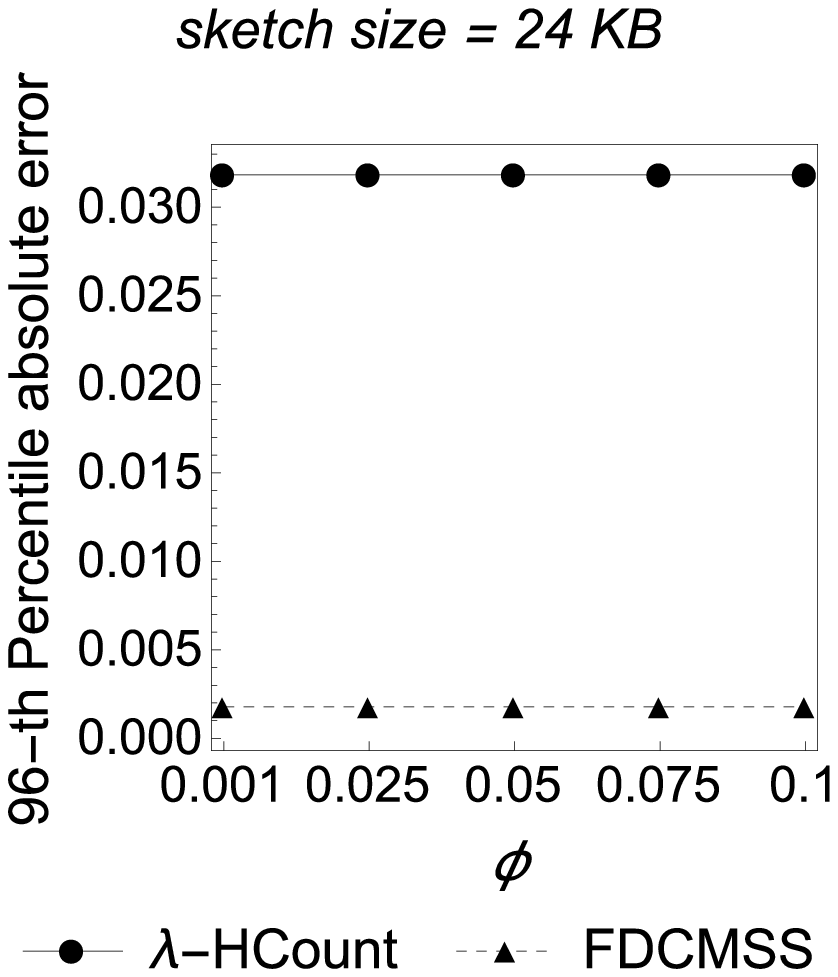}
           \label{phi-kosarak-percentile-absolute-error}
        } &

      \subfloat[varying the sketch size]{
           \includegraphics[scale=0.36]{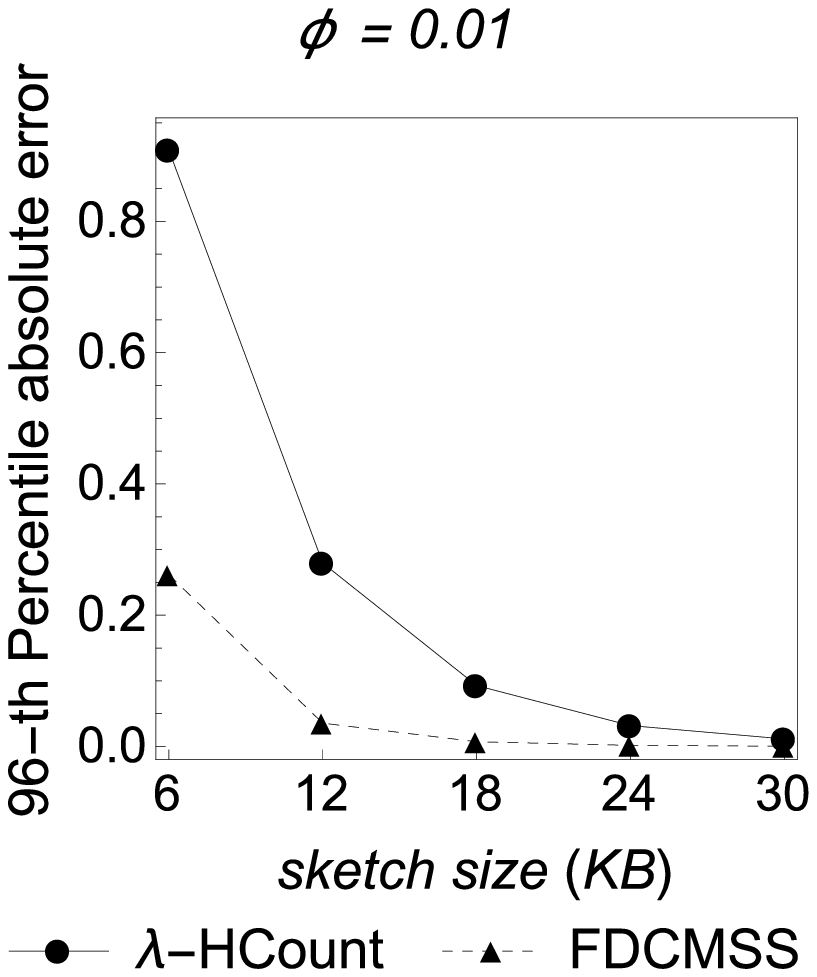}
           \label{sw-kosarak-percentile-absolute-error}
        } &
        
     \subfloat[varying $\phi$]{
           \includegraphics[scale=0.36]{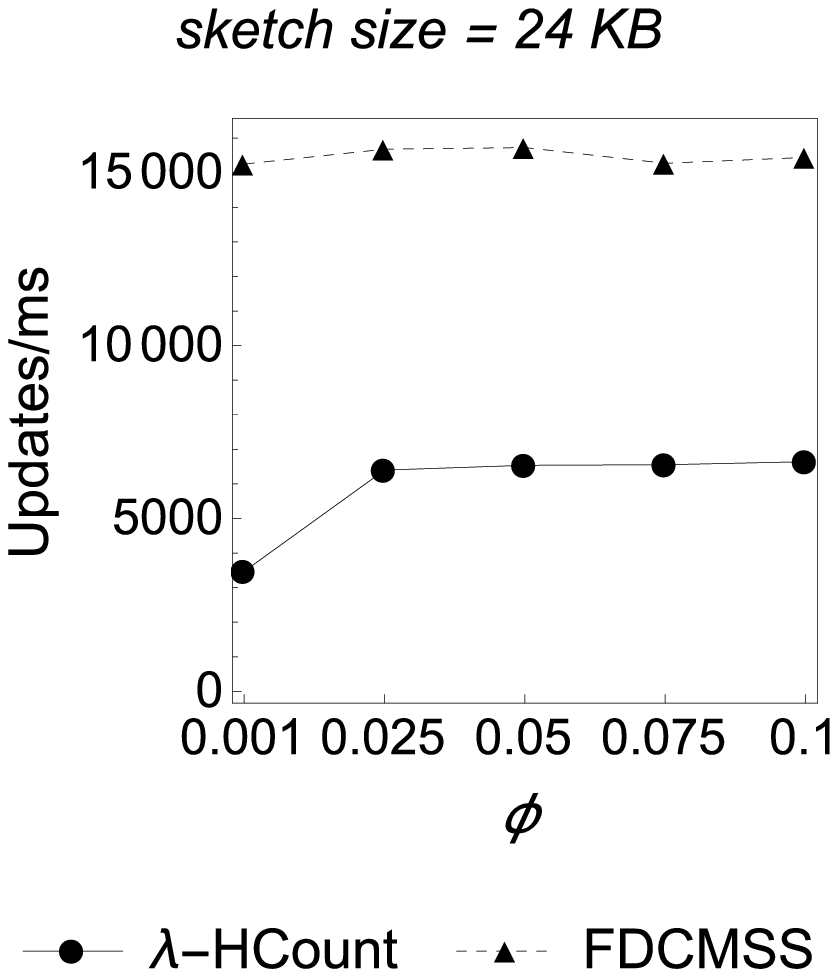}
           \label{phi-kosarak-updates}
        } &

      \subfloat[varying the sketch size]{
           \includegraphics[scale=0.36]{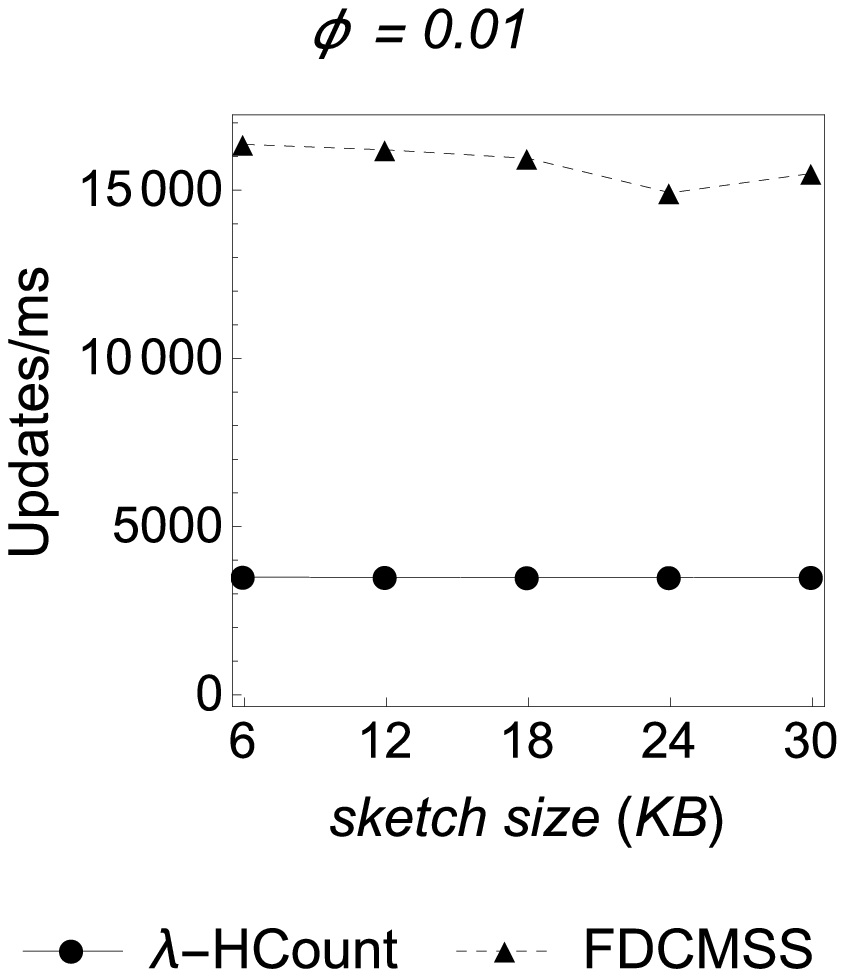}
           \label{sw-kosarak-updates}
        }

\end{tabular}
 
\caption{Kosarak: Percentile absolute error and updates/ms (mean and confidence interval)} 
\label{kosarak-percentile-absolute-error-updates}
\end{figure}

Figures \ref{kosarak-mean-max-absolute-error}, \ref{retail-mean-max-absolute-error}, \ref{q148-mean-max-absolute-error} and \ref{webdocs-mean-max-absolute-error} show mean and max absolute errors. Clearly, FDCMSS outperforms $\lambda$-HCount in all of the experiments carried out. Figures \ref{kosarak-percentile-absolute-error-updates}, \ref{retail-percentile-absolute-error-updates}, \ref{q148-percentile-absolute-error-updates} and \ref{webdocs-percentile-absolute-error-updates} show the 96-th percentile of absolute error and the number of updates per millisecond. FDCMSS outperforms $\lambda$-HCount in all of the experiments carried out for these metrics.

\begin{figure}[hbt]
\centering
\begin{tabular}{cccc}
             
      \subfloat[varying $\phi$]{
           \includegraphics[scale=0.36]{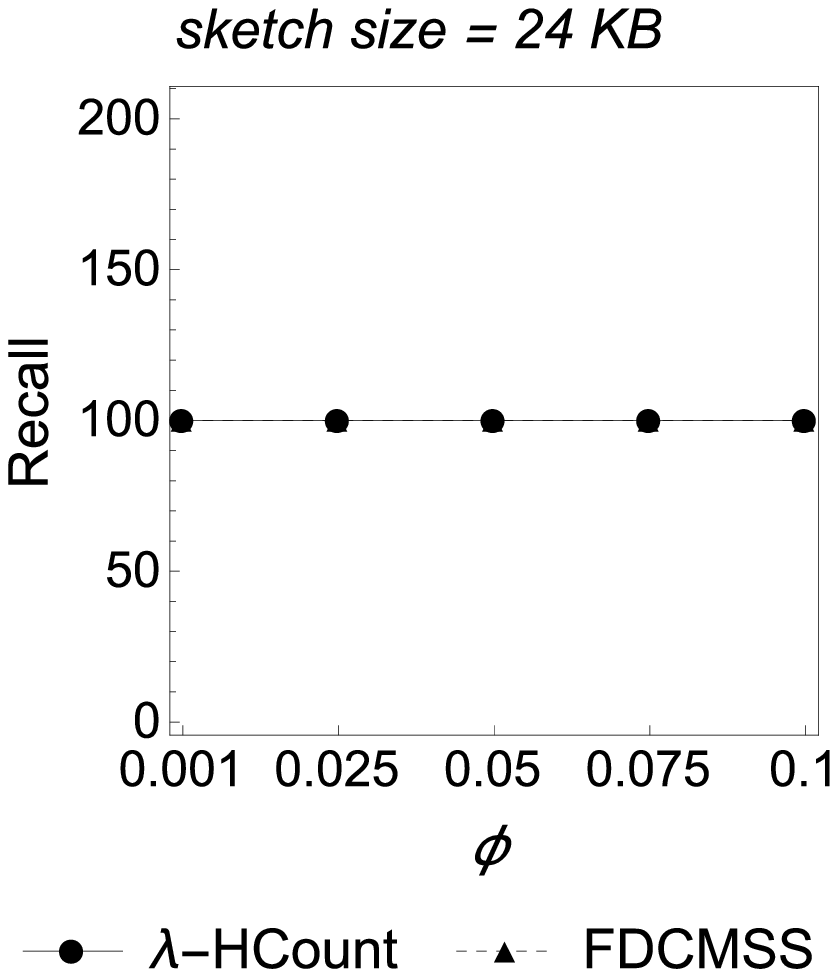}
           \label{phi-retail-recall}
        } &

      \subfloat[varying the sketch size]{
           \includegraphics[scale=0.36]{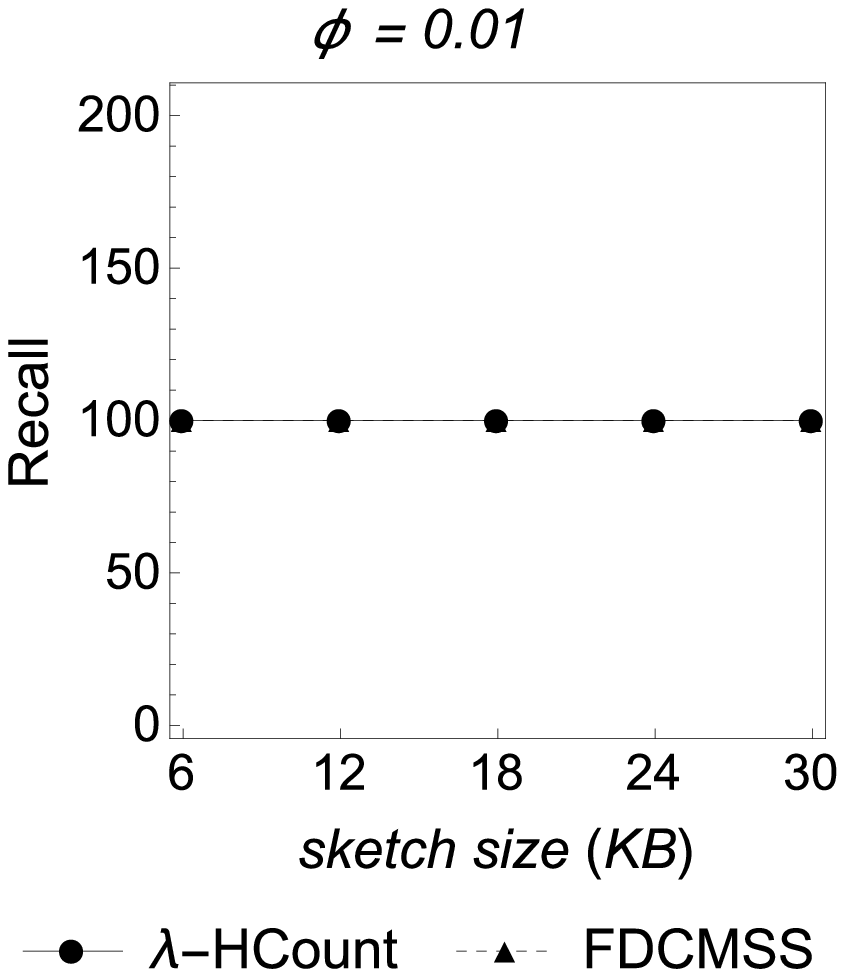}
           \label{sw-retail-recall}
        } &
        
     \subfloat[varying $\phi$]{
           \includegraphics[scale=0.36]{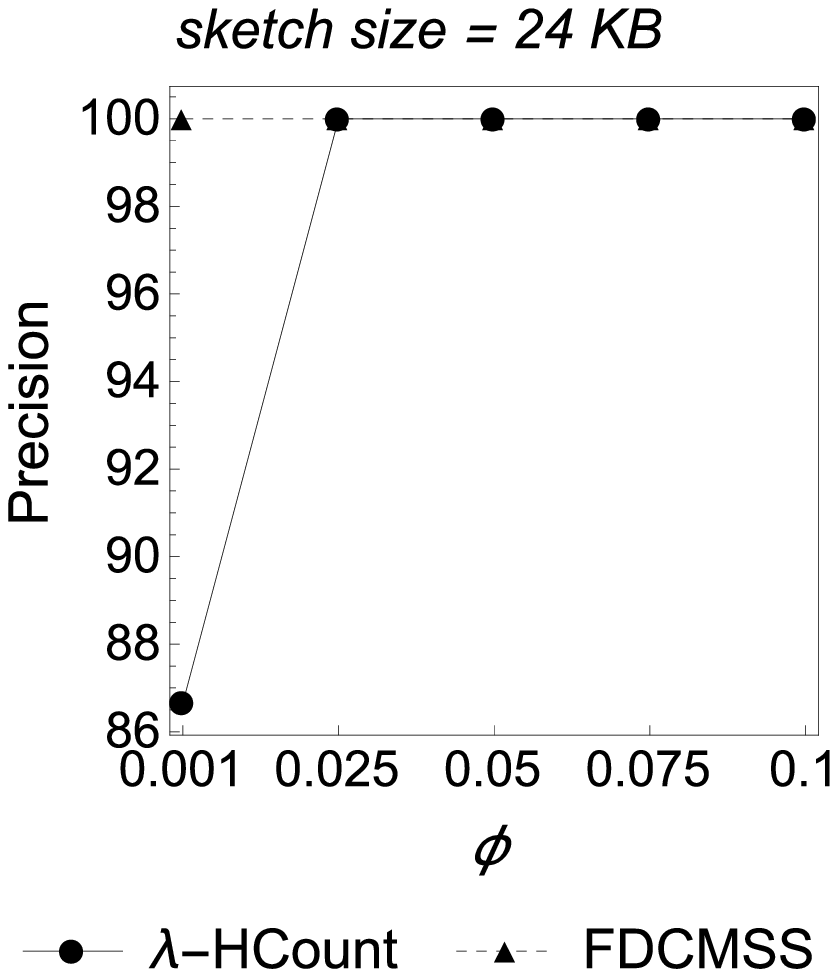}
           \label{phi-retail-prec}
        } &

      \subfloat[varying the sketch size]{
           \includegraphics[scale=0.36]{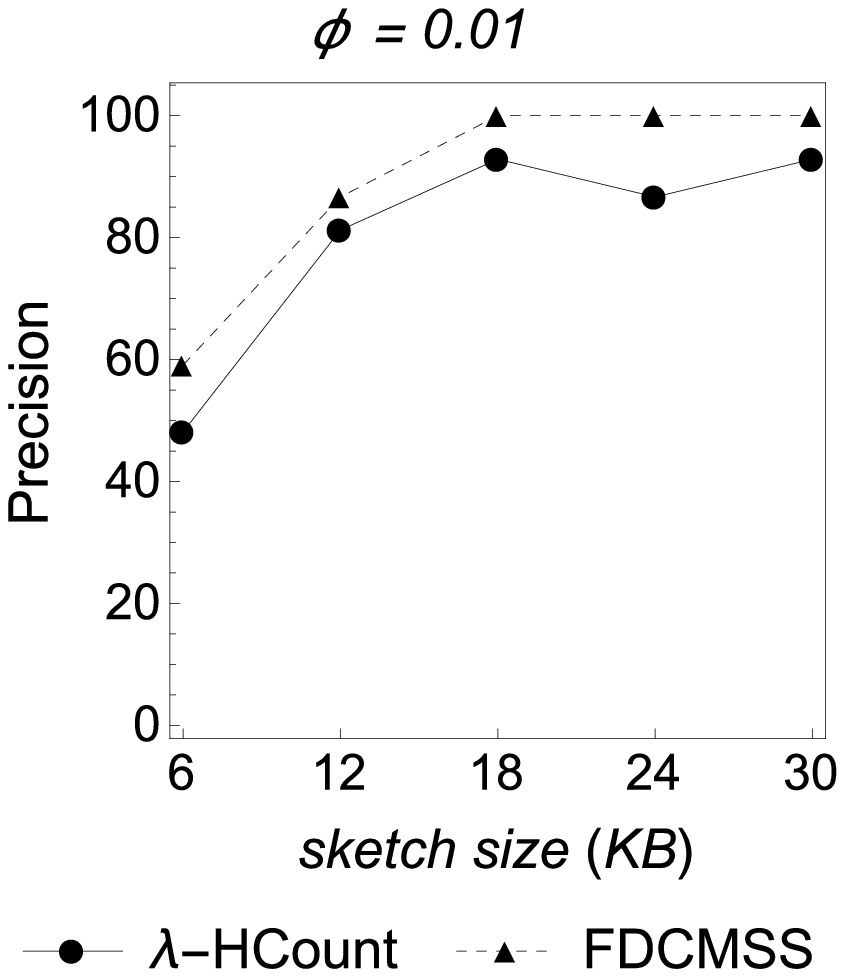}
           \label{sw-retail-prec}
        }

\end{tabular}
 
\caption{Retail: recall and precision (mean and confidence interval)} 
\label{retail-precision-recall}
\end{figure}

\begin{figure}[hbt]
\centering
\begin{tabular}{cccc}
             
      \subfloat[varying $\phi$]{
           \includegraphics[scale=0.36]{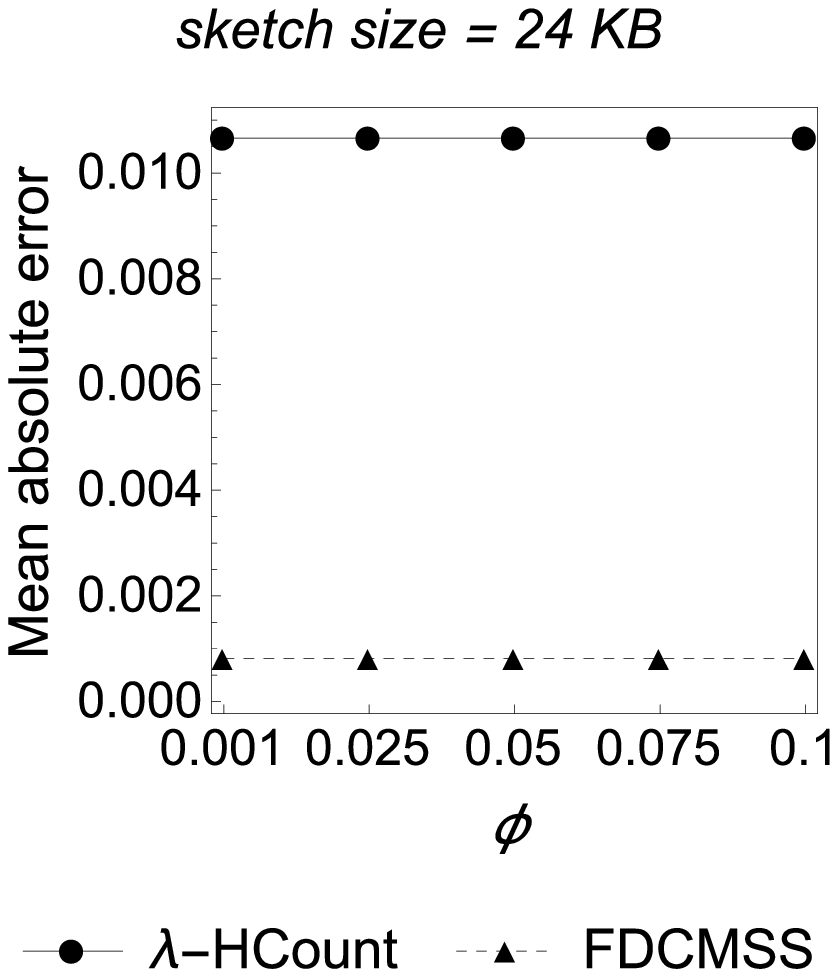}
           \label{phi-retail-mean-absolute-error}
        } &

      \subfloat[varying the sketch size]{
           \includegraphics[scale=0.36]{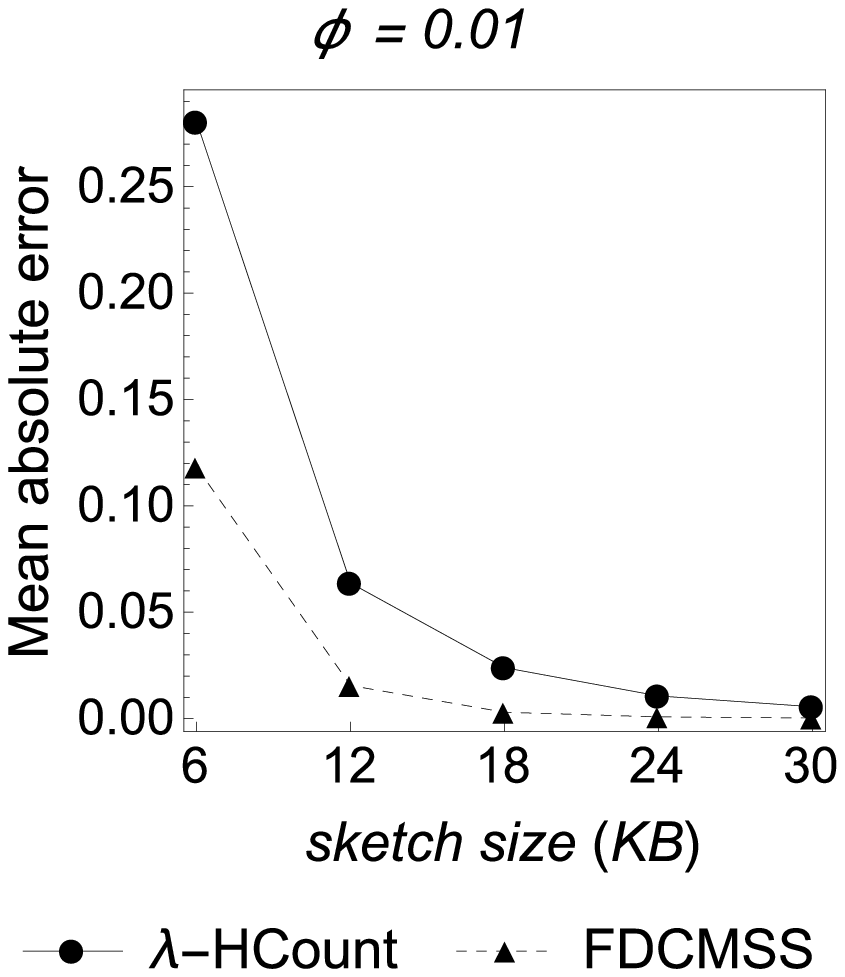}
           \label{sw-retail-mean-absolute-error}
        } &
        
     \subfloat[varying $\phi$]{
           \includegraphics[scale=0.36]{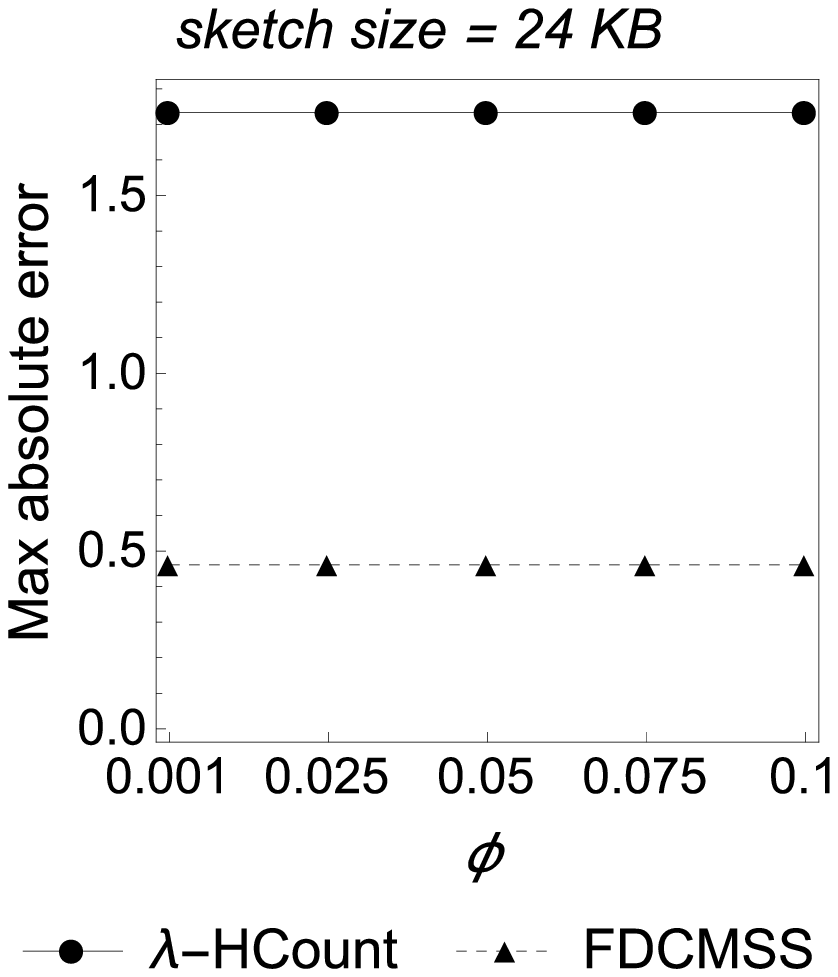}
           \label{phi-retail-max-absolute-error}
        } &

      \subfloat[varying the sketch size]{
           \includegraphics[scale=0.36]{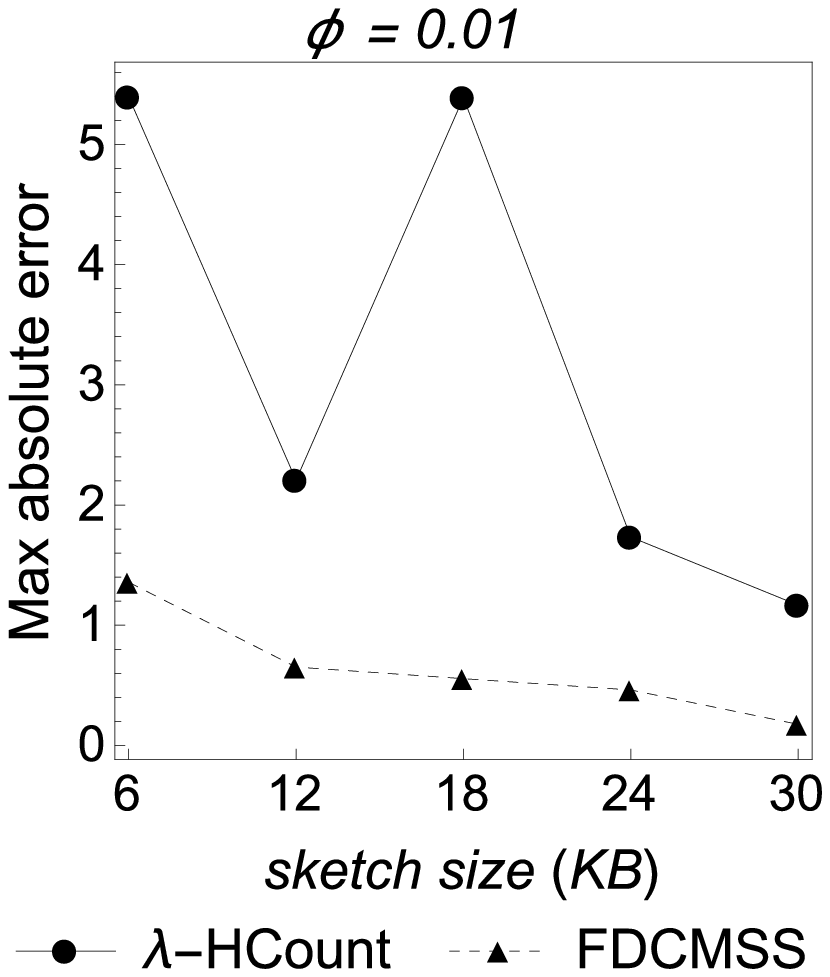}
           \label{sw-retail-max-absolute-error}
        }

\end{tabular}
 
\caption{Retail: Mean and max absolute error (mean and confidence interval)} 
\label{retail-mean-max-absolute-error}
\end{figure}

\begin{figure}[hbt]
\centering
\begin{tabular}{cccc}
             
      \subfloat[varying $\phi$]{
           \includegraphics[scale=0.36]{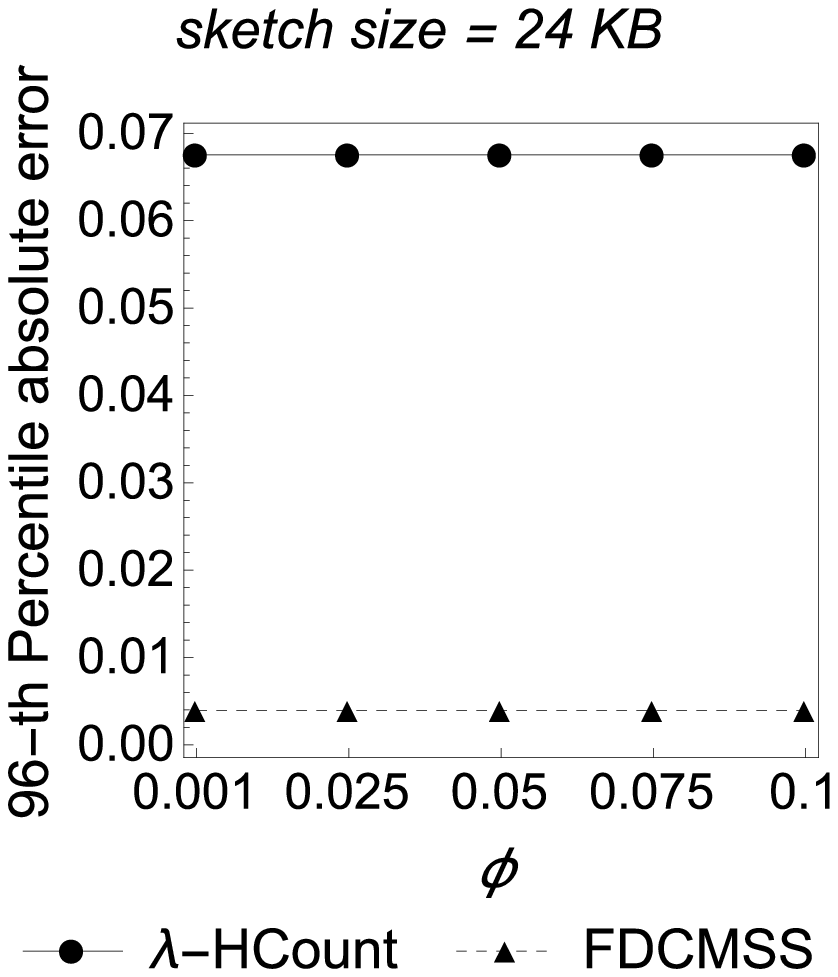}
           \label{phi-retail-percentile-absolute-error}
        } &

      \subfloat[varying the sketch size]{
           \includegraphics[scale=0.36]{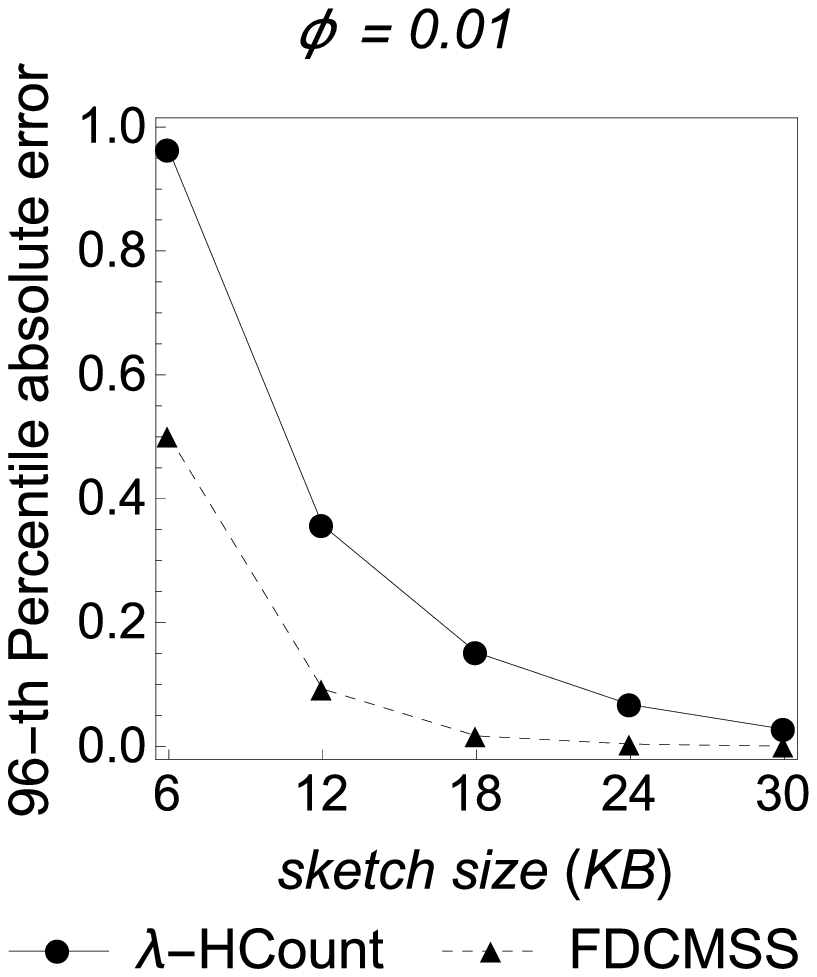}
           \label{sw-retail-percentile-absolute-error}
        } &
        
     \subfloat[varying $\phi$]{
           \includegraphics[scale=0.36]{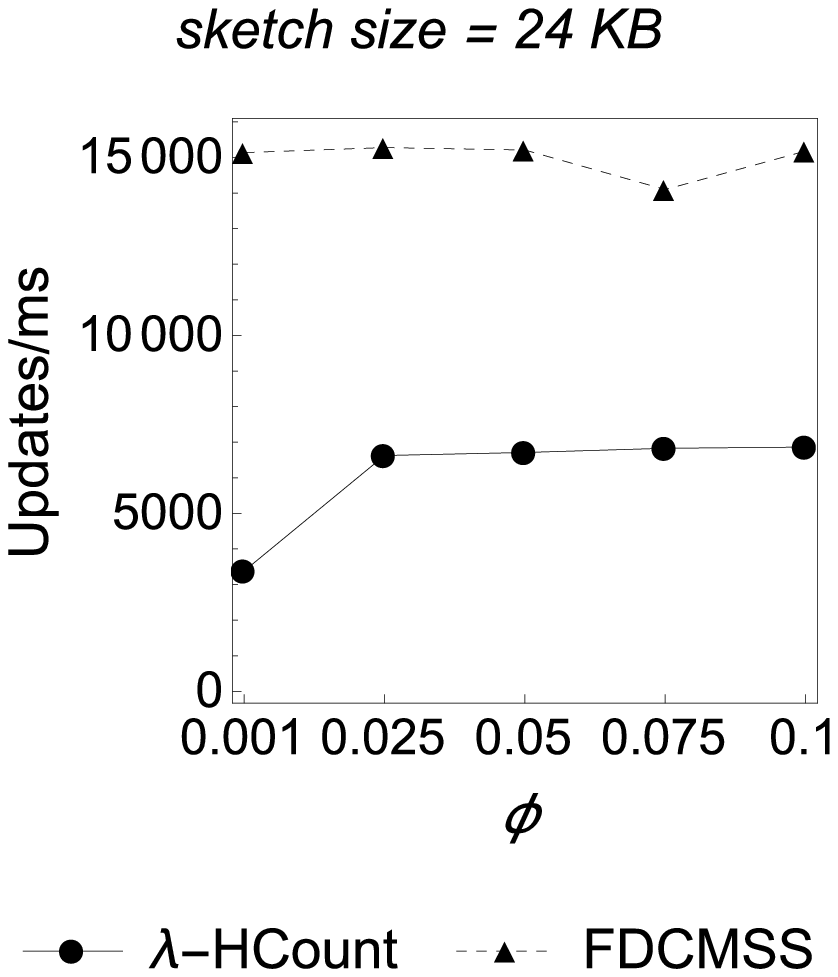}
           \label{phi-retail-updates}
        } &

      \subfloat[varying the sketch size]{
           \includegraphics[scale=0.36]{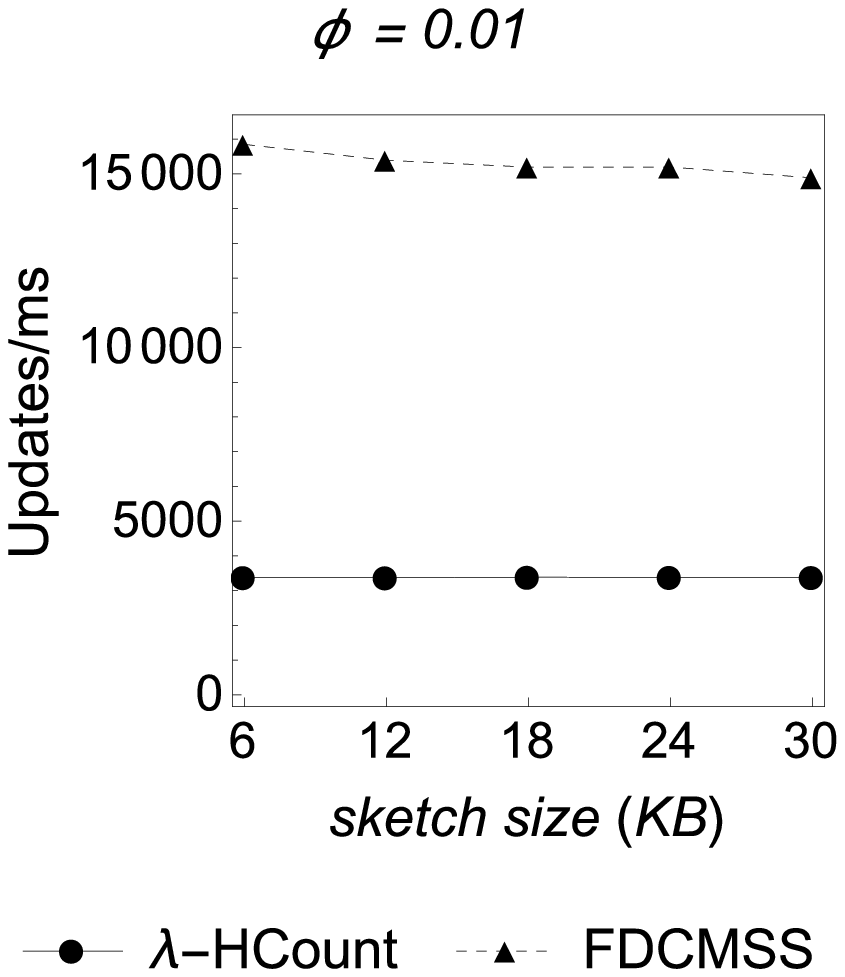}
           \label{sw-retail-updates}
        }

\end{tabular}
 
\caption{Retail: Percentile absolute error and updates/ms (mean and confidence interval)} 
\label{retail-percentile-absolute-error-updates}
\end{figure}

\begin{figure}[hbt]
\centering
\begin{tabular}{cccc}
             
      \subfloat[varying $\phi$]{
           \includegraphics[scale=0.36]{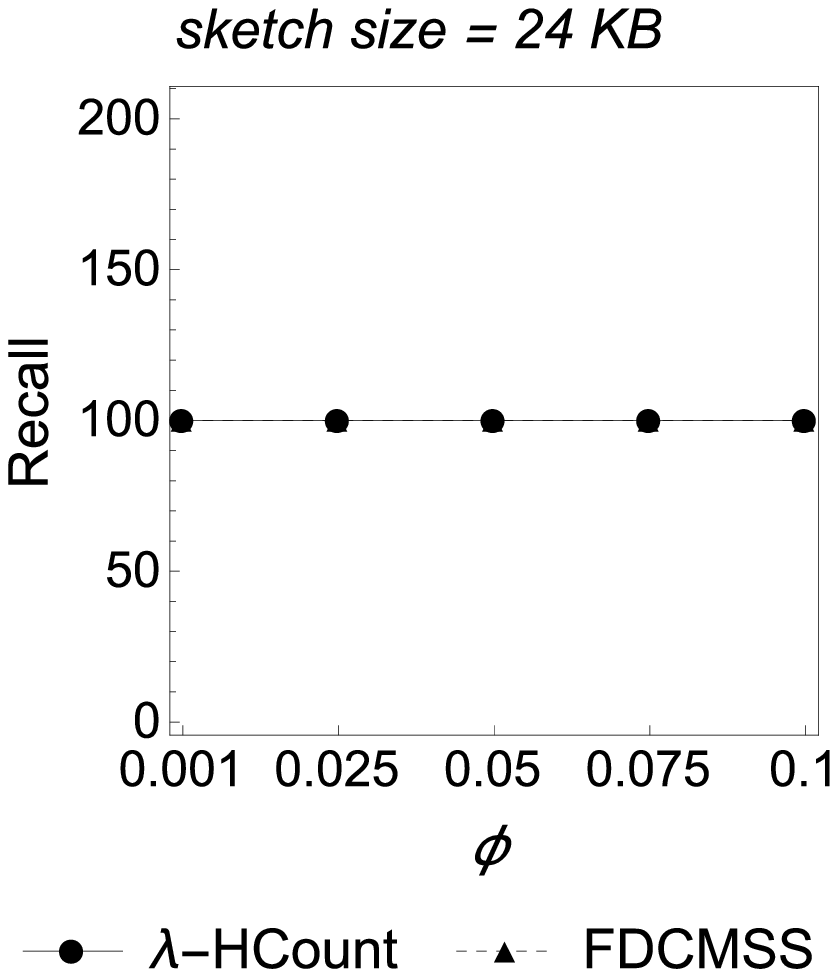}
           \label{phi-q148-recall}
        } &

      \subfloat[varying the sketch size]{
           \includegraphics[scale=0.36]{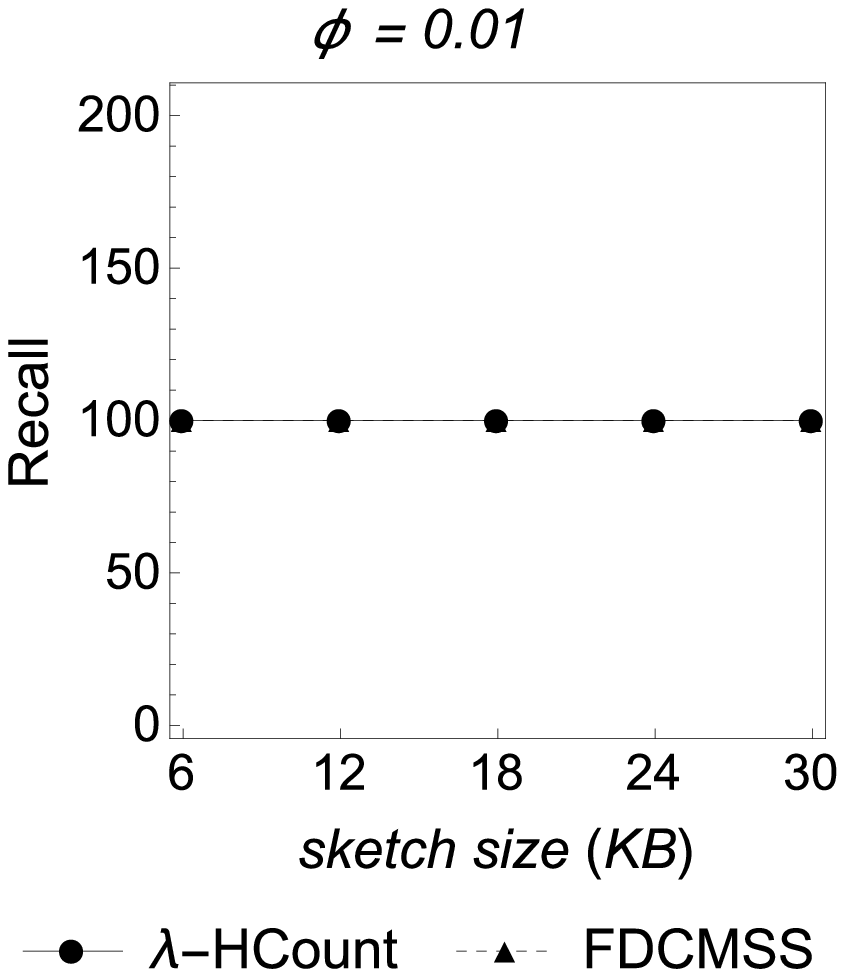}
           \label{sw-q148-recall}
        } &
        
     \subfloat[varying $\phi$]{
           \includegraphics[scale=0.36]{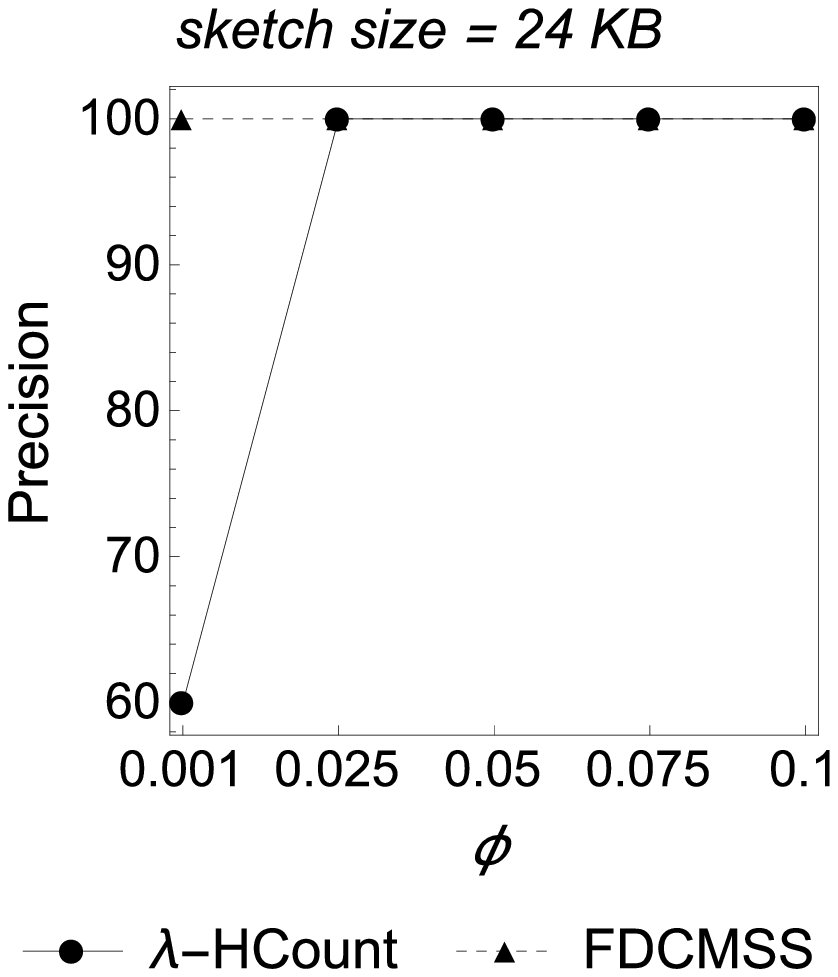}
           \label{phi-q148-prec}
        } &

      \subfloat[varying the sketch size]{
           \includegraphics[scale=0.36]{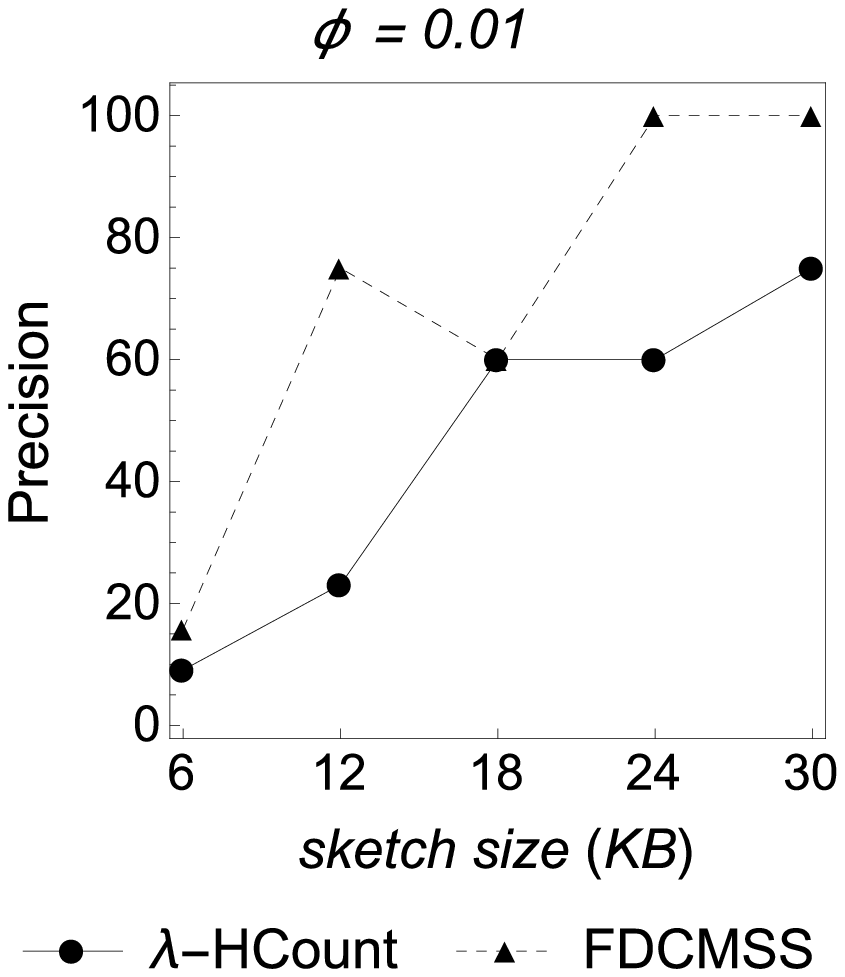}
           \label{sw-q148-prec}
        }

\end{tabular}
 
\caption{Q148: recall and precision (mean and confidence interval)} 
\label{q148-precision-recall}
\end{figure}

\begin{figure}[hbt]
\centering
\begin{tabular}{cccc}
             
      \subfloat[varying $\phi$]{
           \includegraphics[scale=0.36]{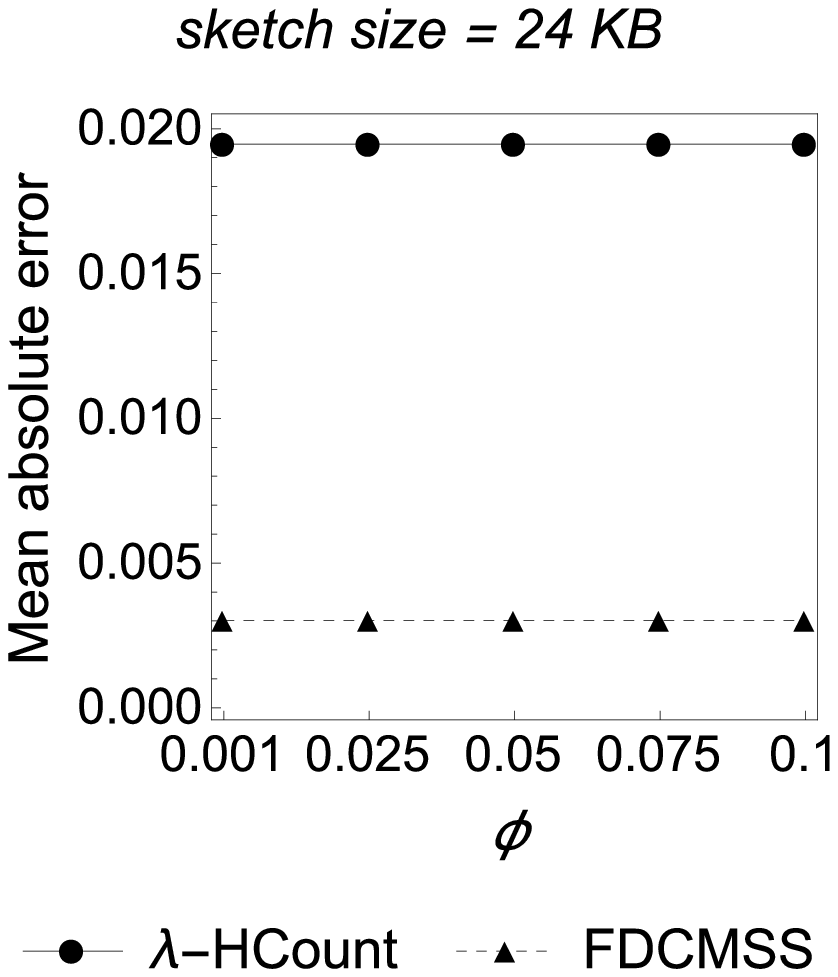}
           \label{phi-q148-mean-absolute-error}
        } &

      \subfloat[varying the sketch size]{
           \includegraphics[scale=0.36]{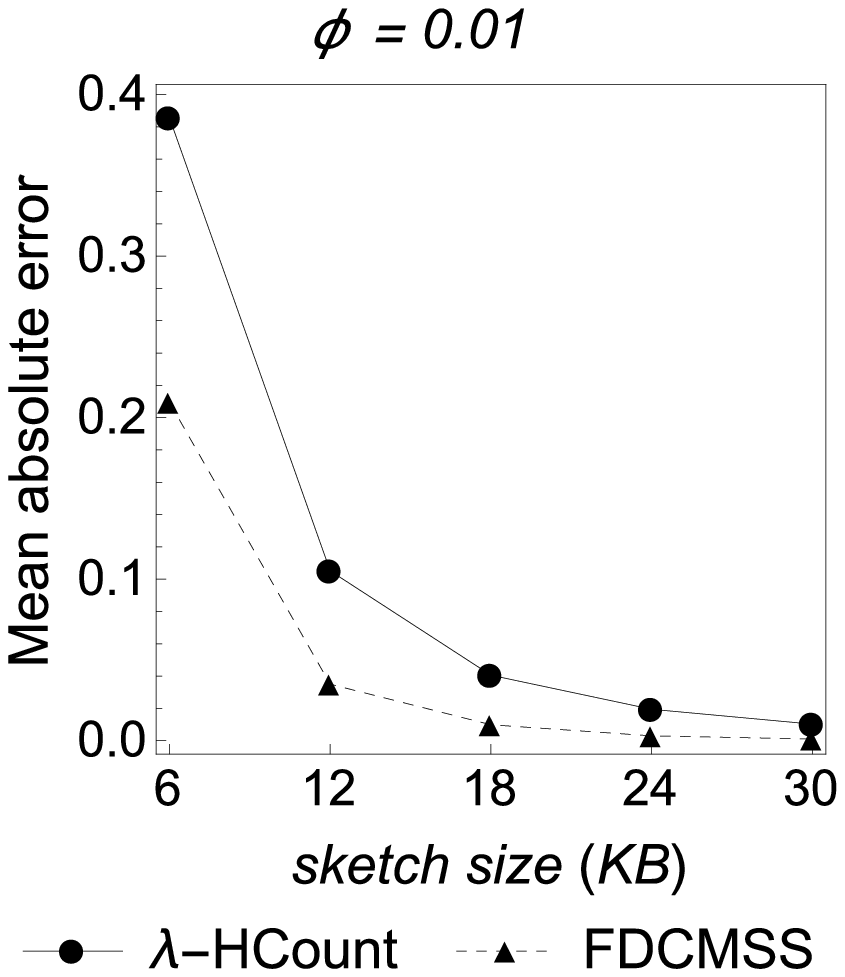}
           \label{sw-q148-mean-absolute-error}
        } &
        
     \subfloat[varying $\phi$]{
           \includegraphics[scale=0.36]{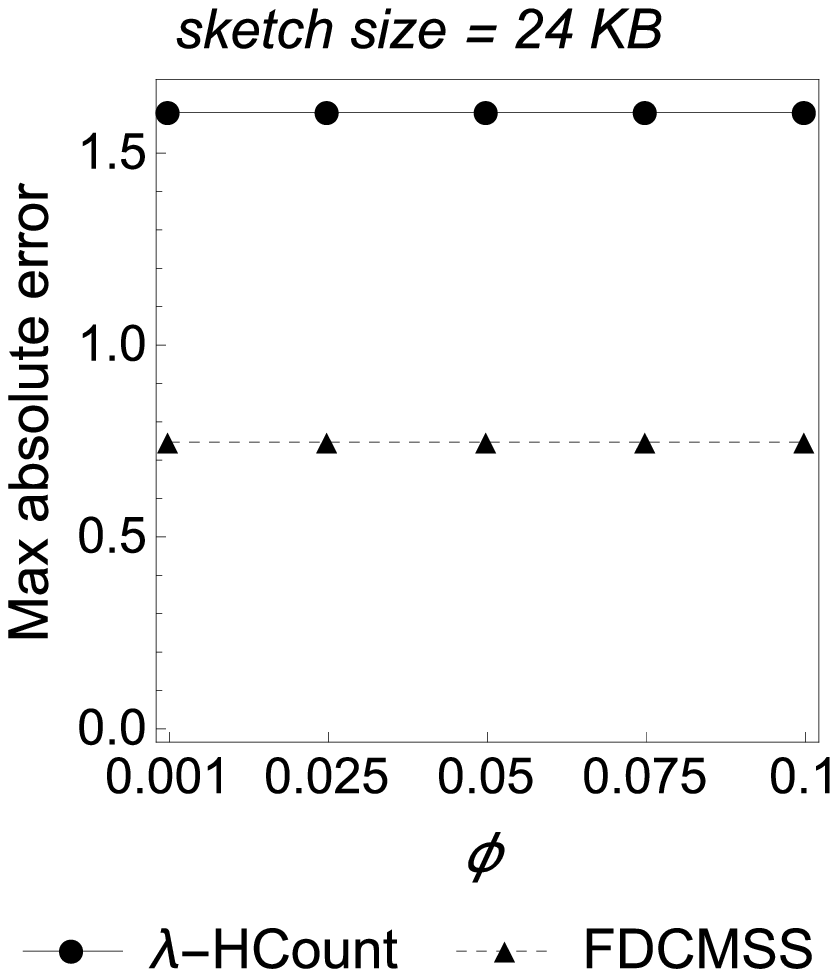}
           \label{phi-q148-max-absolute-error}
        } &

      \subfloat[varying the sketch size]{
           \includegraphics[scale=0.36]{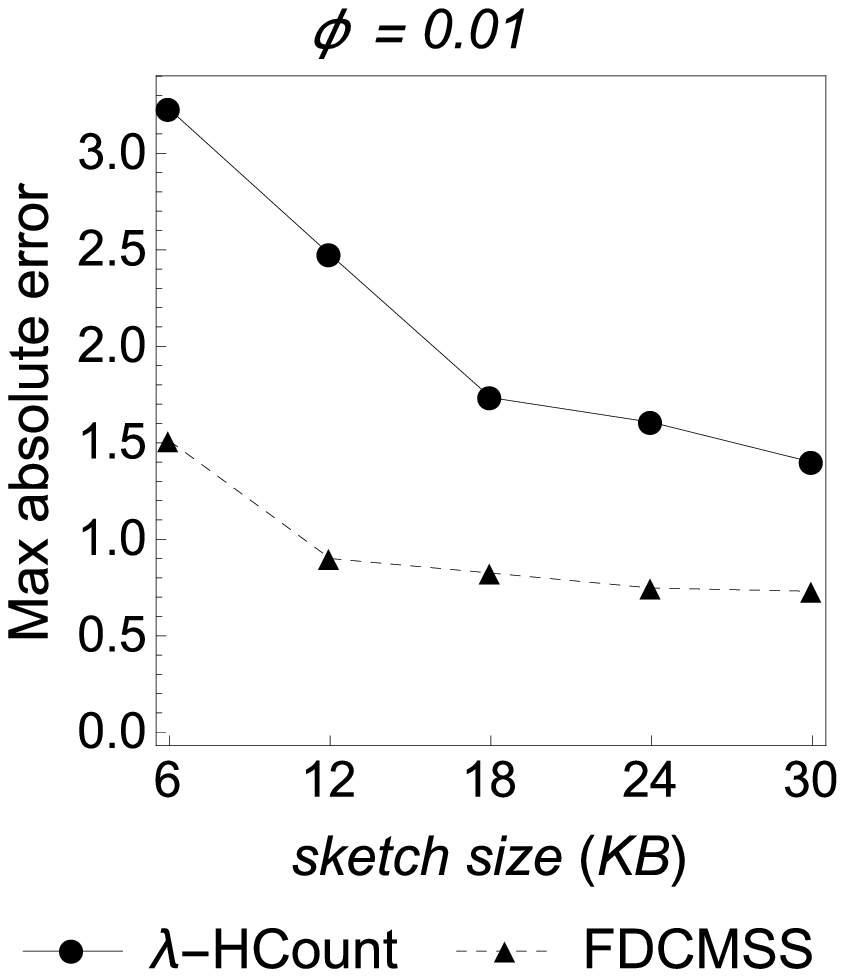}
           \label{sw-q148-max-absolute-error}
        }

\end{tabular}
 
\caption{Q148: Mean and max absolute error (mean and confidence interval)} 
\label{q148-mean-max-absolute-error}
\end{figure}

\begin{figure}[hbt]
\centering
\begin{tabular}{cccc}
             
      \subfloat[varying $\phi$]{
           \includegraphics[scale=0.36]{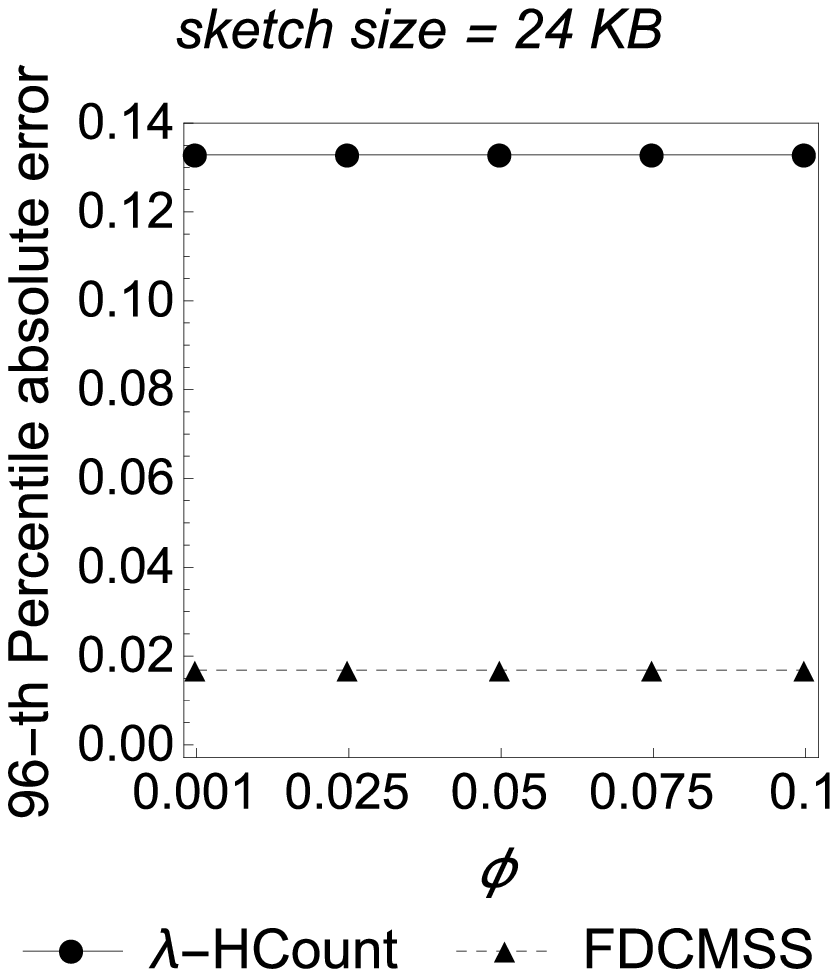}
           \label{phi-q148-percentile-absolute-error}
        } &

      \subfloat[varying the sketch size]{
           \includegraphics[scale=0.36]{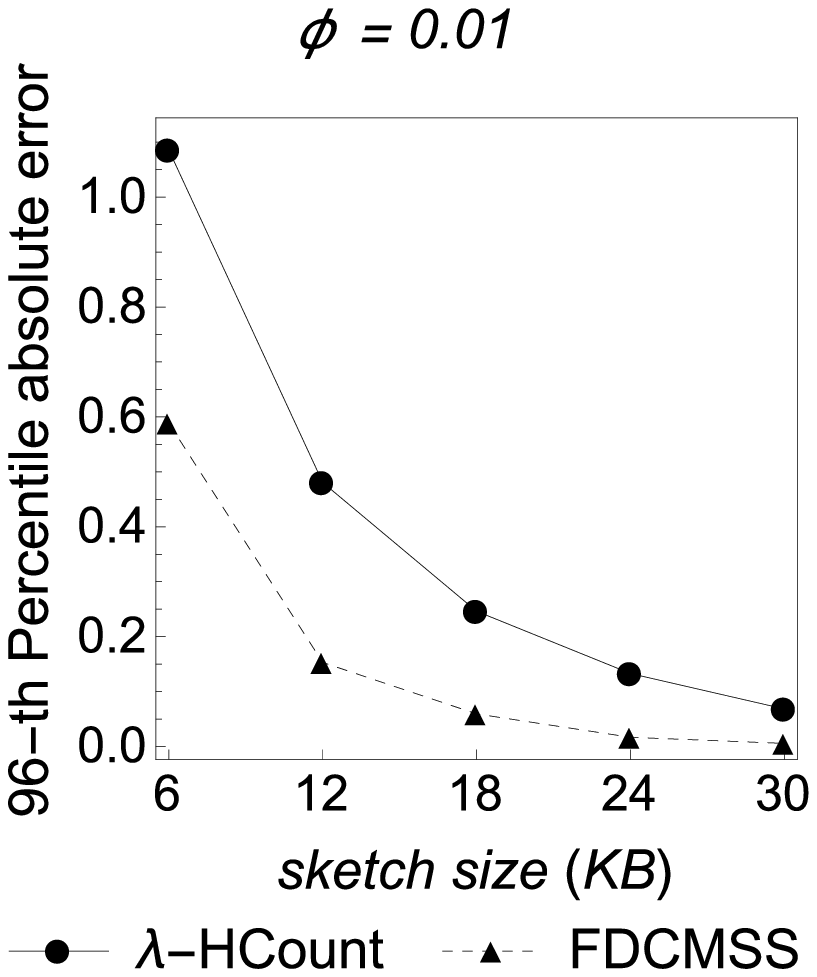}
           \label{sw-q148-percentile-absolute-error}
        } &
        
     \subfloat[varying $\phi$]{
           \includegraphics[scale=0.36]{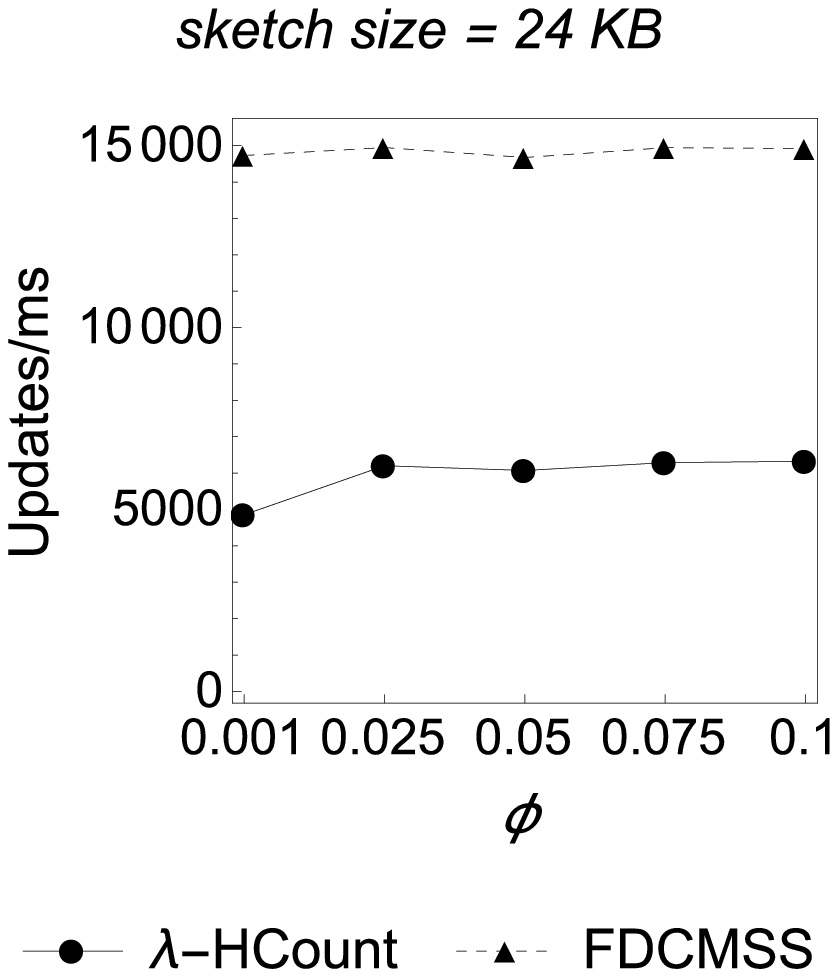}
           \label{phi-q148-updates}
        } &

      \subfloat[varying the sketch size]{
           \includegraphics[scale=0.36]{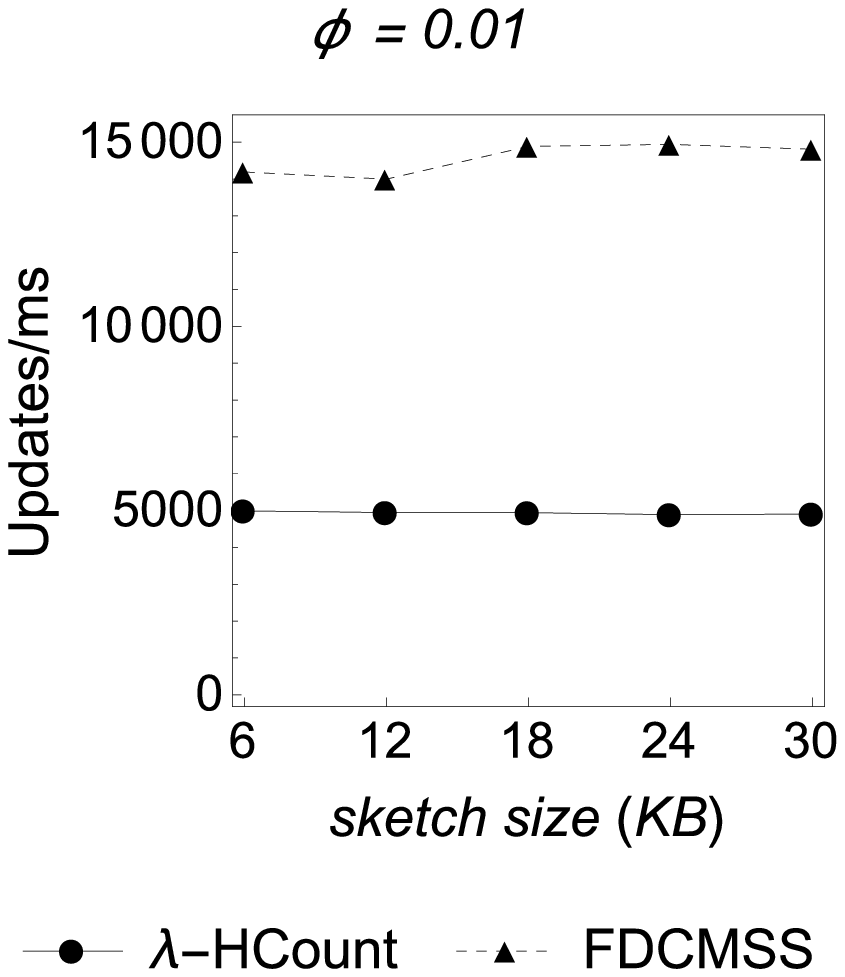}
           \label{sw-q148-updates}
        }

\end{tabular}
 
\caption{Q148: Percentile absolute error and updates/ms (mean and confidence interval)} 
\label{q148-percentile-absolute-error-updates}
\end{figure}

\begin{figure}[hbt]
\centering
\begin{tabular}{cccc}
             
      \subfloat[varying $\phi$]{
           \includegraphics[scale=0.36]{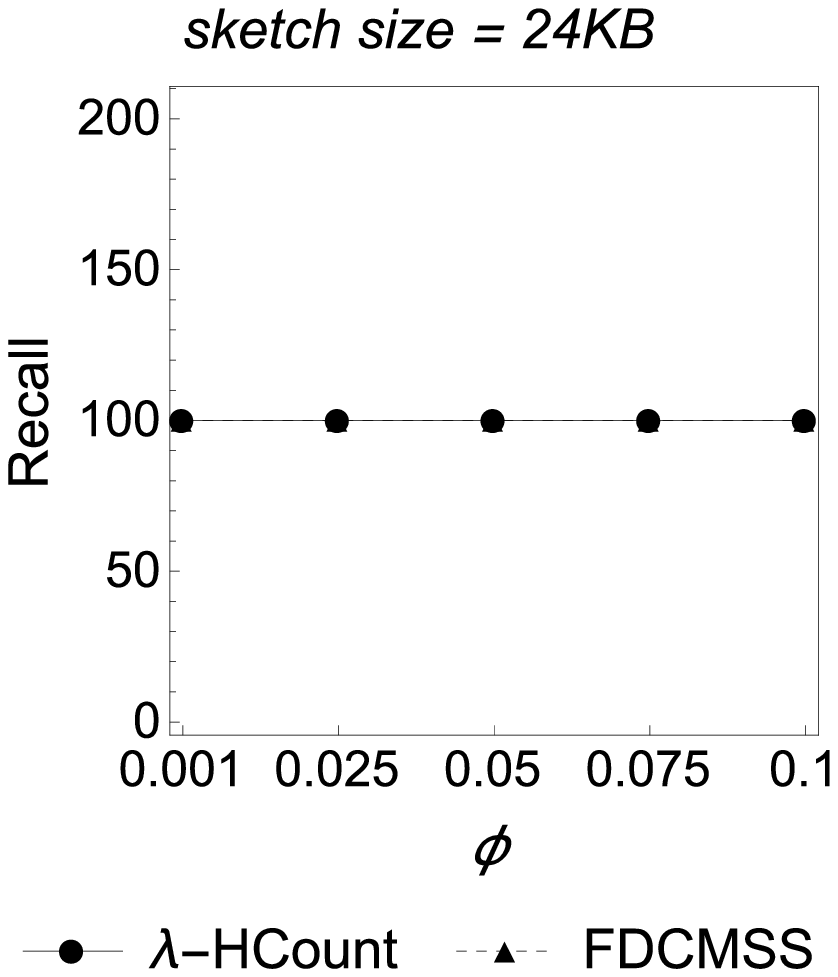}
           \label{phi-webdocs-recall}
        } &

      \subfloat[varying the sketch size]{
           \includegraphics[scale=0.36]{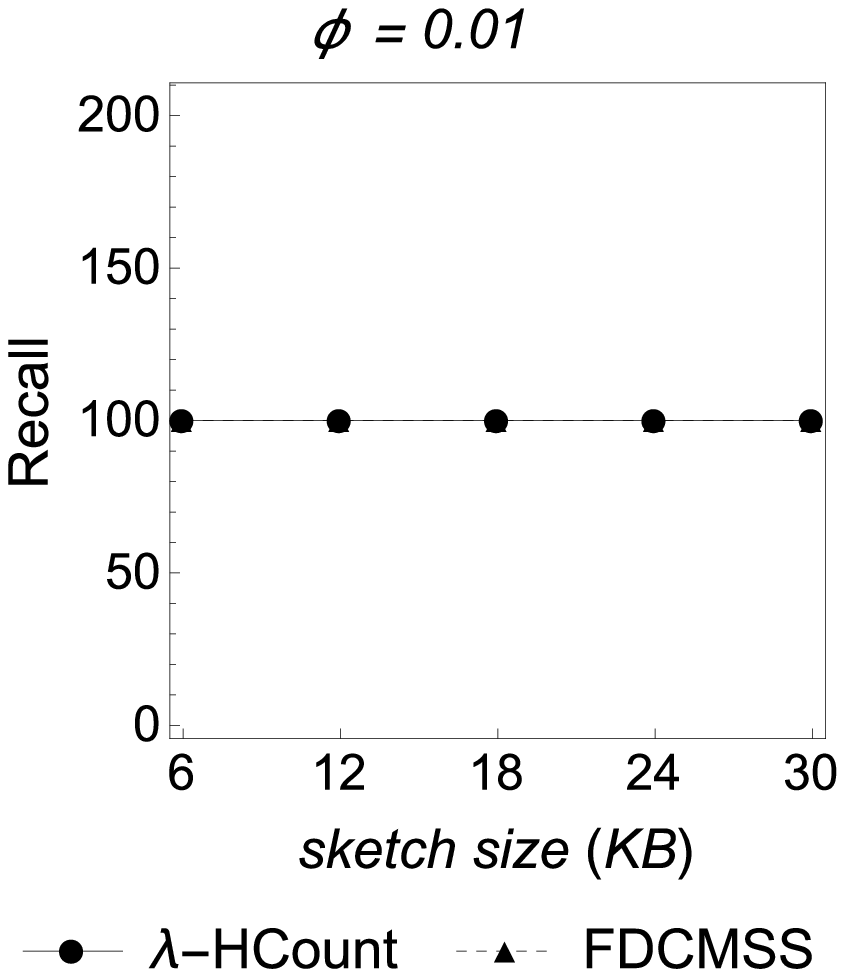}
           \label{sw-webdocs-recall}
        } &
        
     \subfloat[varying $\phi$]{
           \includegraphics[scale=0.36]{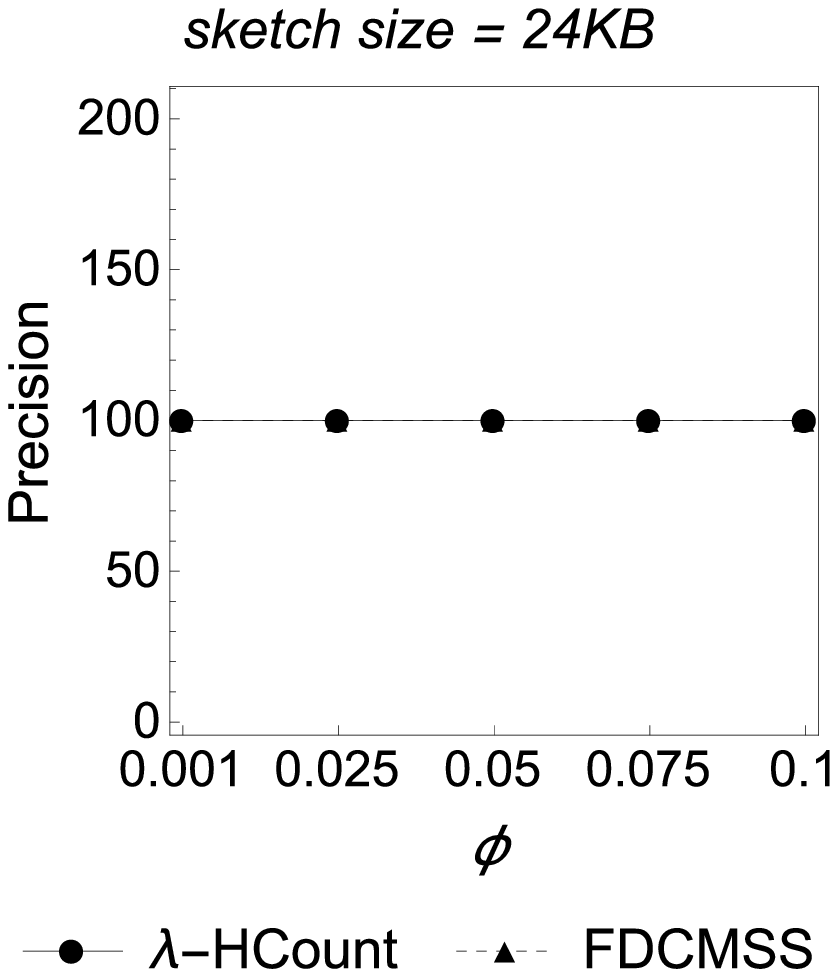}
           \label{phi-webdocs-prec}
        } &

      \subfloat[varying the sketch size]{
           \includegraphics[scale=0.36]{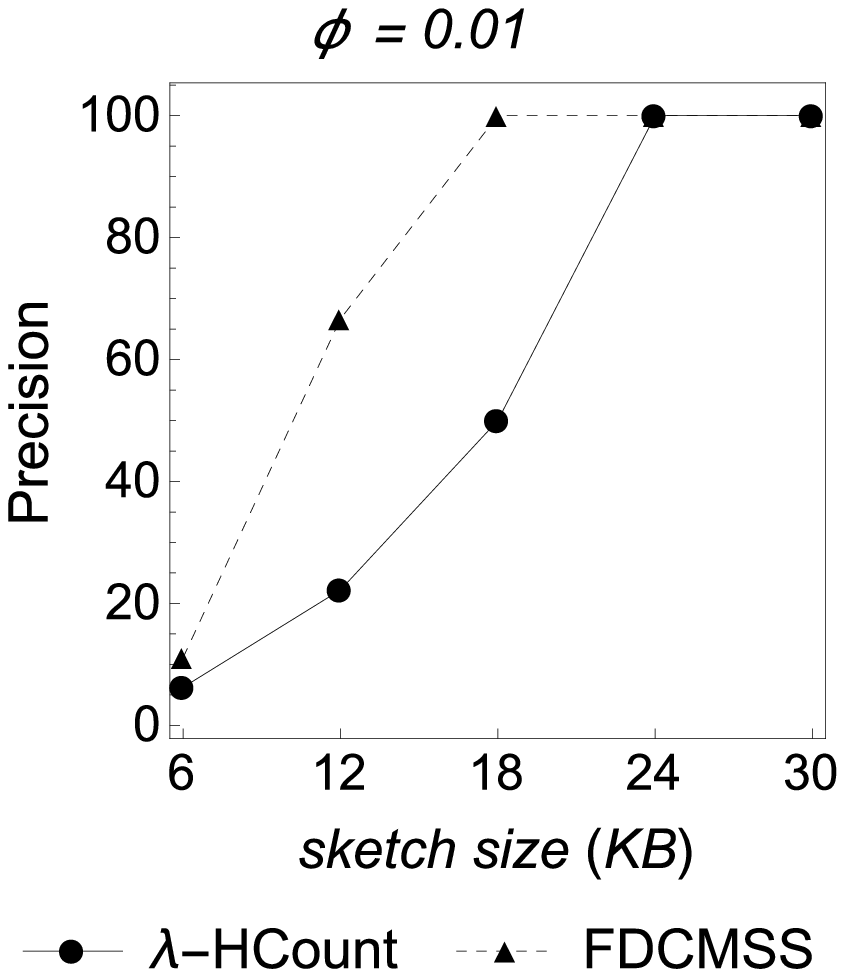}
           \label{sw-webdocs-prec}
        }

\end{tabular}
 
\caption{Webdocs: recall and precision (mean and confidence interval)} 
\label{webdocs-precision-recall}
\end{figure}

\begin{figure}[hbt]
\centering
\begin{tabular}{cccc}
             
      \subfloat[varying $\phi$]{
           \includegraphics[scale=0.36]{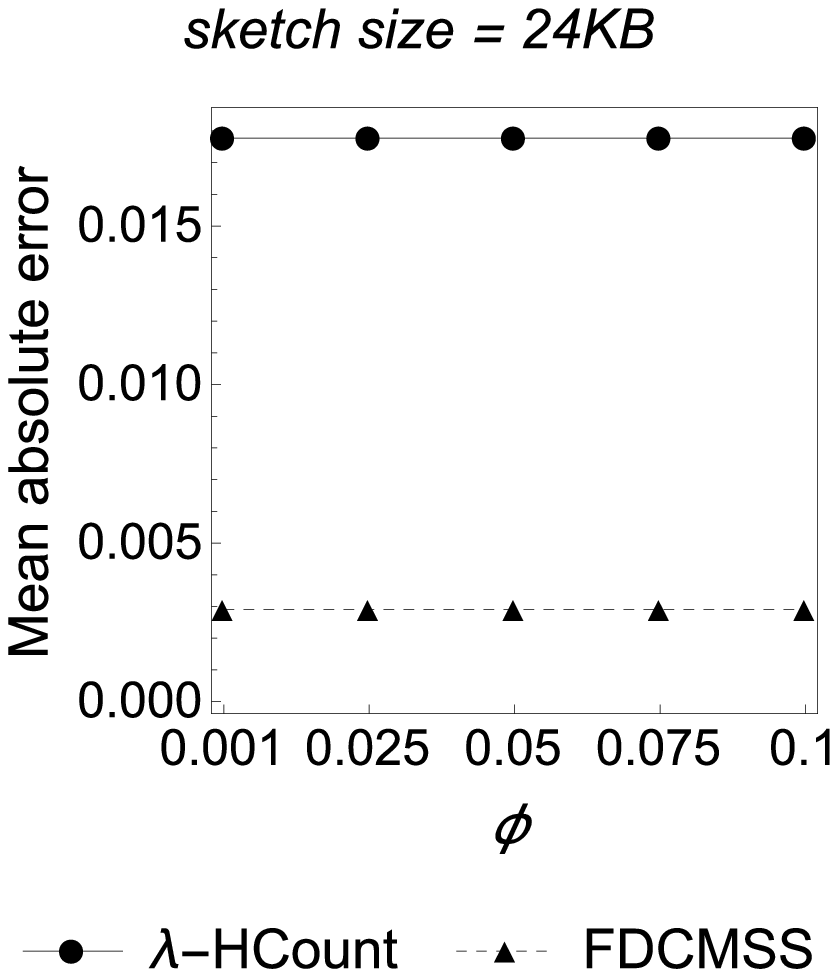}
           \label{phi-webdocs-mean-absolute-error}
        } &

      \subfloat[varying the sketch size]{
           \includegraphics[scale=0.36]{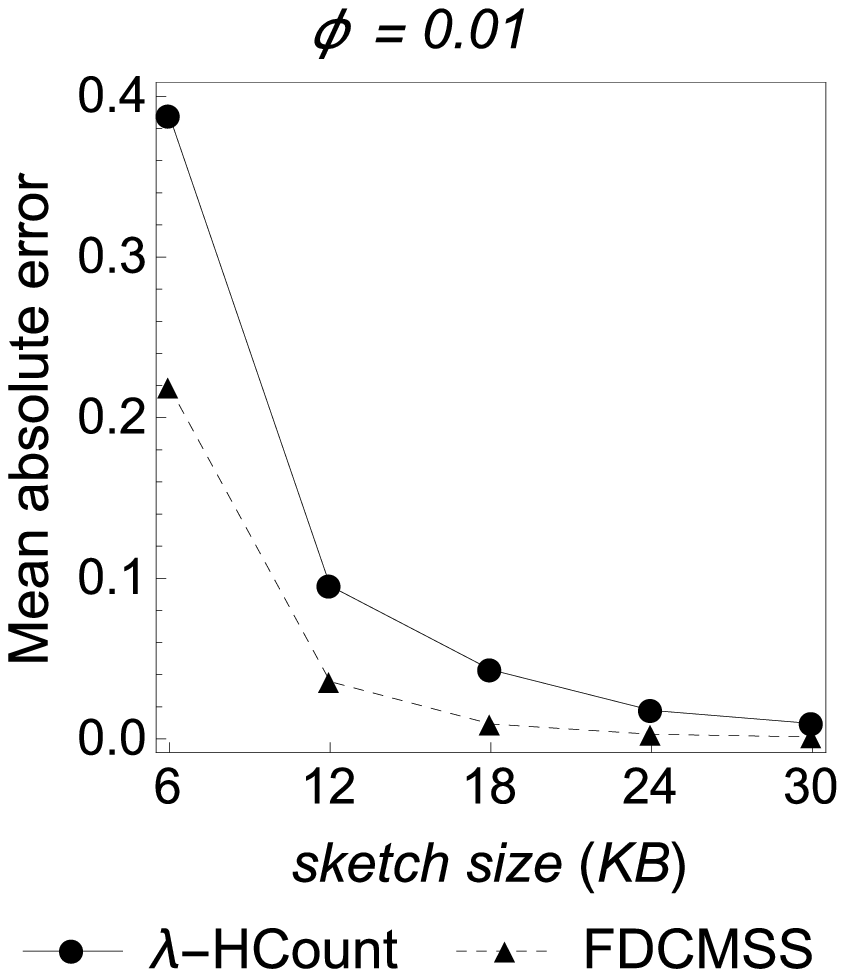}
           \label{sw-webdocs-mean-absolute-error}
        } &
        
     \subfloat[varying $\phi$]{
           \includegraphics[scale=0.36]{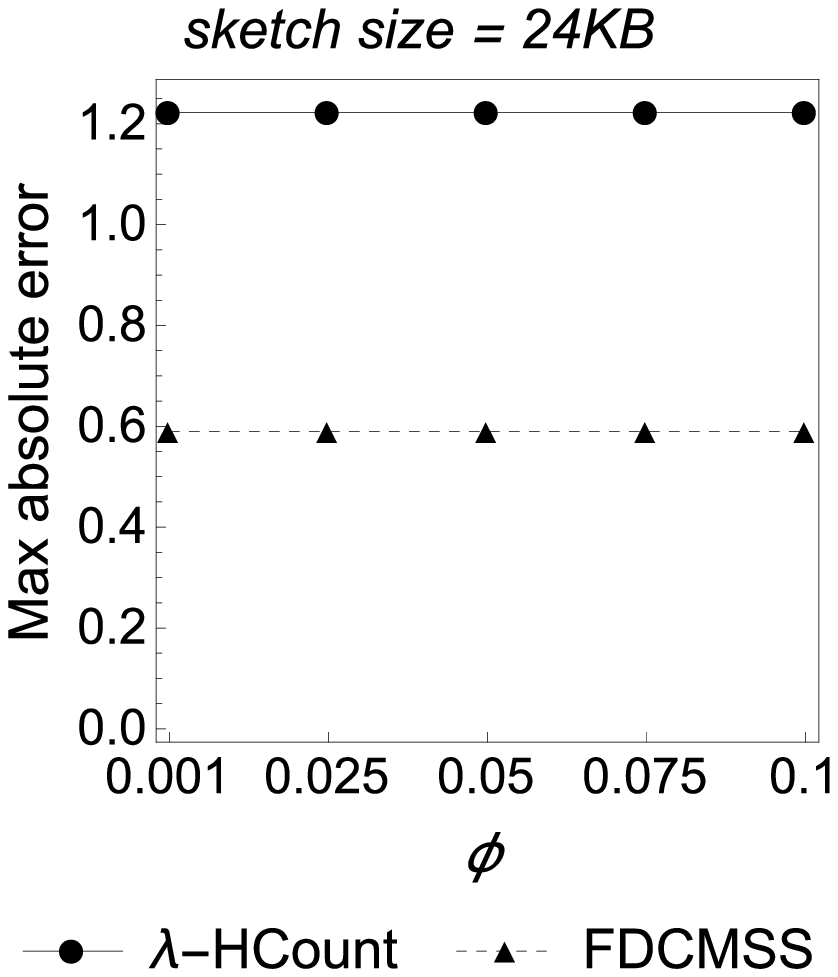}
           \label{phi-webdocs-max-absolute-error}
        } &

      \subfloat[varying the sketch size]{
           \includegraphics[scale=0.36]{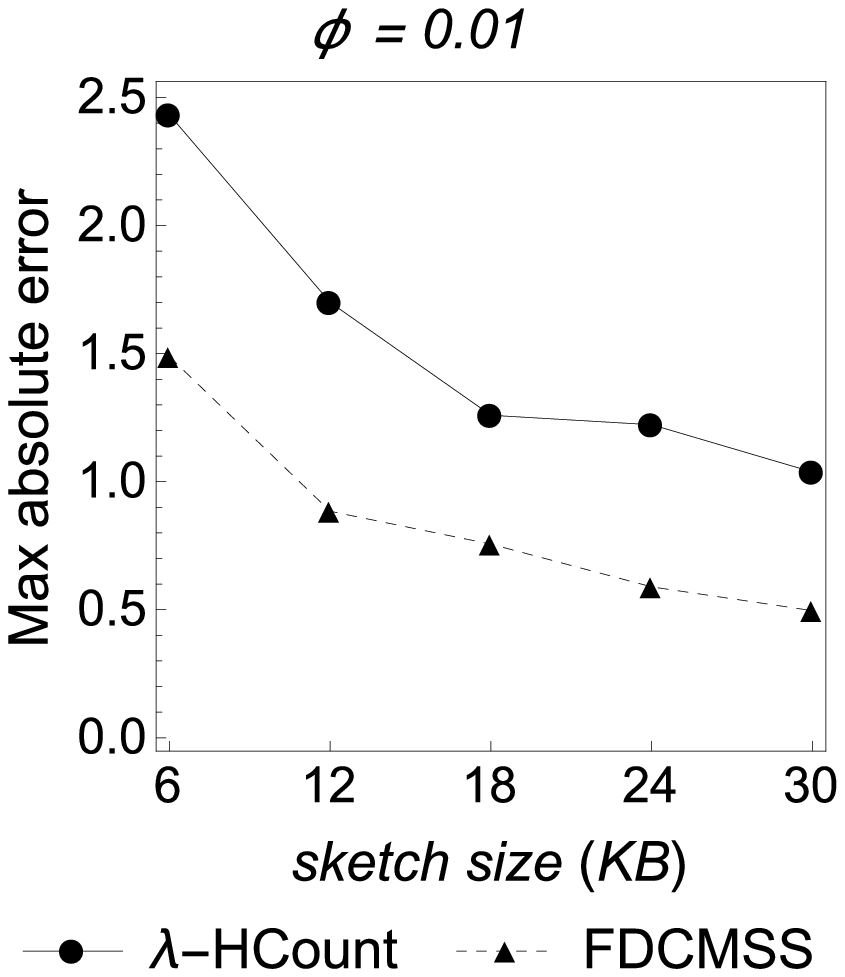}
           \label{sw-webdocs-max-absolute-error}
        }

\end{tabular}
 
\caption{Webdocs: Mean and max absolute error (mean and confidence interval)} 
\label{webdocs-mean-max-absolute-error}
\end{figure}

\begin{figure}[hbt]
\centering
\begin{tabular}{cccc}
             
      \subfloat[varying $\phi$]{
           \includegraphics[scale=0.36]{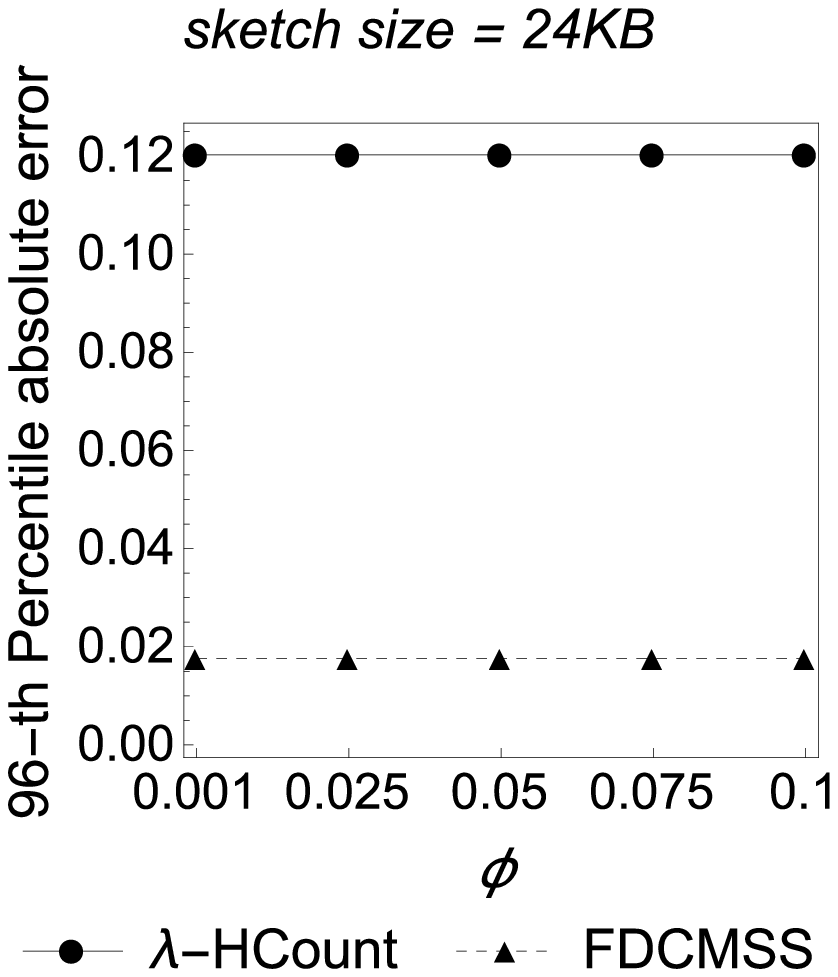}
           \label{phi-webdocs-percentile-absolute-error}
        } &

      \subfloat[varying the sketch size]{
           \includegraphics[scale=0.36]{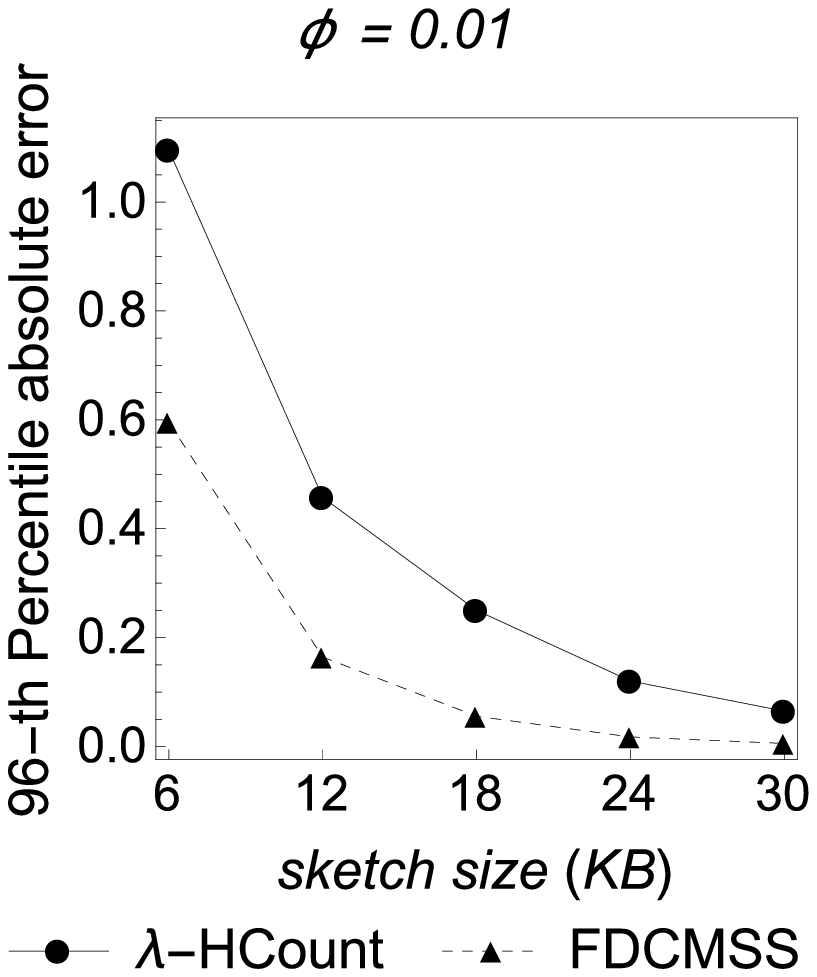}
           \label{sw-webdocs-percentile-absolute-error}
        } &
        
     \subfloat[varying $\phi$]{
           \includegraphics[scale=0.36]{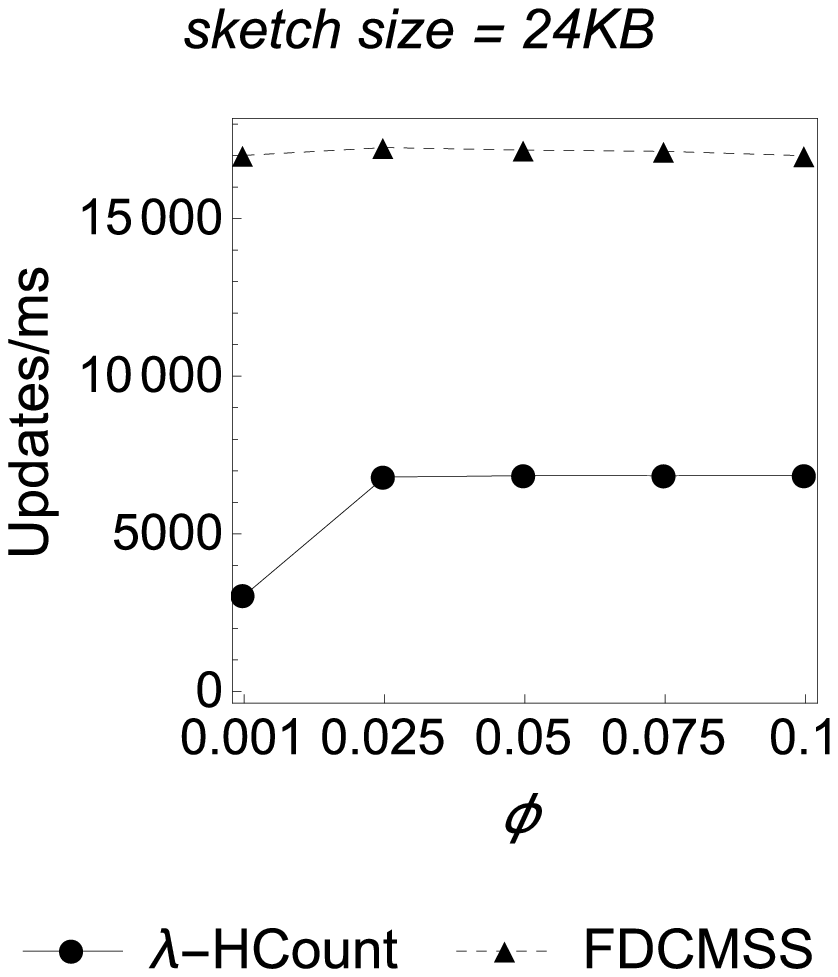}
           \label{phi-webdocs-updates}
        } &

      \subfloat[varying the sketch size]{
           \includegraphics[scale=0.36]{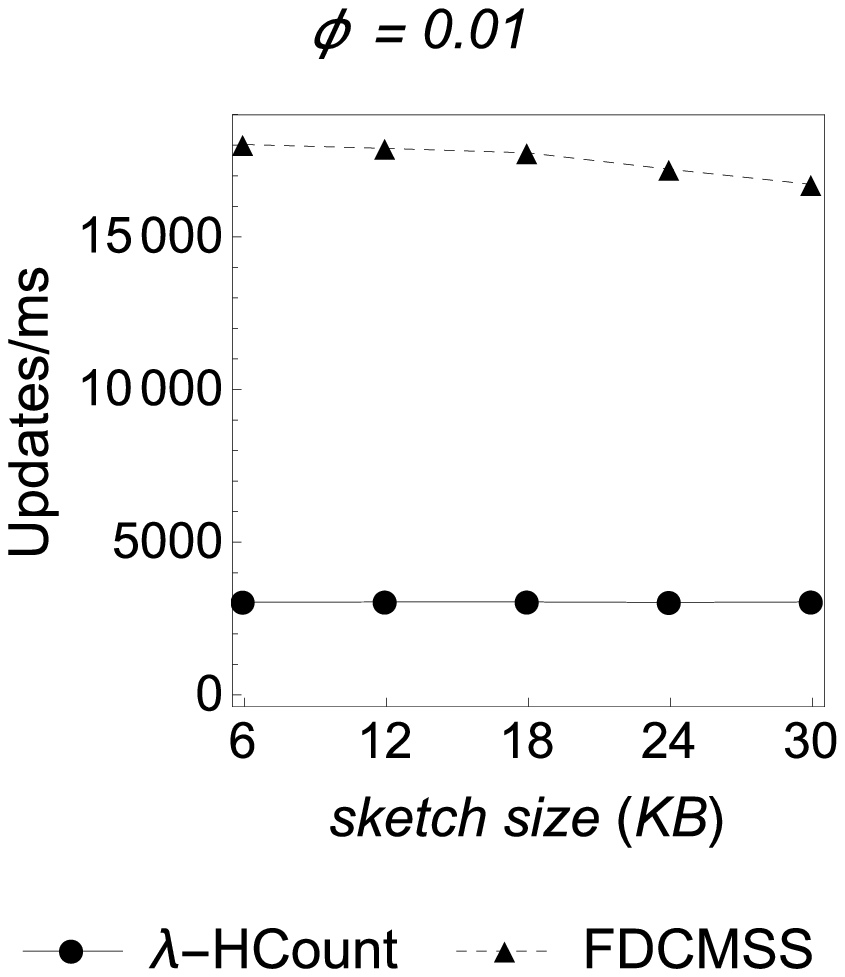}
           \label{sw-webdocs-updates}
        }

\end{tabular}
 
\caption{Webdocs: Percentile absolute error and updates/ms (mean and confidence interval)} 
\label{webdocs-percentile-absolute-error-updates}
\end{figure}

\section{Conclusions}
\label{conclusions}
We have presented the design and implementation of FDCMSS, a new algorithm for mining frequent items in the time fading model. Our algorithm is sketch based, and cleverly combines key ideas borrowed from forward decay, the Count-Min and the Space Saving algorithms. We have formally proved the correctness of our algorithm and shown, through extensive experimental results, that FDCMSS outperforms $\lambda$-HCount, a recently developed algorithm, with regard to speed, space used, precision attained and error committed on both synthetic and real datasets. Future work include parallelizing the algorithm on both shared-memory and message-passing architectures.  

\section*{Acknowledgment}
We are indebted to the unknown referees for enlightening observations, which helped us to improve the paper.

\clearpage

\bibliographystyle{elsarticle-num}
\bibliography{bibliography}

\end{document}